\author{Mireille Boutin}
\address{Department of Mathematics, Purdue
  University, 150 N.~University St., West Lafayette, IN, USA 47907}
\email{mboutin@purdue.edu}
\author{Gregor Kemper} \address{Technische Universit\"at M\"unchen,
  Department of Mathematics, Boltzmannstr. 3, 85748 Garching, Germany}
\email{kemper@tum.de}
\title[Global Positioning: Uniqueness and Solution]{Global Positioning: the Uniqueness Question and a New Solution Method}
\date{October, 2023}
\subjclass[2010]{51K99, 13P10, 13P25}
\keywords{Global positioning problem, GPS, multilateration, pseudoranges, TOA, TDOA, quadrics of revolution}
\begin{document}


\begin{abstract}
We provide a new algebraic solution procedure for the global positioning problem in $n$ dimensions using $m$ satellites. We also give a geometric characterization of the situations in which the problem does not have a unique solution. This characterization shows that such cases can happen in any dimension and with any number of satellites, leading to counterexamples to some open conjectures. We fill a gap in the literature by giving a proof for the long-held belief that when $m \ge n+2$, the solution is unique for almost all user positions. Even better, when $m \ge 2n+2$, almost all satellite configurations will guarantee a unique solution for {\em all} user positions. 

Some of our results are obtained using tools from algebraic geometry. 
\end{abstract}

\maketitle

\section*{Introduction} \label{sIntro}%


Positioning systems play an important role in many applications. A commonly-used positioning system is the GPS (originally Navstar GPS), a satellite-based system owned by the United States, which allows users to determine their location. It essentially works as follows. The satellites are equipped with a precise clock and know their position at all times. They broadcast this information (position and time) continuously using radio waves. A user equipped with a radio receiver is assumed to receive the data from at least four satellites. Using the information received, the user computes their own position in 3D. The clocks of the satellites are assumed to be precisely synchronized together, but not with the clock of the receiver.  So the difference between the time-of-arrival on the local clock of the receiver  
and the time of emission as recorded by the clock on a satellite is equal to the time taken by the signal to travel from the satellite to the receiver (time-of-flight) plus an unknown offset which is the same for all satellites. Since radio waves travel at a known constant speed (the speed of light), the time-of-flight determines the distance between the satellite position and the user position (the range). Therefore the receiver obtains the {\em pseudo-range} values, which are the values of the ranges shifted by the unknown offset.

In general, the global positioning problem consists in determining the location of a receiver given the location of several sources and the distance to each source, up to an unknown offset. This is equivalent to determining the position of the source of a signal emitted at an unknown time and received by several receivers at known positions equipped with a synchronized clock. For example, one may be interested in determining the location of a gunshot heard by several microphones at known locations. There are many other scenarios in which this problem arises. For example, the bio-acoustic community is interested in using microphones to track the path of animals in nature in a non-intrusive way [\citenumber{rhinehart2020acoustic}]. In particular, recent work used this technique to identify patterns in the flight of some male birds [\citenumber{dutilleux2023chasing}]. The same principles are also used in a newly proposed muon-based navigation system for indoor and underground environments [\citenumber{tanaka2023first}].

\par{\bf Summary of the Results.}
In this paper\footnote{Preliminary work on this problem was included in an unpublished manuscript on a different topic [\citenumber{boutin2022multilateration}].}, we derive a simple solution method for the global positioning problem in $n$ dimensions with $m$ satellites (see \cref{pGPS}). Our solution is algebraic (direct) and assumes that the data received is exact. Depending on the arrangement of the satellites and the values of the times of arrivals of the signal, it either consists in solving a linear system of equations or in solving a quadratic equation for a 1D quantity (the offset) and then replacing it into a linear expression in order to obtain the user location. Assuming that the satellite locations are not coplanar, our solution method yields either one or two solutions (see \cref{tLateration}).
We describe the various possible cases in geometric terms in \cref{tUnique}.

In particular, this theorem gives a geometric criterion, in terms of the satellite positions and the user position, for when the global positioning problem has a unique solution. In a nutshell, it says that the solution is {\em not} unique if and only if the satellite positions lie on the same sheet of a hyperboloid of revolution, with the user position at one of its foci. 
Our results show that situations where the solution is not unique are ubiquitous: for any number of satellites, there exist situations where the satellites are not coplanar and yet the solution to the global positioning problem is not unique (see \cref{cUnique}). 
\cref{sExamples} gives several examples, including a situation with two solutions in dimension $n=3$ with $m=5$ satellites such that no four of them are coplanar. This example has no plane of symmetry, and thus disproves a recent conjecture by Hou [\citenumber{Hou:2022}]. We provide counterexamples to conjectures made by other authors in that section, too.
Numerical tests using randomly generated satellite positions in dimension $n=2, 3, 4$ with $m=n+1$ satellites
show that the proportion of user positions for which the solution is unique varies considerably for different satellite positions (see \cref{fDistribution}).

Practitioners have always worked under the assumption that the more satellites you have, the better. A common belief is that using at least $n+2$ satellites almost guarantees a unique solution, but this has never been proved. In \cref{sAlmost} of this paper, we formally confirm that this common assumption is correct by proving that non-coplanar configurations of $n+2$ satellites have a unique solution for almost all user positions (see \cref{tM5}).
Moreover, almost all configurations of at least $2 n + 2$ satellites are such that there is a unique solution for any user position (see \cref{tM8}).  Proving these last two results requires some tools from algebraic geometry.  \cref{tM9,tM10} describe explicit conditions on the satellite positions for which uniqueness for the user position is guaranteed. As far as we know, these last three results are the first ones stating uniqueness for all user positions. Our findings relating the number of satellites to the uniqueness of solutions are summarized in \cref{tConclusion}. 

\par{\bf Relation to Prior Work.} There is a significant amount of prior literature on the topic of global positioning, including textbooks such as \mycite{borre2012algorithms}. To our knowledge, the first algebraic solution of the global positioning problem was given by Schmidt in 1972 [\citenumber{Schmidt:1972}] for dimension $n=2$. In that work, the user position $\ve x$ is shown to lie on a straight line determined by the position of three satellites $\ve a_1,\ve a_2,\ve a_3$ and the differences between the times of arrival. The paper also presents a geometric criterion for the uniqueness question, which essentially constitutes the two-dimensional case of the criterion given in \cref{tUnique} of the present paper. See \cref{rSCA} for a more detailed comparison. However, \citename{Schmidt:1972}'s proof of the criterion makes use of an incorrect statement. See \cref{exSheet} for details and a counterexample. To the best of our knowledge, the present paper is the first to provide a correct proof of a geometric criterion for the uniqueness question.

Another algebraic solution is that of \mycite{Bancroft:1985}, which is well-cited in the more recent literature and applies to dimension $n=3$. That solution involves solving a quadratic equation. 
The simplicity of this approach makes it a popular choice for many applications. See \cref{rBancroft} for a detailed comparison of our solution procedure and the one in~[\citenumber{Bancroft:1985}]. \mycite{Krause:1987} gives another algebraic solution for the case of $m=4$ satellites in dimension $n=3$. 
An algebraic solution with lower computational cost for $m\geq 4$ satellites in dimension $n=3$ is given in \mycite{lundberg2001alternative}. Computational commutative algebra techniques (Gr\"obner bases and multipolynomial resultants) have been explored to obtain different algebraic solution of the GPS problem for a fixed dimension and number of satellites by \mycite{awange2002algebraic}.

On the other hand, there are several numerical (i.e., iterative) methods designed to handle noisy data. For example,
Li et al.~[\citenumber{li2010design}] propose a numerical approach to solve for the 2D position of a point on the globe (no altitude) from noisy measurements. Beck and Pan [\citenumber{Beck:Pan:2012}] discuss two different numerical approaches that apply to any number of satellites $m\geq 4$ in dimension $n=3$. Wang et al.~[\citenumber{wang2020gps}] improved one of these approaches in order to obtain better accuracy. 
Our method can provide an initial guess for any of these numerical approaches in the case where nothing is known a priori about the user position (e.g.~in space). More importantly, our uniqueness results provide a basis for a better understanding and prediction of the numerical behavior of these iterative algorithms, as ill-conditioning is expected for configurations near the ones for which the solution is not unique [\citenumber{spencer2007two},\citenumber{compagnoni2014comprehensive}]. More specifically, ill-conditioning is expected near the threshold between areas of nonuniqueness and uniqueness, as the erroneous solution can be interpreted as ``going off to infinity.''

There also exists some prior work on the question of uniqueness, to which a substantial part of this paper is devoted. We already discussed the part of \citename{Schmidt:1972}'s paper~[\citenumber{Schmidt:1972}] that concerns this question. The 1991 paper by \mycite{Abel:Chaffee:1991} debunks some beliefs about assured uniqueness that were held at the time, and, among other things, make some conjectures. One of those is disproved in the present paper (see \cref{ex3D}). In a later paper~[\citenumber{Chaffee:Abel:1994}], the same authors derive a criterion, in terms of the satellite positions and the user position, for unique solvability. This criterion only works for $m = 4$ satellites in three dimensions, and does not provide any geometric understanding. See \cref{rSCA} for a finer comparison between our criterion and the one from~[\citenumber{Chaffee:Abel:1994}]. A recent paper by \mycite{Hou:2022} gives a uniqueness criterion for the case $m = n + 2$, and makes an interesting conjecture about that case. However, the present paper provides a counterexample to that conjecture (see \cref{exFive,ex3D}).

Let us also mention the work of
\mycite{compagnoni2014comprehensive}, which gives a comprehensive analysis of the planar case with $m=3$ satellites. 
Our results complement and extend their results in a more general setting.

\par{\bf Acknowledgments.} A substantial part of this work was done
during a research stay of the authors at the Mathematical Sciences
Research Institute (MSRI) in Berkeley within the 2022 Summer Research
in Mathematics program. We thank the MSRI team for creating a uniquely
stimulating atmosphere at the institute. The program provided us a
with a perfect research environment and a chance to concentrate fully
on getting things done.  This work has also benefited from a research
stay of the two authors at the Banff International Research Station
for Mathematical Innovation and Discovery (BIRS) under the ``Research
in Teams'' program. We would like to thank BIRS for its hospitality
and for providing an optimal working environment. 
We are also very thankful to the 
Centro Internazionale per la Ricerca Matematica
(CIRM) in Trento for their friendly hospitality and support during a ``Research in Pairs" stay, which enabled us to bring this work to completion.
The software GeoGebra (www.geogebra.org) proved very useful for this research and gave us valuable geometric insight and intuition.
We thank Stefan
Wetkge and Timm Oertel for stimulating discussions.

\section{The solution method}
\label{sSolution}

Consider the following situation: A source at an unknown
position~$\ve x \in \RR^n$ emits a signal at an unknown time~$t$.
There are~$m$ sensors at known positions
$\ve a_1 \upto \ve a_m \in \RR^n$. They receive the signal at
times~$t_1 \upto t_m$. We choose the unit of time such that the signal
propagation speed becomes~$1$. So we have the fundamental equations
\begin{equation} \label{eqBasic}%
  \lVert\ve a_i - \ve x\rVert = t_i - t \qquad (i = 1 \upto m)
\end{equation}
The task is to work out the position~$\ve x$ and the emission
time~$t$. The very same equations arise if there are~$m$ sources
(e.g. satellites) at known positions~$\ve a_i$ emitting signals at
known times, which are then received by a device at an unknown
position~$\ve x$ (referred to as the user position). In
this case the~$t_i$ are the differences between the reception times
{\em according to the clock on the device} and the emission times
according to the (near-perfect) clocks on the sources, and~$t$ is the
unknown bias between the clock on the receiver and the clocks on the
sources. This is the \df{global positioning problem}. Since among the problems that involve solving~\eqref{eqBasic} for the unknowns~$\ve x$ and~$t$, the global positioning problem is the most familiar one, we will from now on think of the~$\ve a_i$ as the \df{satellite positions}, and of~$\ve x$ as the \df{user position}.

In the following derivation we will use the equations
\begin{equation} \label{eqAbsolute}%
  \lVert\ve a_i - \ve x\rVert = |t_i - t|, \qquad (i = 1 \upto m), \\
\end{equation}
which become equivalent to~\eqref{eqBasic} when combined with the
inequalities
\begin{equation} \label{eqIneq}%
  t_i \ge t, \qquad (i = 1 \upto m).
\end{equation}
Writing $L := \diag(-1,1 \upto 1) \in \RR^{(n+1) \times (n+1)}$,
$\tilde{\ve a_i} := \left(\begin{smallmatrix} t_i \\ \ve
    a_i\end{smallmatrix}\right)$ and
$\tilde{\ve x} := \left(\begin{smallmatrix} t \\ \ve
    x\end{smallmatrix}\right)$, we have
\begin{multline*}
  \lVert\ve a_i - \ve x\rVert^2 - (t_i - t)^2 = (\tilde{\ve a_i} -
  \tilde{\ve x})^T \cdot L \cdot (\tilde{\ve a_i} - \tilde{\ve x}) =
  \tilde{\ve a_i}^T L \tilde{\ve a_i} - 2 \tilde{\ve a_i}^T L \tilde{\ve
    x} + \tilde{\ve x}^T L \tilde{\ve x} = \\
  \lVert\ve a_i\rVert^2 - t_i^2 + 2 t_i t - 2 \ve a_i^T \ve x +
  \lVert\ve x\rVert^2 - t^2,
\end{multline*}
so~\eqref{eqAbsolute} is equivalent to
\begin{equation} \label{eqEquiv}%
  - 2 t_i t + 2 \ve a_i^T \ve x - \lVert\ve x\rVert^2 + t^2 =
  \lVert\ve a_i\rVert^2 - t_i^2 \qquad (i = 1 \upto m).
\end{equation}
We form the matrix
\begin{equation} \label{eqA}%
  A :=
  \begin{pmatrix}
    - 2 t_1 & 2 \ve a_1^T & -1 \\
    \vdots & \vdots & \vdots \\
    -2 t_m & 2 \ve a_m^T & -1
  \end{pmatrix} \in \RR^{m \times (n+2)},
\end{equation}
which contains only known quantities. With this, \eqref{eqEquiv} can
be expressed as a system of linear equations for the unknown
quantities:
\begin{equation} \label{eqMatrix}%
  A \cdot
  \begin{pmatrix}
    t \\
    \ve x \\
    \lVert\ve x\rVert^2 - t^2
  \end{pmatrix} =
  \begin{pmatrix}
    \lVert\ve a_1\rVert^2 - t_1^2 \\
    \vdots \\
    \lVert\ve a_m\rVert^2 - t_m^2
  \end{pmatrix}.
\end{equation}

Now we make the assumption that $A$ has rank~$n+2$,
so~\eqref{eqMatrix} has a unique solution. (The existence of at least
one solution follows from the fact that the point~$\ve x$ and number~$t$ exist.) In the case $m = n+2$ we can then simply
invert~$A$. More generally, for $m \ge n+2$, we use
the Moore-Penrose inverse
$A^+ = (A^T A)^{-1} A^T \in \RR^{(n+2) \times m}$, which is a left
inverse of $A$. So multiplying both sides of~\eqref{eqMatrix} with
$A^+$ and then with the matrix that deletes the last entry of the
vector $\left(\begin{smallmatrix} 1 \\ \ve x \\ \lVert\ve x\rVert^2 -
    t^2\end{smallmatrix}\right)$ yields
\begin{equation} \label{eqTxA}%
  \begin{pmatrix}
    t \\
    \ve x
  \end{pmatrix} =
  \begin{pmatrix}
    1 & \cdots & 0 & 0 \\
    \vdots & \ddots & \vdots & \vdots \\
    0 & \cdots & 1 & 0
  \end{pmatrix} \cdot A^+ \cdot \begin{pmatrix}
    \lVert\ve a_1\rVert^2 - t_1^2 \\
    \vdots \\
    \lVert\ve a_m\rVert^2 - t_m^2
  \end{pmatrix}.
\end{equation}
So we have obtained a unique solution of~\eqref{eqAbsolute}. This works if
$A$ has rank~$n+2$. In \cref{sAlmost} we will show that this is ``very
likely'' if $m \ge n+2$.

Now we consider the case that $A$ has rank~$\le n+1$, which is certain
to occur if $m \le n+1$, with the aim of solving~\eqref{eqAbsolute} in that case as well. We assume that the~$\ve a_i$ do not lie on a common affine
hyperplane. (So for
$n = 3$ this means that they are not coplanar, for $n = 2$ that they
are not collinear, and for $n = 1$ that they are not all equal.) This
assumption makes sense, since if the~$\ve a_i$ all lay in the same
hyperplane, then~\eqref{eqBasic} would almost never have a unique
solution (see \cref{rLateration}\ref{rLaterationA} below).
This assumption amounts to saying that the matrix
\begin{equation} \label{eqB}%
  B :=
  \begin{pmatrix}
    2 \ve a_1^T & -1 \\
    \vdots & \vdots \\
    2 \ve a_m^T & -1
  \end{pmatrix} \in \RR^{m \times (n+1)}
\end{equation}
has rank~$n+1$%
, so in particular we need $m \ge n+1$. Since $B$ is a submatrix of
$A$, we obtain $\rank(A) = n+1$. Now~\eqref{eqEquiv} can be restated
as
\begin{equation} \label{eqBxt}%
  B \cdot 
  \begin{pmatrix}
    \ve x \\
    \lVert\ve x\rVert^2 - t^2
  \end{pmatrix} =
  2 t \begin{pmatrix}
    t_1 \\
    \vdots \\
    t_m
  \end{pmatrix} + \begin{pmatrix}
    \lVert\ve a_1\rVert^2 - t_1^2 \\
    \vdots \\
    \lVert\ve a_m\rVert^2 - t_m^2
  \end{pmatrix},
\end{equation}
and multiplying by the Moore-Penrose inverse
$B^+ = (B^T B)^{-1} B^T \in \RR^{(n+1) \times m}$ yields
\begin{equation} \label{eqXUV}%
  \begin{pmatrix}
    \ve x \\
    \lVert\ve x\rVert^2 - t^2
  \end{pmatrix} = t \cdot
  \begin{pmatrix}
    \ve u \\
    2 \alpha
  \end{pmatrix} +
  \begin{pmatrix}
    \ve v \\
    \beta
  \end{pmatrix},
\end{equation}
where
\begin{equation} \label{eqUV}%
  \begin{pmatrix}
    \ve u \\
    2 \alpha
  \end{pmatrix} := 2 B^+
  \begin{pmatrix}
    t_1 \\
    \vdots \\
    t_m
  \end{pmatrix} \ \text{and} \
  \begin{pmatrix}
    \ve v \\
    \beta
  \end{pmatrix} := B^+
  \begin{pmatrix}
    \lVert\ve a_1\rVert^2 - t_1^2 \\
    \vdots \\
    \lVert\ve a_m\rVert^2 - t_m^2
  \end{pmatrix}.
\end{equation}
Extracting components from~\eqref{eqXUV}, we obtain the equations
\begin{equation} \label{eqRank4}%
  \ve x = t \ve u + \ve v \quad \text{and} \quad \bigl(\lVert\ve
  u\rVert^2-1\bigr) t^2 + 2 \bigl(\ve u^T \ve v-\alpha\bigr) t +
  \lVert\ve v\rVert^2 - \beta = 0.
\end{equation}
Observe that~$\ve u$, $\ve v$, $\alpha$, and~$\beta$ are all derived
from known quantities, so \eqref{eqRank4} can be resolved. This is how we can solve our problem in the case $\rank(A) =
n+1$. Since we derived~\eqref{eqRank4} from the
equations~\eqref{eqAbsolute}, some of its solutions may not satisfy
the original equations~\eqref{eqBasic}. But we can easily test the
inequalities~\eqref{eqIneq} for any solution of~\eqref{eqRank4}, so
there is hope that we can whittle down multiple solutions in this way
to just one. So we arrive at the following solution procedure, whose correctness will be formally proved in \cref{tLateration}.

\begin{procedure}[Solution of the global positioning problem] \label{pGPS}%
  \mbox{}
  \begin{description}
    \item[Input] Points~$\ve a_1 \upto \ve a_m \in \RR^n$, assumed not to lie on a common affine hyperplane, and~$t_1 \upto t_m \in \RR$. It is also assumed that~\eqref{eqBasic} is solvable.
    \item[Output] The set of solutions $(\ve x,t)$ of~\eqref{eqBasic}.
  \end{description}
  \begin{enumerate}[label=(\arabic*)]
    \item Compute the matrix $A$ according to~\eqref{eqA} and determine its rank.
    \item If $\rank(A) = n+2$ then compute $A^+ := (A^T A)^{-1} A^T$ and determine $\left(\begin{smallmatrix} t \\ \ve x\end{smallmatrix}\right) \in \RR^{n+1}$ according to~\eqref{eqTxA}. Output~$(\ve x,t)$ as the unique solution of~\eqref{eqBasic}. The solution procedure finishes here.
    \item If, on the other hand, $\rank(A) \le n+1$ then compute the matrix $B$ according to~\eqref{eqB}, and work out the vectors~$\ve u,\ve v$ and the scalars~$\alpha$ and~$\beta$ by using~\eqref{eqUV}.
    \item For every~$t \in \RR$ satisfying the equation
    \[
    \bigl(\lVert\ve u\rVert^2-1\bigr) t^2 + 2 \bigl(\ve u^T \ve v-\alpha\bigr) t + \lVert\ve v\rVert^2 - \beta = 0,
    \]
    perform step~\ref{pGPS4}.
    \begin{enumerate}[label=(\arabic*)]
    \setcounter{enumii}{\value{enumi}}
      \item \label{pGPS4} If~$t \le t_i$ for all $i = 1 \upto m$ then compute $\ve v := t \ve u + \ve v$ and output~$(\ve x,t)$ as a solution of~\eqref{eqBasic}.
    \end{enumerate}
    \setcounter{enumi}{\value{enumii}}
    \item The solution procedure finishes here, having output~$1$ or~$2$ solutions.
  \end{enumerate}
\end{procedure}

Notice that our solution procedure was directly derived from the equations~\eqref{eqBasic} that we want to solve. This means that every solution of~\eqref{eqBasic} will be found by \cref{pGPS}. However, the converse is another matter. Is it also true that every pair~$(\ve x,t)$ spit out by the procedure is really a solution of~\eqref{eqBasic}? On a finer scale, is it true that in the case $\rank(A) = n+1$, unique
solvability of the quadratic equation in the procedure is equivalent to unique solvability
of~\eqref{eqAbsolute}? These questions will be answered in the affirmative by \cref{tLateration}.
We find it helpful, if not essential, to state
this theorem and further results in a purely mathematical way, removing the
real-world context. Specifically, we will from now on work with the
following setup.

\begin{setup} \label{Setup}%
  Let $\ve a_1 \upto \ve a_m \in \RR^n$, with $n \ge 2$, be pairwise
  distinct points that do not lie on a common affine hyperplane, so
  $m \ge n+1$. Moreover, let~$\ve x \in \RR^n$ and let $t \in
  \RR$. For $i = 1 \upto m$ set
  \begin{equation} \label{eqSetup}
    t_i := \lVert\ve a_i - \ve x\rVert + t,
  \end{equation}
  where ``$\lVert\cdot\rVert$'' denotes the Euclidean norm. When we speak of solutions
  of~\eqref{eqBasic} or of~\eqref{eqAbsolute}, we will mean vectors
  $\left(\begin{smallmatrix} t' \\ \ve x'\end{smallmatrix}\right) \in
  \RR^{n+1}$ such that
  \[
    \lVert\ve a_i - \ve x'\rVert = t_i - t' \quad (i = 1 \upto m)
    \quad \text{or} \quad \lVert\ve a_i - \ve x'\rVert = |t_i -
    t'|\quad (i = 1 \upto m),
  \]
  respectively. Consider the matrices $A$ and $B$ given by the
  formulas~\eqref{eqA} and~\eqref{eqB}, and form
  $\ve u,\ve v \in \RR^n$ and~$\alpha,\beta \in \RR$ by the
  formulas~\eqref{eqUV}. Then a solution of~\eqref{eqRank4} will mean
  a vector
  $\left(\begin{smallmatrix} t' \\ \ve x'\end{smallmatrix}\right) \in
  \RR^{n+1}$ such that
  \[
    \bigl(\lVert\ve u\rVert^2-1\bigr) (t')^2 + 2 \bigl(\ve u^T \ve
    v-\alpha\bigr) t' + \lVert\ve v\rVert^2 - \beta = 0 \quad
    \text{and} \quad \ve x' = t' \ve u + \ve v.
  \]
\end{setup}

We now present the result announced above, which may be paraphrased by saying that \cref{pGPS} is optimal in the sense that it computes all solutions, but no ``spurious'' ones. It also contains the important fact that the quadratic equation in the procedure (and in~\eqref{eqRank4}) never degenerates completely.

\begin{theorem}[Correctness of \cref{pGPS}] \label{tLateration}%
  Assume the situation of \cref{Setup}, so
  $n+1 = \rank(B) \le \rank(A) \le n+2$. Then we have:
  \begin{enumerate}[label=(\alph*)]
  \item \label{tLaterationA} If $\rank(A) = n+2$,
    then~\eqref{eqAbsolute} has a unique solution given
    by~\eqref{eqTxA}, which is
    $\left(\begin{smallmatrix} t \\ \ve
        x\end{smallmatrix}\right)$. So~\eqref{eqBasic} has the same
    unique solution.
  \item \label{tLaterationB} If $\rank(A) = n+1$, then the
    equations~\eqref{eqAbsolute} and~\eqref{eqRank4} have the same
    solutions.
  \item \label{tLaterationC} Moreover, if $\rank(A) = n+1$, the
    coefficients of~$t^2$ and~$t$ in the quadratic equation
    in~\eqref{eqRank4} are not both zero. So~\eqref{eqRank4}, and by
    \cref{tLaterationB} also~\eqref{eqAbsolute}, has one or two
    solutions.
  \end{enumerate}
  Thus the output of \cref{pGPS} is exactly the solution set of~\eqref{eqBasic}, and there are one or two solutions.
\end{theorem}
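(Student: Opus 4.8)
The plan is to treat all three parts as consequences of the algebraic reformulations already assembled in the derivation, using throughout the single structural fact supplied by \cref{Setup}: the pair $\left(\begin{smallmatrix} t \\ \ve x\end{smallmatrix}\right)$ is a genuine solution, so every linear system below is consistent. The backbone is the chain of equivalences \eqref{eqAbsolute} $\Leftrightarrow$ \eqref{eqEquiv} $\Leftrightarrow$ \eqref{eqMatrix}, obtained by squaring (legitimate since both sides of \eqref{eqAbsolute} are nonnegative) and then by collecting the unknowns $t',\ve x',\lVert\ve x'\rVert^2-(t')^2$ into one vector.

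For \cref{tLaterationA}, if $\rank(A)=n+2$ then $A$ has full column rank, so \eqref{eqMatrix} has at most one solution; since $\left(\begin{smallmatrix} t \\ \ve x \\ \lVert\ve x\rVert^2-t^2\end{smallmatrix}\right)$ is one, it is the only one. Any solution $\left(\begin{smallmatrix} t' \\ \ve x'\end{smallmatrix}\right)$ of \eqref{eqAbsolute} yields a solution of \eqref{eqMatrix}, which must therefore agree with it in its first $n+1$ entries, forcing $t'=t$ and $\ve x'=\ve x$; applying the left inverse $A^+$ to the consistent system \eqref{eqMatrix} and projecting recovers exactly \eqref{eqTxA}. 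That this solution also solves \eqref{eqBasic} is immediate from $t_i-t=\lVert\ve a_i-\ve x\rVert\ge 0$.

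The real content of \cref{tLaterationB} is that multiplying \eqref{eqBxt} by $B^+$ is \emph{reversible}. Since $B$ has full column rank, $B^+B$ is the identity, so \eqref{eqBxt} always implies \eqref{eqXUV}; the converse needs $BB^+$ to fix the right-hand side of \eqref{eqBxt}, that is, both vectors $(t_i)_i$ and $(\lVert\ve a_i\rVert^2-t_i^2)_i$ must lie in the column space of $B$. Here $\rank(A)=n+1=\rank(B)$ says precisely that the first column of $A$, equivalently $(t_i)_i$, lies in that column space; and the existence of the true solution places $2t\,(t_i)_i+(\lVert\ve a_i\rVert^2-t_i^2)_i$ there as well, so the second vector follows by subtraction. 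With both implications in hand, \eqref{eqAbsolute} $\Leftrightarrow$ \eqref{eqXUV}, and substituting the upper component $\ve x'=t'\ve u+\ve v$ into the lower component of \eqref{eqXUV} turns it into the quadratic of \eqref{eqRank4}, closing the equivalence.

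I expect \cref{tLaterationC} to be the main obstacle, and I would argue by contradiction. If the coefficients of $(t')^2$ and $t'$ both vanish then, since the true $t$ satisfies the quadratic, its constant term vanishes too, so the quadratic is identically zero; by \cref{tLaterationB} this makes $\left(\begin{smallmatrix} t' \\ t'\ve u+\ve v\end{smallmatrix}\right)$ a solution of \eqref{eqAbsolute} for \emph{every} $t'\in\RR$, i.e.\ $\lVert\ve a_i-t'\ve u-\ve v\rVert^2=(t_i-t')^2$ holds identically in $t'$. Writing $\ve a_i-\ve v=s_i\ve u+\ve d_i$ with $\ve d_i\perp\ve u$ and $\lVert\ve u\rVert=1$, I would match powers of $t'$: the linear term forces $s_i=t_i$, and the constant term $\lVert\ve a_i-\ve v\rVert^2=s_i^2+\lVert\ve d_i\rVert^2=t_i^2$ then forces $\lVert\ve d_i\rVert=0$, so every $\ve a_i=\ve v+t_i\ve u$ lies on a single line. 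As $n\ge 2$ this contradicts the hypothesis that the $\ve a_i$ do not lie in a common affine hyperplane. Finally I would assemble: \eqref{eqBasic} cuts out of the solution set of \eqref{eqAbsolute} exactly those $\left(\begin{smallmatrix} t' \\ \ve x'\end{smallmatrix}\right)$ with $t'\le t_i$ for all $i$ (the inequalities \eqref{eqIneq}), which is precisely the test in step~\ref{pGPS4}; since the nondegenerate quadratic has the true $t$ as a root and at most two roots in total, and the true solution always passes the test, \cref{pGPS} outputs the full solution set of \eqref{eqBasic}, of size one or two.
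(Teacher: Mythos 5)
Your proposal is correct. Parts \ref{tLaterationA} and \ref{tLaterationB} follow essentially the paper's route: in \ref{tLaterationB} your reduction to the claim that both $(t_i)_i$ and $(\lVert\ve a_i\rVert^2 - t_i^2)_i$ lie in the column space of $B$ (so that $BB^+$ fixes the right-hand side of \eqref{eqBxt}) is exactly what the paper establishes, except that the paper does it by factoring $A = C A_0$ with $B = C B_0$ and verifying $BB^+C = C$; your column-space phrasing, with rank$(A)=\mathrm{rank}(B)$ supplying membership of $(t_i)_i$ and the existence of the true solution supplying the rest by subtraction, is cleaner and skips that bookkeeping. Part \ref{tLaterationC} is where you genuinely diverge: the paper first normalizes $t=0$ by translation invariance, invokes \cref{lSide} to get $\ve v = \ve x$ and $\langle\ve u,\ve a_i\rangle - \alpha = \lVert\ve a_i - \ve x\rVert$, and then derives collinearity from the equality case of Cauchy--Schwarz. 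You instead observe that an identically vanishing quadratic makes $\lVert\ve a_i - t'\ve u - \ve v\rVert^2 = (t_i - t')^2$ an identity in $t'$ and compare coefficients after the orthogonal decomposition $\ve a_i - \ve v = s_i\ve u + \ve d_i$ (using $\lVert\ve u\rVert = 1$ from the vanishing leading coefficient), reaching the same collinearity contradiction without \cref{lSide} and without the $t=0$ normalization; the only thing lost is that the paper reuses \cref{lSide} later in \cref{sGeometry}, so it is not wasted effort there. Your final assembly of the output of \cref{pGPS} via the inequalities \eqref{eqIneq} matches the paper.
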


The following lemma will be used in the proof of \cref{tLaterationC}
of the theorem and in \cref{sGeometry}.

\begin{lemma} \label{lSide}%
  In the situation of \cref{Setup}, assume $t = 0$. Then
  \begin{equation} \label{eqVBeta}%
    \ve v = \ve x \quad \text{and} \quad \beta = \lVert\ve x\rVert^2.
  \end{equation}
  Moreover, $\rank(A) < n+2$ if and only if
  \begin{equation} \label{eqUAAlpha}
    \langle\ve u,\ve a_i\rangle - \alpha = \lVert\ve a_i - \ve x\rVert \quad
    (i = 1 \upto m).
  \end{equation}
  No other pair of a vector and a scalar in the place of~$\ve u$
  and~$\alpha$ can satisfy~\eqref{eqUAAlpha}.
\end{lemma}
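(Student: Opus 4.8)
The plan is to lean entirely on the two standard properties of the Moore--Penrose inverse $B^+ = (B^T B)^{-1} B^T$: since $B$ has full column rank $n+1$, we have $B^+ B = I_{n+1}$, and $B B^+ = B(B^T B)^{-1} B^T$ is the orthogonal projection of $\RR^m$ onto the column space $\operatorname{Im}(B)$. Setting $t = 0$ turns \eqref{eqSetup} into $t_i = \lVert\ve a_i - \ve x\rVert$, so that $\lVert\ve a_i\rVert^2 - t_i^2 = 2\langle\ve a_i,\ve x\rangle - \lVert\ve x\rVert^2$ for each $i$. Everything then reduces to reading off the right matrix-vector products from~\eqref{eqB} and~\eqref{eqUV}.

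First I would establish \eqref{eqVBeta}. The point is that the data vector defining $\bigl(\begin{smallmatrix}\ve v\\\beta\end{smallmatrix}\bigr)$ in~\eqref{eqUV} is precisely $B \bigl(\begin{smallmatrix}\ve x\\\lVert\ve x\rVert^2\end{smallmatrix}\bigr)$, because the $i$-th entry $2\langle\ve a_i,\ve x\rangle - \lVert\ve x\rVert^2$ is exactly row~$i$ of $B$ applied to $\bigl(\begin{smallmatrix}\ve x\\\lVert\ve x\rVert^2\end{smallmatrix}\bigr)$. Multiplying by $B^+$ and using $B^+ B = I_{n+1}$ gives $\bigl(\begin{smallmatrix}\ve v\\\beta\end{smallmatrix}\bigr) = \bigl(\begin{smallmatrix}\ve x\\\lVert\ve x\rVert^2\end{smallmatrix}\bigr)$, which is~\eqref{eqVBeta}.

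Next comes the rank criterion. Since $B$ is obtained from $A$ in~\eqref{eqA} by deleting the first column and $\rank(B) = n+1$, we have $\rank(A) < n+2$ if and only if the first column of $A$ lies in $\operatorname{Im}(B)$. That first column is $-2\ve t$, where $\ve t := (t_1 \upto t_m)^T$, so (rescaling by $-2$ being harmless) the condition is $\ve t \in \operatorname{Im}(B)$, equivalently $B B^+ \ve t = \ve t$ because $B B^+$ is the orthogonal projection onto $\operatorname{Im}(B)$. From~\eqref{eqUV} we read off $B^+ \ve t = \bigl(\begin{smallmatrix}\ve u/2\\\alpha\end{smallmatrix}\bigr)$, so the $i$-th entry of $B B^+ \ve t = B\bigl(\begin{smallmatrix}\ve u/2\\\alpha\end{smallmatrix}\bigr)$ equals $2\ve a_i^T(\ve u/2) - \alpha = \langle\ve u,\ve a_i\rangle - \alpha$, while the $i$-th entry of $\ve t$ is $t_i = \lVert\ve a_i - \ve x\rVert$. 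Equating entrywise yields exactly~\eqref{eqUAAlpha}. For the uniqueness clause, the map $(\ve w,\gamma) \mapsto \bigl(\langle\ve w,\ve a_i\rangle - \gamma\bigr)_{i=1\upto m}$ is represented by a matrix whose columns span the same subspace as those of $B$ (they differ only by rescaling the first $n$ columns), hence it is injective; so at most one pair can satisfy the $m$ equations~\eqref{eqUAAlpha}, and that pair is $(\ve u,\alpha)$ whenever the rank condition holds.

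I do not expect a genuine obstacle here, as the whole argument is elementary linear algebra. The only thing demanding care is the bookkeeping of the factor-of-two normalizations between the matrix $B$, the definition of $\ve u$ (which carries a leading $2$ in~\eqref{eqUV}), and the resulting $\ve u/2$ appearing in $B^+\ve t$; keeping these straight, together with the clean invocation of $BB^+$ as the orthogonal projector onto $\operatorname{Im}(B)$, is the crux of making the three claims fall out simultaneously.
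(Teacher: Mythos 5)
Your proposal is correct and follows essentially the same route as the paper: both parts rest on $B^+B = I_{n+1}$ and on reading the entries of $B$ applied to the relevant vectors, with the rank condition reduced to the first column of $A$ (a multiple of $\ve t$) lying in the column space of $B$. The only cosmetic difference is that you test membership in $\operatorname{Im}(B)$ via the projector $BB^+$ and identify the candidate pair directly as $(\ve u,\alpha)$, whereas the paper introduces an auxiliary pair $(\ve u',\alpha')$ witnessing the linear dependence and then shows $\ve u'=\ve u$, $\alpha'=\alpha$ by applying $B^+$; your factor-of-two bookkeeping ($B^+\ve t = \bigl(\begin{smallmatrix}\ve u/2\\\alpha\end{smallmatrix}\bigr)$) and the injectivity argument for uniqueness are both accurate.
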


\begin{proof}
  The hypothesis, together with~\eqref{eqSetup} implies
  \begin{equation} \label{eqX0t0}%
    t_i = \lVert\ve a_i - \ve x\rVert \quad (i = 1 \upto m).
  \end{equation}
  We obtain
  \[
    \begin{pmatrix}
      \ve v \\ \beta
    \end{pmatrix} \!\underset{\eqref{eqUV},\eqref{eqX0t0}}{=}\!
    B^+
    \begin{pmatrix}
      2 \langle\ve x,\ve a_1\rangle - \lVert\ve x\rVert^2 \\
      \vdots \\
      2 \langle\ve x,\ve a_m\rangle - \lVert\ve x\rVert^2
    \end{pmatrix} \!\underset{\eqref{eqB}}{=}\! B^+
    B
    \begin{pmatrix}
      \ve x \\ \lVert\ve x\rVert^2
    \end{pmatrix} =
    \begin{pmatrix}
      \ve x \\ \lVert\ve x\rVert^2
    \end{pmatrix}.
  \]
  To prove the second assertion, assume $\rank(A) < n+2$. Then a look
  at~\eqref{eqA} and~\eqref{eqB} reveals that the last $n+1$
  columns of $A$ are linearly independent, but together with the first
  column they become linearly dependent. So there is a vector
  $\ve u' \in \RR^n$ and a scalar $\alpha' \in \RR$ such that
  \[
    A \cdot
    \begin{pmatrix}
      1 \\ \ve u' \\ 2 \alpha'
    \end{pmatrix}
    = 0.
  \]
  This can be rewritten as
  \begin{equation} \label{eqUsAs}%
    B \cdot
    \begin{pmatrix}
      \ve u' \\ 2 \alpha'
    \end{pmatrix}
    = 2
    \begin{pmatrix}
      t_1 \\ \vdots \\ t_m
    \end{pmatrix}
    \quad \text{or} \quad \langle\ve u',\ve a_i\rangle - \alpha'
    \!\!\underset{\eqref{eqX0t0}}{=}\!\!  \lVert\ve a_i - \ve x\rVert
    \quad (i = 1 \upto m).
  \end{equation}
  Conversely, having~$\ve u'$ and~$\alpha'$ satisfying~\eqref{eqUsAs}
  implies that $\rank(A) < n+2$. The second assertion, including the
  uniqueness statement, follows if we can deduce $\ve u' = \ve u$ and
  $\alpha' = \alpha$ from~\eqref{eqUsAs}. This can be done by
  multiplying the above matrix equation with $B^+$ and
  using~\eqref{eqUV}.
\end{proof}

\begin{proof}[Proof of \cref{tLateration}]
  For the proof of \cref{tLaterationA}, we assume $\rank(A) =
  n+2$. By~\eqref{eqSetup}, the vector
  $\left(\begin{smallmatrix} t \\ \ve x\end{smallmatrix}\right)$
  satisfies~\eqref{eqBasic} and thus
  also~\eqref{eqAbsolute}. Since~\eqref{eqTxA} was derived
  from~\eqref{eqAbsolute}, it also satisfies~\eqref{eqTxA}. By the
  same argument, any other vector
  $\left(\begin{smallmatrix} t' \\ \ve x'\end{smallmatrix}\right) \in
  \RR^{n+1}$ satisfying $\lVert\ve a_i - \ve x'\rVert = |t_i - t'|$
  for all~$i$ also satisfies~\eqref{eqTxA}, so
  \[
    \begin{pmatrix}
      t' \\ \ve x'
    \end{pmatrix} =
    \begin{pmatrix}
      1 & \cdots & 0 & 0 \\
      \vdots & \ddots & \vdots & \vdots \\
      0 & \cdots & 1 & 0
    \end{pmatrix} \cdot A^+ \cdot \begin{pmatrix}
      \lVert\ve a_1\rVert^2 - t_1^2 \\
      \vdots \\
      \lVert\ve a_m\rVert^2 - t_m^2
    \end{pmatrix} = \begin{pmatrix}
      t \\ \ve x
    \end{pmatrix}.
  \]
  This proves \cref{tLaterationA}, so from now on we may assume
  $\rank(A) = n+1$. By renumbering we can assume that the first
  $n + 1$ rows of $B$ are linearly independent, so the
  submatrix $B_0 \in \RR^{(n+1) \times (n+1)}$ formed by these
  rows is invertible. We also form the submatrix
  $A_0 \in \RR^{(n+1) \times (n+2)}$ of $A$ by selecting the first
  $n+1$ rows. Then
  $n+1 = \rank(B_0) \le \rank(A_0) \le \rank(A) = n+1$, so the
  remaining rows of $A$ are linear combinations of the first $n+1$
  rows. We express this by writing $A = C \cdot A_0$ with
  $C \in \RR^{m \times (n+1)}$ a matrix whose upper
  $(n+1) \times (n+1)$-part is the identity matrix. Extracting the
  first row and the remaining part from this gives
  \begin{equation} \label{eqACA}%
    \begin{pmatrix}
      t_1 \\ \vdots \\ t_m
    \end{pmatrix} = C \cdot
    \begin{pmatrix}
      t_1 \\ \vdots \\ t_{n+1}
    \end{pmatrix} \quad \text{and} \quad B = C \cdot B_0.
  \end{equation}
  By~\eqref{eqSetup}, the vector
  $\left(\begin{smallmatrix} t \\ \ve x\end{smallmatrix}\right)$
  satisfies~\eqref{eqAbsolute}. Since~\eqref{eqMatrix} was derived
  from~\eqref{eqAbsolute}, it also holds, and we get
  \begin{equation} \label{eqAtC}%
    \begin{pmatrix} \lVert\ve a_1\rVert^2 - t_1^2 \\ \vdots \\
      \lVert\ve a_m\rVert^2 - t_m^2
    \end{pmatrix} \!\underset{\eqref{eqMatrix}}{=}\! A
    \begin{pmatrix} t \\ \ve x \\ \lVert\ve x\rVert^2 - t^2
    \end{pmatrix} = C A_0
    \begin{pmatrix} t \\ \ve x \\ \lVert\ve x\rVert^2 - t^2
    \end{pmatrix} \!\underset{\eqref{eqMatrix}}{=}\! C \cdot
    \begin{pmatrix} \lVert\ve a_1\rVert^2 - t_1^2 \\ \vdots \\
      \lVert\ve a_{n+1}\rVert^2 - t_{n+1}^2
    \end{pmatrix}.
  \end{equation}
  Since $B^+$ is a left inverse of $B$, we also have
  \begin{equation} \label{eqAAC}%
    B B^+ C \!\!\underset{\eqref{eqACA}}{=}\!\! C B_0 B^+ C = C B_0
    B^+ C B_0 B_0^{-1} \!\!\underset{\eqref{eqACA}}{=}\!\! C B_0 B^+ B
    B_0^{-1} = C B_0 B_0^{-1} = C.
  \end{equation}
  For the proof of \cref{tLaterationB}, let us assume that
  $\left(\begin{smallmatrix} t' \\ \ve x'\end{smallmatrix}\right)$
  satisfies~\eqref{eqRank4}. This is equivalent to
  $\left(\begin{smallmatrix} \ve x' \\ \lVert\ve x'\rVert^2 -
      (t')^2 \end{smallmatrix}\right) = t' \cdot
  \left(\begin{smallmatrix} \ve u \\ 2 \alpha \end{smallmatrix}\right)
  + \left(\begin{smallmatrix} \ve v \\
      \beta \end{smallmatrix}\right)$, so
  \begin{multline*}
    B
    \begin{pmatrix}
      \ve x' \\ \lVert\ve x'\rVert^2 - (t')^2
    \end{pmatrix} \!\underset{\eqref{eqUV}}{=} B B^+
    \Bigl(2 t'
    \begin{pmatrix}
      t_1 \\ \vdots \\ t_m
    \end{pmatrix} +
    \begin{pmatrix}
      \lVert\ve a_1\rVert^2 - t_1^2 \\
      \vdots \\
      \lVert\ve a_m\rVert^2 - t_m^2
    \end{pmatrix}\Bigr)
    \!\!\underset{\eqref{eqACA},\eqref{eqAtC}}{=} \\
    B B^+ C \Bigl(2 t'
    \begin{pmatrix}
      t_1 \\ \vdots \\ t_{n+1}
    \end{pmatrix} +
    \begin{pmatrix}
      \lVert\ve a_1\rVert^2 - t_1^2 \\
      \vdots \\
      \lVert\ve a_{n+1}\rVert^2 - t_{n+1}^2
    \end{pmatrix}\Bigr)
    \!\!\underset{\eqref{eqACA}-\eqref{eqAAC}}{=}\!\! 2 t'
    \begin{pmatrix}
      t_1 \\ \vdots \\ t_m
    \end{pmatrix} +
    \begin{pmatrix}
      \lVert\ve a_1\rVert^2 - t_1^2 \\
      \vdots \\
      \lVert\ve a_m\rVert^2 - t_m^2
    \end{pmatrix}.
  \end{multline*}
  So~\eqref{eqBxt} is satisfied. But that equation just
  restates~\eqref{eqEquiv}, which in turn is equivalent
  to~\eqref{eqAbsolute}. So we have seen that if
  $\left(\begin{smallmatrix} t' \\ \ve x'\end{smallmatrix}\right)$
  satisfies~\eqref{eqRank4}, then it also
  satisfies~\eqref{eqAbsolute}. The converse also holds
  since~\eqref{eqRank4} was derived from~\eqref{eqAbsolute}. This
  concludes the proof of \cref{tLaterationB}.

  \Cref{tLaterationC} can be restated by saying that~\eqref{eqRank4}
  has neither infinitely many solutions nor no solution at
  all. By~\ref{tLaterationB} it is equivalent to say this about the
  solutions of~\eqref{eqAbsolute}. We know that~\eqref{eqAbsolute} has
  at least one solution, so no solutions is impossible. Now the number
  of solutions, i.e., the number of vectors
  $\left(\begin{smallmatrix} t' \\ \ve x'\end{smallmatrix}\right)$
  satisfying $\lVert\ve a_i - \ve x'\rVert = |t_i - t'|$ for
  $i = 1 \upto m$, with~$t_i$ given by~\eqref{eqSetup}, does not
  change if a real number is subtracted from~$t$ and consequently from
  all~$t_i$. In particular, we can subtract~$t$ from the~$t_i$ and~$t$
  without changing the number of solutions. For the proof of
  \cref{tLaterationC} we can therefore assume $t = 0$, so we can use
  \cref{lSide}.

  By~\eqref{eqVBeta}, the quadratic equation in~\eqref{eqRank4}
  becomes
  $\bigl(\lVert\ve u\rVert^2-1\bigr) t^2 + 2 \bigl(\langle\ve u,\ve
  x\rangle - \alpha\bigr) t = 0$, so assuming that there are
  infinitely many solutions means $\langle\ve u,\ve x\rangle = \alpha$
  and $\lVert\ve u\rVert = 1$. With~\eqref{eqUAAlpha} this implies
  \[
    \langle\ve u,\ve a_i - \ve x\rangle = \lVert\ve a_i - \ve x\lVert
    = \lVert\ve u\rVert \cdot \lVert\ve a_i - \ve x\lVert,
  \]
  so the Cauchy–Schwarz inequality is actually an equality, implying
  that the pair of vectors~$\ve a_i - \ve x$ and~$\ve u$ is linearly
  dependent. So $\ve a_i = \ve x + \alpha_i \ve u$ with
  $\alpha_i \in \RR$. This shows that the $\ve a_i$ are
  collinear. Since $n \ge 2$, this contradicts the hypothesis that
  they do not lie on a common affine hyperplane, and the proof is
  complete.
\end{proof}

Let us briefly look at the cases that were excluded in \cref{Setup}:
that the dimension~$n$ is one, and that the vectors~$\ve a_i$ lie on a
common affine hyperplane.

\begin{rem} \label{rLateration}%
  \begin{enumerate}[label=(\alph*)]
  \item \label{rLaterationA} What happens in the one-dimensional case?
    It is well known (see \mycite{Hou:2022}) that if~$\ve x$ lies on
    the ``same side'' of all $\ve a_i$, i.e., if $\ve x \le \ve a_i$
    for all~$i$ or $\ve x \ge \ve a_i$ for all~$i$, then
    determining~$\ve x$ is impossible, in the sense
    that~\eqref{eqBasic} has infinitely many solutions. On the other
    hand, if~$\ve x$ lies anywhere between the~$\ve a_i$, then its
    determination is possible, i.e., \eqref{eqBasic} has a unique
    solution. So it makes sense that the final step in the proof of
    \cref{tLateration}\ref{tLaterationC} used $n \ge 2$. In fact, in
    the case $n = 1$ and~$\ve x$ lying on the same side of
    the~$\ve a_i$, our matrix $A$ has rank~$2 = n+1$, and the
    coefficients of the quadratic equation in~\eqref{eqRank4} turn out
    to be all~$0$.
    
    

  \item \label{rLaterationB} Let us also consider the case that the
    $\ve a_i$ do lie on a common affine hyperplane $H \subset
    \RR^n$. Let~$\ve x'$ be the point obtained by reflecting~$\ve x$
    at $H$. Then
    $\left(\begin{smallmatrix} t \\ \ve x'\end{smallmatrix}\right)$ is
    a solution of~\eqref{eqBasic}, and $\ve x' \ne \ve x$ unless
    $\ve x \in H$. So in this situation~\eqref{eqBasic} almost never
    has a unique solution. \remend
  \end{enumerate}
  \renewcommand{\remend}{}
\end{rem}

\begin{rem} \label{rBancroft}%
  The most cited,
  algebraic solution of the global positioning problem was given by
  \mycite{Bancroft:1985} in~\citeyear{Bancroft:1985}. Let us point out
  some differences between his approach and ours.
  \begin{itemize}
  \item \citename{Bancroft:1985} reaches a quadratic equation even if
    there are more than~$n + 1$ sensors, and does not give a
    ``linear'' formula such as~\eqref{eqTxA} in this case. He states
    that of the two solutions of his quadratic equation, only one will
    satisfy~\eqref{eqBasic}. As already remarked by
    \mycite{Abel:Chaffee:1991}, this is not necessarily true.
A similar issue arises with the computational commutative algebra techniques used in \mycite{awange2002algebraic}, which  yield algebraic formulations 
involving solving one polynomial of degree two.

  \item \citename{Bancroft:1985} assumes that the matrix
    \begin{equation} \label{eqBancroft}%
      \begin{pmatrix}
        \ve a_1^T & t_1 \\
        \vdots & \vdots \\
        \ve a_m^T & t_m
      \end{pmatrix} \in \RR^{m \times (n + 1)}
    \end{equation}
    (called ``$A$'' in~[\citenumber{Bancroft:1985}]) has
    rank~$n+1$. (In fact, he works in dimension $n = 3$.) But, as
    already remarked by \mycite{Chaffee:Abel:1994}, even a coordinate
    change that makes~$\ve a_1$ become~$\ve 0$ and~$t_1$ become~$0$
    results in this matrix failing to have full rank. (Also see
    \cref{exCone} in this paper.) On the other hand, our condition
    that the matrix $B$ from~\eqref{eqB} have rank $n + 1$ seems more
    natural and is coordinate-independent. 
  \remend
  \end{itemize}
  \renewcommand{\remend}{}
\end{rem}

\section{Interlude: quadrics with a focus} \label{sQuadric}%

Our next goal is to provide a geometric characterization of
when our fundamental equations~\eqref{eqBasic} have a unique solution. For this purpose, we first need to deal with some facts related to quadrics. Although this is just elementary
geometry, we were unable to find a suitable reference, except for dimension three (see \mycite{zwillinger2018crc}). We start by recalling the focus-directrix property.

A point set $Q \subseteq \RR^n$, with $n \ge 2$, is said to satisfy
the \df{focus-directrix property} if it is given as
\begin{equation} \label{eqQ0}%
  Q = \bigl\{ P \in \RR^n \ \bigr| \ \lVert P - F\rVert = e \cdot
  d(H,P)\bigr\},
\end{equation}
where $F \in \RR^n$ is a point (the \df{focus}),
$H \subset \RR^n$ is an affine hyperplane (the \df{directrix}),
$e \in \RR_{\ge 0}$ is a number (the \df{eccentricity}), and
$d(H,P)$ denotes the distance between $H$ and $P$. We make the
additional hypothesis that $Q$ contains non-collinear
points. Extending the concept, we also regard the $(n-1)$-sphere with
any positive radius around $F$ as satisfying the focus-directrix
property with $e = 0$, thinking of a directrix at infinity. (For
shortness, we will from now on just say {\em sphere} for the
$(n-1)$-sphere.)  

If $e > 0$, it is convenient to choose a normal vector~$\ve u$ on $H$
with length $\lVert\ve u\rVert = e$. Then $H$ is given as
$H = \bigl\{ P \in \RR^n \mid \langle\ve u,P\rangle = \alpha\bigr\}$
for some $\alpha \in \RR$. With this, we obtain
\begin{equation} \label{eqQ1}%
  Q = \bigl\{ P \in \RR^n \ \bigr| \ \lVert P - F\rVert = |\langle\ve
  u,P\rangle - \alpha|\bigr\}.
\end{equation}
It is helpful that for $\ve u = \ve 0$ this defines the sphere with
radius~$|\alpha|$ around~$F$, so by including the possibility
$\ve u = \ve 0$ we need not consider the sphere as a separate case any
longer. $Q$~splits into two subsets $Q_+$ and $Q_-$ with
\begin{equation} \label{eqPlusMinus}%
  Q_\pm := \bigl\{ P \in \RR^n \ \bigr| \ \lVert P - F\rVert =
  \pm\bigl(\langle\ve u,P\rangle - \alpha\bigr)\bigr\},
\end{equation}
the intersections of $Q$ with the two closed half-spaces defined by
the hyperplane $H$. However, $Q_+$ and $Q_-$ can only be both nonempty
if $Q$ is disconnected or if $F \in Q$. Each of $Q_+$ and $Q_-$, if
nonempty, is called a \df{sheet} of $Q$. If $F \in Q$ it is contained
in both sheets. There is nothing intrinsically positive or negative
about the sheets, since they can be swapped by substituting~$\ve u$
by~$-\ve u$ and~$\alpha$ by~$-\alpha$. Comparing the definition of
$Q_+$ with~\eqref{eqUAAlpha} already shows how intimately the
focus-directrix property is connected to the topic of this paper.

It is also convenient to choose a unit vector
$\tilde{\ve u} \in \RR^n$ with $\ve u = e \tilde{\ve u}$, so
$\tilde{\ve u} = e^{-1} \ve u$ if $e > 0$, and $\tilde{\ve u}$
arbitrary if~$e = 0$. Apart from the eccentricity, $Q$ has another
parameter, which is obtained by intersecting $Q$ with the affine plane
through $F$ with normal vector~$\tilde{\ve u}$. Using~\eqref{eqQ1}, we
find the intersection to be
\[
  \bigl\{F + \ve v \ \bigr| \ \ve v \in \RR^n, \langle\tilde{\ve
    u},\ve v\rangle = 0, \lVert\ve v\rVert = |\langle\ve u,F\rangle -
  \alpha|\bigr\},
\]
so it is an $(n-2)$-sphere around $F$ with radius $|l|$, where we have
set
\begin{equation} \label{eqSemilatus}%
  l := \langle\ve u,F\rangle - \alpha.
\end{equation}
The radius~$|l|$ is called the \df{semilatus rectum}, and is the other
numerical parameter of $Q$, complementing the
eccentricity. \cref{fQuadric} contains an illustration of two
instances of the focus-directrix property that occur in dimension two.

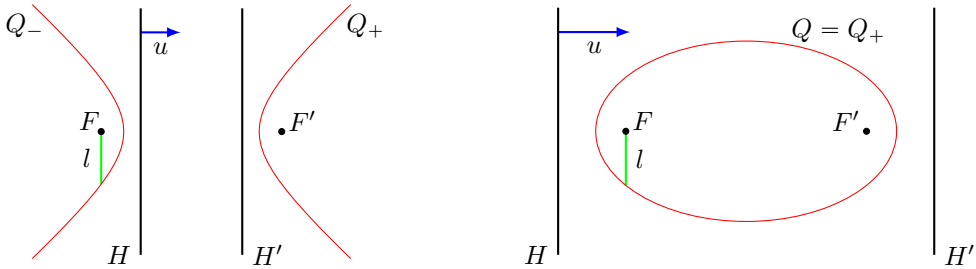
\begin{figure}[htbp]
  \centering
  \begin{tikzpicture}[scale=0.4]
    \draw[thick,green] (-3,0)--(-3,-1.75)
    node[midway,left,black]{$l$};%
    \tikzset{>=latex}; 
    \draw[->,thick,blue] (-1.6875,3.3)--(4/3-1.6875,3.3)
    node[midway,below,black]{$u$};%
    \pgfmathsetmacro{\a}{sqrt(318.9375/81)}%
    \pgfmathsetmacro{\b}{sqrt(318.9375/63)}%
    \draw[red] plot[domain=-1.5:1.5] ({{\b*cosh(\x)},\a*sinh(\x)})
    node[below,xshift=2mm,black]{$Q_+$};%
    \draw[red] plot[domain=-1.5:1.5] ({-\b*cosh(\x)},{\a*sinh(\x)})
    node[below,xshift=-1mm,black]{$Q_-$};%
    \filldraw (3,0) circle (1mm)
    node[right,yshift=4,xshift=-1]{$F'$};%
    \filldraw (-3,0) circle (1mm) node[left,yshift=4,xshift=3]{$F$};%
    \draw[thick] (1.6875,-4.1)--(1.6875,4.1) node[at
    start,right]{$H'$};%
    \draw[thick] (-1.6875,-4.1)--(-1.6875,4.1) node[at
    start,left]{$H$};%
  \end{tikzpicture}%
  \hspace{15mm} %
  \begin{tikzpicture}[scale=1.2]
    \draw[thick,green] (-4/3,0)--(-4/3,-3/5)
    node[midway,right,black]{$l$};%
    \tikzset{>=latex}; 
    \draw[->,thick,blue] (-25/12,1.1)--(4/5-25/12,1.1)
    node[midway,below,black]{$u$};%
    \draw[red] (0,0) ellipse (5/3 and 1)
    node[yshift=38,xshift=35,black]{$Q = Q_+$};%
    \filldraw (4/3,0) circle (0.33mm)
    node[left,yshift=4,xshift=1]{$F'$};%
    \filldraw (-4/3,0) circle (0.33mm)
    node[right,yshift=4,xshift=-1]{$F$};%
    \draw[thick] (25/12,-1.37)--(25/12,1.37) node[at
    start,right]{$H'$};%
    \draw[thick] (-25/12,-1.37)--(-25/12,1.37) node[at
    start,left]{$H$};%
  \end{tikzpicture}
  
  \caption{Hyperbolae and ellipses have two foci $F$ and $F'$ with
    corresponding directices $H$ and $H'$. Here we have $e = 4/3$ and
    $|l| = 7/4$ for the hyperbola, and $e = 4/5$ and $|l| = 3/5$ for
    the ellipse.}
  \label{fQuadric}
\end{figure}

What do point sets $Q$ with the focus-directrix property look like?
Setting $P' := P - F$, the defining equation~\eqref{eqQ1} for $Q$
becomes $\lVert P'\rVert = \bigl| \langle\ve u,P'\rangle + l\bigr|$,
and squaring both sides yields the equivalent equation
\begin{equation} \label{eqQ2}%
  \lVert P'\rVert^2 - \langle\ve u,P'\rangle^2 - 2 l \langle\ve
  u,P'\rangle - l^2 = 0.
\end{equation}
This shows that $Q$ is a quadric. From now on we will speak of a \df{quadric with a focus},
or specifically a \df{quadric with focus $F$}, to mean a point set
satisfying the focus-directrix property. Of course this need not mean
that $F$ is the {\em only} focus (see \cref{fQuadric}). (In fact, the quadrics with focus constitute a subclass of the quadrics of revolution). 

To find out which types of quadrics have a focus, write
$P'_\perp := P' - \langle\tilde{\ve u},P'\rangle \cdot \tilde{\ve u}$
for the part of $P'$ that is perpendicular to $\tilde{\ve u}$. With
this, \eqref{eqQ2} becomes
\begin{equation} \label{eqQ3}%
  (1 - e^2) \langle\tilde{\ve u},P'\rangle^2 + \lVert P'_\perp\rVert^2
  -2 e l \langle\tilde{\ve u},P'\rangle - l^2 = 0.
\end{equation}
First assume $e \ne 1$ and set
$P'' := P' - \frac{e l}{1 - e^2} \tilde{\ve u}$. Then
$\langle\tilde{\ve u},P''\rangle = \langle\tilde{\ve u},P'\rangle -
\frac{e l}{1 - e^2}$, so
\[
  (1 - e^2)\langle\tilde{\ve u},P''\rangle^2 = (1 - e^2)
  \langle\tilde{\ve u},P'\rangle^2 - 2 e l \langle\tilde{\ve
    u},P'\rangle + \frac{e^2 l^2}{1 - e^2} \! = \! (1 - e^2)
  \langle\tilde{\ve u},P'\rangle^2 - 2 e l \langle\tilde{\ve
    u},P'\rangle - l^2 + \frac{l^2}{1- e^2}.
\]
Since $P''_\perp = P'_\perp$, multiplying~\eqref{eqQ3} by $1 - e^2$,
yields the equivalent equation
\begin{equation} \label{eqQ4}%
  \left((1 - e^2)\langle\tilde{\ve u},P''\rangle\right)^2 + \sgn(1 -
  e^2) \left(\sqrt{|1 - e^2|} \cdot \lVert P''_\perp\rVert\right)^2 -
  l^2 = 0.
\end{equation}
Extending~$\tilde{\ve u}$ by vectors $\ve v_1 \upto \ve v_{n-1}$ to an
orthonormal basis, we obtain cartesian coordinates
$z = \langle\tilde{\ve u},P''\rangle$ and
$x_i = \langle\ve v_i,P''\rangle$. With these~\eqref{eqQ4} becomes
\begin{equation} \label{eqQ5}%
  \left((1 - e^2) z\right)^2 + \sgn(1 - e^2) \sum_{i=1}^{n-1}
  \left(\sqrt{|1 - e^2|} \cdot x_i\right)^2 - l^2 = 0.
\end{equation}
Before we go on, let us consider the remaining case $e = 1$. Setting
$P'' := \sgn(l) \cdot \bigl(P' + \frac{l}{2} \tilde{\ve u}\bigr)$
gives
$2 |l| \cdot \langle\tilde{\ve u},P''\rangle = 2 l\langle\tilde{\ve
  u},P' + \frac{l}{2} \tilde{\ve u}\rangle = 2 e l\langle\tilde{\ve
  u},P'\rangle + l^2$, so~\eqref{eqQ3} becomes
\begin{equation} \label{eqQ6}%
  \lVert P''_\perp\rVert^2 - 2 |l| \cdot \langle\tilde{\ve
    u},P''\rangle = 0, \quad \text{or} \quad - 2 |l| z +
  \sum_{i=1}^{n-1} x_i^2 = 0,
\end{equation}
using coordinates as above.

Now we can distinguish four cases.
\begin{enumerate}[label= \bf Case \arabic*:, ref= \arabic*, align=left]
\item \label{case1} $e < 1$. This implies $l \ne 0$ since
  otherwise~\eqref{eqQ5} has $z = x_1 = \cdots = x_{n-1} = 0$ as the
  only solution, contradicting the hypothesis about non-collinear
  points in $Q$. With
  $a := \frac{|l|}{1 - e^2} \ge b := \frac{|l|}{\sqrt{1 - e^2}}$,
  from~\eqref{eqQ5} we obtain the standard equation, shown in the
  first row of \cref{taQuadrics},
  of a \df{prolate spheroid} ($0 < e < 1$, $a > b$) or a sphere ($e = 0$,
  $a = b$). It has one sheet and one (in case of a sphere) or two (in
  case of a spheroid) foci.
\item \label{case2} $e > 1$ and $l \ne 0$. Setting
  $a := \frac{|l|}{e^2 - 1}$ and $b := \frac{|l|}{\sqrt{e^2 - 1}}$, we
  get the second equation shown in \cref{taQuadrics},
  which defines a \df{hyperboloid of revolution of two sheets}. The sheets
  are given by $z > 0$ and $z < 0$. Again, there are two foci.
\item \label{case3} $e > 1$ and $l = 0$. Dividing~\eqref{eqQ5} by
  $(e^2 - 1)^2$ and setting $a := \sqrt{e^2 - 1}$ yields the third
  equation shown in \cref{taQuadrics}, which
  gives a \df{cone of revolution}. It has one focus (equal to the vertex)
  and two sheets, which are given by $z \ge 0$ and $z \le 0$.
\item \label{case4} $e = 1$. This implies $l \ne 0$ since
  otherwise~\eqref{eqQ6} would imply that all points are
  collinear. With $a := \sqrt{|l|}$, we obtain the last equation in
  \cref{taQuadrics}.
  This defines a \df{paraboloid of revolution}, which has one focus and one
  sheet.
\end{enumerate}

With this, we have classified all quadrics with a focus. The results
are summarized in \cref{taQuadrics}.

\begin{table}[ht]
\[
  \renewcommand{\arraystretch}{1.6}
  \begin{array}{|c|c|c|c|c|}
    \hline
    \text{\bf case} & \text{\bf conditions} & \text{\bf equation} &
    \text{\bf name} & \text{\bf number of sheets}
    \\
    \hline
    \ref{case1} & e < 1, |l| > 0 & \frac{z^2}{a^2} + \sum_{i=1}^{n-1}
    \frac{x_i^2}{b^2} = 1 & \parbox[c][9mm]{40mm}{\centering spheroid
    ($e > 0$, $a > b$) \\ or sphere ($e = 0$, $a = b$)} & 1 \\
    \hline
    \ref{case2} & e > 1, |l| > 0 & \frac{z^2}{a^2} - \sum_{i=1}^{n-1}
    \frac{x_i^2}{b^2} = 1 & \parbox[c][9mm]{40mm}{\centering
    hyperboloid of revolution of two sheets} & 2 \\
    \hline
    \ref{case3} & e > 1, |l| = 0 & z^2 = \sum_{i=1}^{n-1} \frac{x_i^2}{a^2} &
    \text{cone of revolution} & 2 \\
    \hline
    \ref{case4} & e = 1, |l| > 0 & 2 z = \sum_{i=1}^{n-1} \frac{x_i^2}{a^2} &
    \text{paraboloid of revolution} & 1
    \\
    \hline
  \end{array}
\]
\caption{Classification of quadrics with a focus in any dimension} \label{taQuadrics}
\end{table}

Of course in 2D a spheroid is called an ellipse, a hyperboloid is known as a hyperbola, a cone
becomes a pair of intersecting lines, and a paraboloid goes by the
name parabola.

\section{The geometry of nonuniqueness} \label{sGeometry}%

Given the positions~$\ve a_i$ and the times~$t_i$, one can run \cref{pGPS}, which will reveal whether
there is a unique solution or not. But to gain some understanding of
the uniqueness question---and to prove further results with this
understanding---we wish to give a geometric characterization of
when~\eqref{eqBasic} has a unique solution. This is contained in
\cref{tUnique}. In fact, the theorem
does more than that. It gives a geometric interpretation for each case
that may occur in the solution procedure from \cref{sSolution}. Recall
that \cref{pGPS} preferentially uses~\eqref{eqTxA}, which is
available if the matrix $A$, defined in~\eqref{eqA}, has
rank~$n+2$. Only if $\rank(A) < n + 2$ does the procedure resort to
solving~\eqref{eqRank4}, which involves a quadratic equation, and
which is equivalent to~\eqref{eqAbsolute}. In the expected case of two
solutions, the inequalities~\eqref{eqIneq} are then used to try to
discard one of the solutions. If this is not possible, this means
that~\eqref{eqBasic} has two solutions. We state the result in purely
mathematical terms.

\begin{theorem}[Geometric characterization] \label{tUnique}%
  Assume the situation of \cref{Setup}. Then~\eqref{eqBasic} has two
  solutions if and only if the points~$\ve a_1 \upto \ve a_m$ lie on
  the same sheet of a hyperboloid of revolution with focus~$\ve
  x$. Otherwise~\eqref{eqBasic} has exactly one solution, which is
  $\left(\begin{smallmatrix} t \\ \ve x\end{smallmatrix}\right)$. More
  precisely, we have:
  \begin{enumerate}[label=(\alph*)]
  \item \label{tUniqueA} The matrix $A \in \RR^{m \times (n+2)}$,
    given by~\eqref{eqA}, has rank $< n + 2$ if and only if
    the~$\ve a_i$ lie on the same sheet of a quadric $Q$ with
    focus~$\ve x$. In this case, there exists only one such $Q$, and
    we have the following facts~\ref{tUniqueB}--\ref{tUniqueF}:
  \item \label{tUniqueB} The eccentricity of $Q$ is given by
    $e = \lVert\ve u\rVert$, with $\ve u \in \RR^n$ defined
    by~\eqref{eqUV}. If $e \ne 1$, then the discriminant of the
    quadratic equation in~\eqref{eqRank4} is $4 l^2$, with~$|l|$ the
    semilatus rectum of $Q$.
  \item \label{tUniqueC} $Q$ is a spheroid or a sphere if and only
    if~\eqref{eqAbsolute}, and therefore also~\eqref{eqRank4}, has two
    solutions, but~\eqref{eqBasic} has only one. So in this case the
    inequalities~\eqref{eqIneq} allow discarding one solution
    $\left(\begin{smallmatrix} t' \\ \ve x'\end{smallmatrix}\right)$
    of~\eqref{eqAbsolute}. Moreover, $\ve x'$ is a focus of $Q$, and
    $\ve x' = \ve x$ if and only if $Q$ is a sphere. An example is
    shown in \cref{fSheet}.
  \item \label{tUniqueD} $Q$ is a hyperboloid of revolution if and
    only if~\eqref{eqBasic}, and therefore also~\eqref{eqAbsolute}
    and~\eqref{eqRank4}, have two solutions. For the solution
    $\left(\begin{smallmatrix} t' \\
        \ve x'\end{smallmatrix}\right) \ne \left(\begin{smallmatrix} t
        \\ \ve x\end{smallmatrix}\right)$ of~\eqref{eqBasic} we have
    that $\ve x'$ is the other focus of $Q$, and $|t - t'|$ is equal
    to the distance between the two sheets of $Q$. An example is shown
    in \cref{fFive}.
  \item \label{tUniqueE} $Q$ is a cone of revolution if and only if
    the quadratic equation in~\eqref{eqRank4} has discriminant
    zero. In this case~\eqref{eqRank4}, and therefore
    also~\eqref{eqAbsolute} and~\eqref{eqBasic}, have exactly one
    solution. Examples are shown in \cref{fCone}.
  \item \label{tUniqueF} $Q$ is a paraboloid of revolution if and
    only if the coefficient of~$t^2$ in the quadratic equation
    in~\eqref{eqRank4} is zero. So in this case the equation is in
    fact linear, thus~\eqref{eqRank4}, and therefore
    also~\eqref{eqAbsolute} and~\eqref{eqBasic}, have exactly one
    solution.
  \end{enumerate}
\end{theorem}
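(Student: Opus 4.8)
The plan is to normalize to $t=0$ and then read every assertion off the quadric classification in \cref{taQuadrics}. As in the proof of \cref{tLateration}\ref{tLaterationC}, subtracting $t$ from $t$ and from all $t_i$ changes neither the number of solutions of \eqref{eqBasic}/\eqref{eqAbsolute} nor the rank of $A$, so I may assume $t=0$ and invoke \cref{lSide}. By \eqref{eqVBeta} we have $\ve v=\ve x$ and $\beta=\lVert\ve x\rVert^2$, so the quadratic in \eqref{eqRank4} collapses to $(\lVert\ve u\rVert^2-1)\,(t')^2+2(\langle\ve u,\ve x\rangle-\alpha)\,t'=0$. Its constant term vanishes; hence $t'=0$ (which recovers the known solution $\left(\begin{smallmatrix}0\\\ve x\end{smallmatrix}\right)$) is always a root, the discriminant equals $4l^2$ with $l:=\langle\ve u,\ve x\rangle-\alpha$ as in \eqref{eqSemilatus}, and the second root is $t'=2l/(1-\lVert\ve u\rVert^2)$ whenever $\lVert\ve u\rVert\ne1$.

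For part \ref{tUniqueA} I would match \cref{lSide} against the focus--directrix description: comparing \eqref{eqUAAlpha} with the sheet $Q_+$ in \eqref{eqPlusMinus} shows that $\rank(A)<n+2$ holds exactly when every $\ve a_i$ satisfies $\lVert\ve a_i-\ve x\rVert=\langle\ve u,\ve a_i\rangle-\alpha$, i.e.\ lies on one sheet of the quadric $Q$ with focus $\ve x$, directrix $\{\langle\ve u,P\rangle=\alpha\}$ and eccentricity $e=\lVert\ve u\rVert$ (the $\ve a_i$ are non-collinear, so the standing hypothesis on $Q$ is met). Uniqueness of $Q$ is immediate from the uniqueness clause of \cref{lSide}, since the pair $(\ve u,\alpha)$ determines $Q$. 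This already yields \ref{tUniqueB}: $e=\lVert\ve u\rVert$, and the discriminant computed above is $4l^2$.

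Next I would walk through the four rows of \cref{taQuadrics} by the value of $e$ and the vanishing of $l$, using that $e<1$ and $e=1$ both force $l\ne0$ (Cases~\ref{case1} and \ref{case4}). The paraboloid case $e=1$ is exactly ``coefficient of $(t')^2$ is zero'', leaving the linear equation $2l\,t'=0$ with the single root $t'=0$, giving \ref{tUniqueF}. The cone case $e>1,\ l=0$ is exactly ``discriminant $4l^2=0$'' (indeed $l=0$ forces $e\ne1$, hence $e>1$); then $t'=0$ is a double root, giving \ref{tUniqueE}. In the remaining cases $l\ne0$, so the roots $t'=0$ and $t'=2l/(1-e^2)$ are distinct and \eqref{eqRank4}, hence \eqref{eqAbsolute} by \cref{tLaterationB}, has two solutions. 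For both I would identify the second position $\ve x'=t'\ve u+\ve x=\ve x+\frac{2el}{1-e^2}\tilde{\ve u}$ with the reflection of $\ve x$ in the center $\ve x+\frac{el}{1-e^2}\tilde{\ve u}$ found in \cref{sQuadric}, i.e.\ with the second focus of $Q$; in particular $\ve x'=\ve x$ precisely when $e=0$, the sphere.

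The \emph{crux} is then \ref{tUniqueC} versus \ref{tUniqueD}: whether the inequalities \eqref{eqIneq} discard the second solution. Since that solution already satisfies \eqref{eqAbsolute}, it satisfies \eqref{eqBasic} iff $t_i\ge t'$ for all $i$, where $t_i=\lVert\ve a_i-\ve x\rVert$. The real work is to pin down the sign of $l$ and to bound the focal radii $t_i$ via the semi-axes of \cref{taQuadrics}. For a spheroid ($e<1$) the focus $\ve x$ lies on the same side of its directrix as the single sheet, forcing $l>0$ and $t'=2a$; since every focal radius satisfies $t_i\le a+c<2a$ (with $c$ the center--focus distance), the second solution violates \eqref{eqIneq} for every $i$, so \eqref{eqBasic} keeps only $\left(\begin{smallmatrix}0\\\ve x\end{smallmatrix}\right)$, proving \ref{tUniqueC}. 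For a hyperboloid ($e>1$) I expect $t_i\ge t'$ throughout---either $t'<0\le t_i$ when $\ve x$ is the near focus, or $t_i\ge a+c>2a=t'$ when $\ve x$ is the far focus---so the second solution is genuine and $|t-t'|=|t'|=2a$ equals the distance $2a$ between the two sheets, proving \ref{tUniqueD}. Assembling the cases gives the headline statement: \eqref{eqBasic} has two solutions exactly in the hyperboloid case, and otherwise the unique solution is $\left(\begin{smallmatrix}t\\\ve x\end{smallmatrix}\right)$ (the case $\rank(A)=n+2$ being covered by \cref{tLaterationA}). I expect this closing sign-and-bound analysis, where the explicit geometry of \cref{sQuadric} is genuinely needed, to be the only delicate part.
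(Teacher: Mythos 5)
Your proposal is correct and follows the paper's overall skeleton: normalize to $t=0$, invoke \cref{lSide} so that the quadratic in~\eqref{eqRank4} becomes $(\lVert\ve u\rVert^2-1)(t')^2+2l\,t'=0$ with $l=\langle\ve u,\ve x\rangle-\alpha$, read \cref{tUniqueA}, \cref{tUniqueB}, \cref{tUniqueE}, \cref{tUniqueF} off \cref{lSide} and \cref{taQuadrics}, and settle the remaining two cases. Where you genuinely diverge is in the spheroid-versus-hyperboloid dichotomy, which is the heart of the matter. The paper works synthetically: it introduces the second directrix $H'$, derives $\lVert P-\ve x\rVert\pm\lVert P-\ve x'\rVert=e\cdot d(H,H')$ according as $Q$ is a hyperboloid or a spheroid, and then reads off from the sign in~\eqref{eqAxs}, together with $\lVert\ve a_i-\ve x'\rVert>0$, whether the second solution of~\eqref{eqAbsolute} survives as a solution of~\eqref{eqBasic}; the sphere is treated separately. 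You instead solve the quadratic explicitly, identify $\ve x'=t'\ve u+\ve x$ with the reflection of $\ve x$ in the center (hence the second focus), and test~\eqref{eqIneq} directly via focal-radius bounds: $t_i\le a(1+e)<2a=t'$ for the spheroid, and $t_i\ge 0>t'$ or $t_i\ge a(1+e)>2a=t'$ for the two hyperboloid subcases, with the sign of $l$ pinned down via the latus-rectum points on $Q_+$. Both routes are sound; the paper's yields the sum/difference-of-distances picture (and the ``remarkable dichotomy'' remark) with no case split on which branch carries the $\ve a_i$, while yours stays entirely inside the algebra of \cref{sQuadric} and makes the value $|t-t'|=2a$ and the location of $\ve x'$ immediate from the root formula. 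If you write it up, do spell out why $l>0$ in the spheroid case (one sheet forces $Q=Q_+$, so the latus-rectum endpoints give $l=|l|>0$) and the standard extremal focal-radius facts you are quoting; these are the only places where your sketch leans on unproved geometry.
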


\begin{proof}
  Before dealing with \crefrange{tUniqueA}{tUniqueF}, we show that we
  may assume that $t = 0$. In fact, subtracting any real number
  from~$t$ (as introduced in \cref{Setup}) and consequently also from
  the~$t_i$ leaves the following quantities unchanged:
  \begin{itemize}
  \item the~$\ve a_i$ and~$\ve x$, and therefore anything that can be
    said about their geometry;
  \item the vector $\ve x'$ in a solution
    $\left(\begin{smallmatrix} t' \\ \ve x'\end{smallmatrix}\right)$
    of~\eqref{eqBasic} or~\eqref{eqAbsolute}, and so by
    \cref{tLateration} also of~\eqref{eqRank4};
  \item the number of solutions of~\eqref{eqBasic}
    or~\eqref{eqAbsolute}, and so by \cref{tLateration} also
    of~\eqref{eqRank4};
  \item the truth or falsehood of~\eqref{eqIneq};
  \item the rank of $A$, since subtracting the same number from
    all~$t_i$ amounts to subtracting a multiple of the last column of
    $A$ from the first column of $A$;
  \item the vector~$\ve u$, since $B^+$, being a left inverse of $B$,
    must have its upper~$n$ rows with coefficient sum equal to zero,
    so~$\ve u$, defined by~\eqref{eqUV}, does not change when the same
    number is subtracted from all~$t_i$;
  \item the discriminant of the quadratic equation in~\eqref{eqRank4},
    since the discriminant is formed from the leading coefficient and
    the difference between its zeros, which both remain unchanged.
  \end{itemize}
  Since this list encompasses all objects about which the theorem
  makes assertions, we may, without loss of generality, subtract~$t$
  itself from~$t$ and from the~$t_i$, which amounts to assuming
  $t = 0$, so the equation~\eqref{eqSetup} defining the~$t_i$ becomes
  \begin{equation} \label{eqTi}%
    t_i = \lVert\ve a_i - \ve x\rVert.
  \end{equation}
  After this preparation, let us turn our attention to
  \cref{tUniqueA}. If $\rank(A) < n + 2$, then \cref{lSide} gives us
  the equation~\eqref{eqUAAlpha}, which by~\eqref{eqPlusMinus} means
  that the~$\ve a_i$ lie on one the sheet $Q_+$ of the quadric with
  focus~$\ve x$, given by the vector~$\ve u$ and the
  scalar~$\alpha$. Conversely, assume that the $\ve a_i$ lie on the
  same sheet $Q'_+$ or $Q'_-$ of some quadric $Q'$ with focus~$\ve x$,
  given by a vector~$\ve u'$ and a scalar~$\alpha'$. By
  replacing~$\ve u'$ and~$\alpha'$ by their negatives if necessary, we
  may assume $\ve a_i \in Q'_+$. This means that~\eqref{eqUAAlpha},
  with~$\ve u$ and~$\alpha$ replaced by~$\ve u'$ and~$\alpha'$,
  holds. By \cref{lSide}, this implies $\ve u' = \ve u$
  and~$\alpha' = \alpha$, so $Q' = Q$, and with this~\eqref{eqUAAlpha}
  implies $\rank(A) < r + 2$. This concludes the proof
  of~\ref{tUniqueA}, and we know that the quadric $Q$ is given
  by~\eqref{eqQ1}, with $F$ replaced by~$\ve x$, and~$\ve u$
  and~$\alpha$ given by~\eqref{eqUV}.

  It follows directly that the eccentricity of $Q$ is
  $e = \lVert\ve u\rVert$. Moreover, using~\eqref{eqVBeta}, we find
  that the quadratic equation in~\eqref{eqRank4} becomes
  $\bigl(\lVert\ve u\rVert^2 -1\bigr) t^2 + 2\bigl(\langle\ve u,\ve
  x\rangle - \alpha\bigr) t = 0$. If $e \ne 1$, its discriminant is
  $4 (\langle\ve u,\ve x\rangle - \alpha\bigr)^2$, which
  by~\eqref{eqSemilatus} equals $4 l^2$, where $|l|$ is the semilatus
  rectum of $Q$. So we have proved \cref{tUniqueB}. But now by looking
  at \cref{taQuadrics} we see that $Q$ is a cone of revolution if and
  only if the discriminant of the quadratic equation is zero, which
  proves \cref{tUniqueE}; and that $Q$ is a paraboloid of revolution
  if and only if the coefficient of~$t^2$ vanishes, which proves
  \cref{tUniqueF}.

  Now assume that $Q$ is a spheroid or a hyperboloid of revolution
  (the case of a sphere is easier and will be dealt with later). The
  standard equations in \cref{taQuadrics} show that by mapping the
  coordinate~$z$ to~$-z$ we obtain another focus~$\ve x' \ne \ve x$
  and a corresponding directrix $H'$ of $Q$, with $H'$ parallel to
  $H$. If $Q$ is a spheroid, then all points $P \in Q$ lie between $H$
  and $H'$, and by~\eqref{eqQ0} we obtain
  \begin{equation} \label{eqPlus}%
    \lVert P - \ve x\rVert + \lVert P - \ve x'\rVert = e \cdot
    \Bigl(d(H,P) + d(H',P)\Bigr) = e \cdot d(H,H') =: t'.
  \end{equation}
  On the other hand, if $Q$ is a hyperboloid of revolution, $H$ and
  $H'$ lie between $Q_+$ and $Q_-$, so for $P \in Q_+$ we obtain
  \begin{equation} \label{eqMinus}%
    \lVert P - \ve x\rVert - \lVert P - \ve x'\rVert = e \cdot
    \Bigl(d(H,P) - d(H',P)\Bigr) = e \cdot \bigl(\pm d(H,H')\bigr) =:
    t',
  \end{equation}
  where the sign (i.e., the meaning of ``$\pm$'') depends on whether
  $H$ or $H'$ lies closer to $Q_+$. In particular, the appropriate one
  of Equations~\eqref{eqPlus} or~\eqref{eqMinus} holds for
  $P = \ve a_i$, and we get
  \begin{equation} \label{eqAxs}%
    \lVert\ve a_i - \ve x'\rVert = \pm\bigl(\lVert\ve a_i - \ve
    x\rVert - t'\bigr) \underset{\eqref{eqTi}}{=} \pm (t_i - t') \quad
    (i = 1 \upto m),
  \end{equation}
  where this time ``$\pm$'' means~''$+$'' in the case of a
  hyperboloid, and ``$-$'' in the case of a spheroid. In both cases,
  \eqref{eqAxs} means that
  $\left(\begin{smallmatrix} t' \\ \ve x'\end{smallmatrix}\right)$ is
  a solution of~\eqref{eqAbsolute}, and since $\ve x \ne \ve x'$, it
  is different from
  $\left(\begin{smallmatrix} t \\ \ve x\end{smallmatrix}\right)$. But
  since $\lVert\ve a_i - \ve x'\rVert > 0$ (because no focus of $Q$ is
  a point of $Q$ in the cases considered here), \eqref{eqAxs} also
  shows that
  $\left(\begin{smallmatrix} t' \\ \ve x'\end{smallmatrix}\right)$
  satisfies~\eqref{eqBasic} if and only if $Q$ is a hyperboloid. (It
  is a remarkable dichotomy that if $t_i - t < 0$ for one~$i$, then
  this is automatically true for all~$i$.)

  Finally, we treat the case that $Q$ is a sphere, so all
  $\lVert\ve a_i - \ve x\rVert$ are equal. Then
  \[
    \lVert\ve a_i - \ve x\rVert \underset{\eqref{eqTi}}{=} t_i = -
    \bigl(t_i - 2 \lVert\ve a_i - \ve x\rVert\bigr),
  \]
  so $\left(\begin{smallmatrix} t' \\ \ve x'\end{smallmatrix}\right)$,
  with $t' = 2 \lVert\ve a_i - \ve x\rVert$ and $\ve x' = \ve x$, is a
  solution of~\eqref{eqAbsolute} but not of~\eqref{eqBasic}.

  To prove the statement on $|t - t'|$ in \cref{tUniqueD} we use
  \cref{tUniqueB}. This tells us that the quadratic equation with
  solutions~$t$ and~$t'$ has discriminant $4 l^2$ and leading
  coefficient $e^2 - 1$. So the solutions have distance
  $2 |l|/|e^2 - 1|$. But a look at Cases~\ref{case1} and~\ref{case2}
  on page~\pageref{case1} shows that this equals $2 a$, with~$a$ the
  semi-major axis of $Q$. So indeed $|t - t'|$ is the distance between
  the vertices of $Q$. Notice that this also holds in the case of
  \cref{tUniqueC}, where for a sphere one can choose any pair of
  antipodes as vertices.

  We have now considered every case of what $Q$ can be, and verified
  the predictions of the theorem for all of them. In particular, the
  missing \cref{tUniqueC,tUniqueD}, including the ``if and only if''
  statements, have also been proved.
\end{proof}

\begin{rem} \label{rSCA}%
  Criteria for the unique solvability of special cases
  of~\eqref{eqBasic} were already given as early
  as~\citeyear{Schmidt:1972} by \mycite{Schmidt:1972} and later by
  \mycite{Chaffee:Abel:1994}. In fact, \citename{Schmidt:1972}
  considers the same quadric $Q$ as our \cref{tUnique}, and
  essentially makes the statements in \cref{tUniqueC,tUniqueD} of the
  theorem. Moreover, in the notation
  of~[\citenumber{Chaffee:Abel:1994}], \citename{Chaffee:Abel:1994}'s
  criterion reads ``$\langle a^\perp,a^\perp\rangle \le 0$'', and
  converting it to our situation and notation gives
  ``$\lVert\ve u\rVert \le 1$'', so there is a close relationship
  to~[\citenumber{Schmidt:1972}] and to our \cref{tUnique}. The new
  contributions made by \cref{tUnique}, in comparison
  to~[\citenumber{Schmidt:1972}] and~[\citenumber{Chaffee:Abel:1994}],
  are:
  \begin{itemize}
  \item \citename{Schmidt:1972}'s result is restricted to the case
    of~$m = 3$ satellites in 
    dimension~$n = 2$, with an excursion to~$m = 4$ and $n =
    3$. \citename{Chaffee:Abel:1994}'s result is restricted to the
    case~$m = 4$ and~$n = 3$. In contrast, \cref{tUnique} covers
    arbitrarily many sensors in any dimension~$n$.
  \item \cref{tUnique} corrects an inaccuracy
    in~[\citenumber{Schmidt:1972}]: in the case that $Q$ is a
    hyperboloid, \citename{Schmidt:1972} does not state that the
    points~$\ve a_i$ need to lie on the same sheet of $Q$. As
    demonstrated in \cref{exSheet} below, this condition is essential even
    for defining $Q$.
  \item \cref{tUnique} corrects a small error in both
    \citename{Schmidt:1972}'s and \citename{Chaffee:Abel:1994}'s
    criteria that occurs if the quadric $Q$ from our theorem is a cone
    of revolution. In fact, this case is missing
    from~[\citenumber{Schmidt:1972}]. Moreover, in the case of a cone
    the system~\eqref{eqBasic} has a unique solution, but
    $\lVert\ve u\rVert > 1$ or, in the notation
    of~[\citenumber{Chaffee:Abel:1994}],
    $\langle a^\perp,a^\perp\rangle >
    0$. So~[\citenumber{Chaffee:Abel:1994}] incorrectly says that the
    solution is not unique. \cref{exCone} below provides more details.
  \item \cref{tUnique} is not only a generalization of the previous
    criteria to arbitrary~$m$ and~$n$, but also a refinement since all
    cases that may occur in the solution procedure are given a precise
    geometric interpretation.
  \item Apart from providing uniqueness criteria, \cref{tUnique} also contains
    quantitative statements. In particular, the quantification of
    $|t - t'|$ in \cref{tUniqueD} may be useful in global positioning applications
    for discarding a solution if something is known about the accuracy
    of the user clock. 
  \item Finally, the proof in~[\citenumber{Schmidt:1972}] relies on an incorrect statement (see \cref{exSheet}), so our result can be considered to provide a proof (and correction) of Schmidt's criterion. \remend
  \end{itemize}
  \renewcommand{\remend}{}
\end{rem}

It is now clear that there are examples of arbitrarily many points~$\ve a_i$ (standing for satellite positions) such that the global positioning problem still has no unique
solution: just place them on one sheet of a hyperboloid of revolution,
and place~$\ve x$ (the GPS device) at its focus. Recall that some
given points in $\RR^n$ are said to be in \df{linear general position}
if any $(k-1)$-dimensional affine subspace of $\RR^n$ contains at
most~$k$ of these points, for $k = 1 \upto n$. There has been some
speculation in the literature~[\citenumber{Abel:Chaffee:1991},
\citenumber{Chaffee:Abel:1994}, \citenumber{Hou:2022}] that for certain numbers of satellites, being
in general linear position may be enough to ensure uniqueness. The
following consequence of \cref{tUnique} puts such hopes to rest.

\begin{cor} \label{cUnique}%
  For every~$m \in \NN$ there exist
  points~$\ve x, \ve a_1 \upto \ve a_m \in \RR^n$ in general linear
  position such that~\eqref{eqBasic} does not have a unique solution.
\end{cor}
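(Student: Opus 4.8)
The plan is to use \cref{tUniqueD} as the engine: it says that~\eqref{eqBasic} fails to have a unique solution exactly when the points~$\ve a_1 \upto \ve a_m$ lie on a single sheet of a hyperboloid of revolution with focus~$\ve x$. Accordingly, I would fix a hyperboloid of revolution $Q \subseteq \RR^n$ of two sheets, which exists for $n \ge 2$ by Case~\ref{case2} on page~\pageref{case2}, take~$\ve x$ to be one of its two foci, and place the~$\ve a_i$ on one sheet $Q_+$ of $Q$. Setting $t = 0$ and $t_i = \lVert\ve a_i - \ve x\rVert$ as in \cref{Setup}, \cref{tUniqueD} then delivers two solutions of~\eqref{eqBasic}. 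Thus everything reduces to the purely geometric problem of choosing the~$\ve a_i$ on $Q_+$ so that the combined set $\{\ve x, \ve a_1 \upto \ve a_m\}$ is in general linear position.

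For the general position part I would first record the reformulation that a finite point set is in general linear position if and only if every subset of size at most~$n+1$ is affinely independent. I would then construct the~$\ve a_i$ greedily on $Q_+$, one at a time, treating the fixed point~$\ve x$ as already placed. Assume $\ve x, \ve a_1 \upto \ve a_{j-1}$ are already in general linear position. Adding a new point $\ve a_j \in Q_+$ can violate this property only by creating an affinely dependent subset containing~$\ve a_j$; since the earlier points are in general position, such a subset must have the form $\{\ve a_j\} \cup T$ with $T \subseteq \{\ve x, \ve a_1 \upto \ve a_{j-1}\}$ affinely independent and $|T| \le n$, and dependence forces $\ve a_j \in \mathrm{aff}(T)$. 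As there are only finitely many such~$T$, and each affine hull $\mathrm{aff}(T)$ is a proper affine subspace of~$\RR^n$ (of dimension $|T| - 1 \le n-1$), the forbidden locus for~$\ve a_j$ is a finite union of proper affine subspaces.

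The step I expect to be the only real obstacle is verifying that this forbidden locus does not cover the sheet $Q_+$. For this I would note that each $\mathrm{aff}(T)$ is contained in a hyperplane $H_T \subsetneq \RR^n$, whereas $Q$ is a nondegenerate quadric whose defining polynomial is irreducible of degree two and hence not contained in any hyperplane; therefore $Q \cap H_T$ has dimension at most~$n-2$, and so does $Q_+ \cap \mathrm{aff}(T) \subseteq Q \cap H_T$. A finite union of sets of dimension at most~$n-2$ has $(n-1)$-dimensional measure zero in the $(n-1)$-manifold $Q_+$, and since $n \ge 2$ we have $\dim Q_+ = n-1 \ge 1$, so $Q_+$ is not exhausted by the forbidden positions. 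Hence a valid choice of~$\ve a_j$ always exists, and iterating the construction $m$ times produces $\ve a_1 \upto \ve a_m \in Q_+$ with $\{\ve x, \ve a_1 \upto \ve a_m\}$ in general linear position.

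Finally I would check the bookkeeping needed to invoke \cref{tUniqueD}, namely that \cref{Setup} applies. When $m \ge n+1$, general linear position guarantees that no $n+1$ of the~$\ve a_i$ lie on a hyperplane, so in particular the~$\ve a_i$ do not all lie on a common affine hyperplane, as \cref{Setup} requires. The remaining small values $m \le n$, which are not covered by \cref{Setup}, are handled directly: there general linear position of $\{\ve x, \ve a_1 \upto \ve a_m\}$ places the~$\ve a_i$ on a common hyperplane $H'$ with $\ve x \notin H'$, and \cref{rLaterationB} then yields a second solution by reflecting~$\ve x$ in~$H'$, so nonuniqueness persists in these cases too. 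The crux throughout is the dimension drop of the previous paragraph, where the nondegeneracy of the hyperboloid—its not lying in any hyperplane—is exactly what is needed.
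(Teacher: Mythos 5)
Your proposal is correct, and its skeleton is the same as the paper's: fix a two-sheeted hyperboloid of revolution $Q$, put $\ve x$ at a focus, construct $\ve a_1 \upto \ve a_m$ on one sheet $Q_+$ by induction so that $\ve x,\ve a_1 \upto \ve a_m$ stay in general linear position, then invoke \cref{tUnique}\ref{tUniqueD} when $m \ge n+1$ and fall back on \cref{rLateration}\ref{rLaterationB} when $m \le n$, exactly as the paper does. The only genuine divergence is how you show that the forbidden locus --- the finite union of affine hulls $\operatorname{aff}(T)$ of subsets $T$ of the already-placed points with $|T| \le n$ --- fails to cover $Q_+$. The paper does this algebraically: it proves $\mathcal I(Q) = (f)$ for $f = x_1^2 - \sum_{i \ge 2} x_i^2 - 1$, forms the product $h = \prod_S h_S h_S^*$ of the linear forms together with their mirror images under $x_1 \mapsto -x_1$, observes that the irreducible quadratic $f$ cannot divide a product of linear forms so $h \notin \mathcal I(Q)$, and then uses the built-in symmetry of $h$ to reflect a point of $Q$ with $h \ne 0$ onto the sheet $Q_+$. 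You instead work directly on $Q_+$ and argue that each $Q_+ \cap H_T$ has dimension at most $n-2$ in the $(n-1)$-manifold $Q_+$, so finitely many of them cannot exhaust it; this dispenses with both the ideal computation and the reflection trick. That route is sound, but the dimension claim is precisely the point that should not be left to a one-line appeal to nondegeneracy (it is the step on which the paper spends its effort): over $\RR$ the quickest airtight version is to write $Q_+$ as the graph $x_1 = \sqrt{1 + \sum_{i\ge 2} x_i^2}$ over $\RR^{n-1}$ and note that the linear form cutting out $H_T$ restricts to a nonzero real-analytic function on this graph, whose zero set is closed with empty interior, so a finite union of such sets is nowhere dense in $Q_+$. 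With that filled in, your argument is a complete and somewhat more elementary substitute for the paper's; what the paper's version buys in exchange is the explicit, reusable fact that the vanishing ideal of the real hyperboloid is principal.
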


\begin{proof}
  We need to show that one can put arbitrarily many points in general
  linear position on one sheet of a hyperboloid of revolution. While
  this is hardly surprising, the formal proof nevertheless requires
  some effort.
  
  Take $Q \subset \RR^n$ to be the hyperboloid of revolution given as
  the set of zeros of the polynomial
  $f := x_1^2 - \sum_{i=2}^n x_i^2 - 1$. We claim that the vanishing
  ideal is
  \[
    \mathcal I(Q) := \bigl\{g \in \RR[x_1 \upto x_n] \mid g \
    \text{vanishes on} \ Q\bigr\} = f \cdot \RR[x_1 \upto x_n].
  \]
  The inclusion ``$\supseteq$'' is clear, so for the reverse inclusion
  let $g \in \mathcal I(Q)$. Regarding~$x_1$ as main variable, we can
  perform division with remainder by~$f$ and obtain
  $g = f \cdot q + g_1 x_1 + g_0$ with $q \in \RR[x_1 \upto x_n]$ and
  $g_0,g_1 \in \RR[x_2 \upto x_n]$. Then
  $g_1 x_1 + g_0 \in \mathcal I(Q)$. If we assume $g_1 \ne 0$, there
  exist $\xi_2 \upto \xi_n \in \RR$ such that
  $g_1(\xi_2 \upto \xi_n) \ne 0$, so only
  $\xi_1 = -g_0(\xi_2 \upto \xi_n)/g_1(\xi_2 \upto \xi_n)$ can yield a
  point on $Q$ with $x_2 = \xi_2 \upto x_n = \xi_n$, which is not
  true: there are two such points. So $g_1 = 0$. Now if we assume
  $g_0 \ne 0$, again there are $\xi_2 \upto \xi_n \in \RR$ with
  $g_0(\xi_2 \upto \xi_n) \ne 0$. This implies that there is no point
  on $Q$ with $x_2 = \xi_2 \upto x_n = \xi_n$, which is not true. We
  conclude $g_0 = g_1 = 0$, so
  $g = f q \in f \cdot \RR[x_1 \upto x_n]$.

  One sheet $Q_+$ of $Q$ is given by the condition $x_1 > 0$. We take
  $\ve x$ to be a focus of $Q$ and prove the existence
  of~$\ve a_1 \upto \ve a_m \in Q_+$ by induction on~$m$. There is
  nothing to show for~$m = 0$. So assume~$m > 0$ and that
  $\ve a_1 \upto \ve a_{m-1} \in Q_+$ have been picked such that
  $\ve x,\ve a_1 \upto \ve a_{m-1}$ are in general linear
  position. Every subset
  $S \subseteq \{\ve x,\ve a_1 \upto \ve a_{m-1}\}$ of size
  $|S| \le n$ is contained in some affine hyperplane, so there exits a
  polynomial $h_S \in \RR[x_1 \upto x_n]$ of total degree~$1$ that
  vanishes on $S$. Write $h_S^*$ for the polynomial obtained from
  ~$h_S$ by substituting~$x_1$ with~$-x_1$, and set
  \[
    h := \prod_{\substack{S \subseteq \{\ve x,\ve a_1 \upto \ve
        a_{m-1}\} \\ \text{with} \ |S| \le n}} (h_S h_S^*).
  \]
  Being irreducible and of total degree~$2$, $f$ does not divide~$h$,
  so $h \notin \mathcal I(Q)$. Thus we have $\ve a_m \in Q$ with
  $h(\ve a_m) \ne 0$. If the $x_1$-coordinate of $\ve a_m$ is
  negative, we can replace it by its opposite since~$h$ is invariant
  under this replacement. So we may assume $\ve a_m \in Q_+$.

  It remains to show that $\ve x,\ve a_1 \upto \ve a_m$ are in general
  linear position, so let $U \subset \RR^n$ be a $(k-1)$-dimensional
  affine subspace with $1 \le k \le n$. By induction,
  $S := U \cap \{\ve x,\ve a_1 \upto \ve a_{m-1}\}$ has at most~$k$
  elements. If $S$ has fewer than~$k$ elements, then
  $|U \cap \{\ve x,\ve a_1 \upto \ve a_m\}| \le k$, the desired
  bound. Since $h_S$ vanishes on $S$, $S$ is contained in the subspace
  $U' := \{P \in U \mid h_S(P) = 0\} \subseteq U$. So if $|S| = k$,
  then by induction $U'$ cannot have dimension $< k - 1$, hence
  $U' = U$. But since $h_S(\ve a_m) \ne 0$, this implies
  $\ve a_m \notin U$. Therefore also in this case we have
  $|U \cap \{\ve x,\ve a_1 \upto \ve a_m\}| \le k$.

  This finishes the induction proof that there exist points
  $\ve a_1 \upto \ve a_m \in Q_+$ such that
  $\ve x,\ve a_1 \upto \ve a_m$ are in linear general position. If
  $m \ge n+1$, this implies that the~$\ve a_i$ are not contained in an
  affine hyperplane, so we are in the situation of \cref{Setup}, and
  \cref{tUnique} tells us that~\eqref{eqBasic} does not have a unique
  solution. On the other hand, if $m \le n$, then the~$\ve a_i$ lie in
  an $(m-1)$-dimensional affine subspace $U \subset \RR^n$ which by
  the general position property cannot contain~$\ve x$. $U$ can be
  enlarged to a hyperplane still not containing~$\ve x$. Now
  \cref{rLateration}\ref{rLaterationB} tells us that in this case,
  too, there is no unique solution.
\end{proof}

\section{Examples} \label{sExamples}

This section is devoted to providing examples. Some of them are intended as illustrations of the theory developed in the previous section, and some constitute counterexamples to conjectures or statements in the literature. For ease of drawing, we concentrate
on the case of dimension $n = 2$.

\begin{ex} \label{exSheet}%
  The situations shown in \cref{fSheet} demonstrate that the phrase
  ``\ldots on the same sheet of \ldots'' appearing twice in
  \cref{tUnique} is essential.
  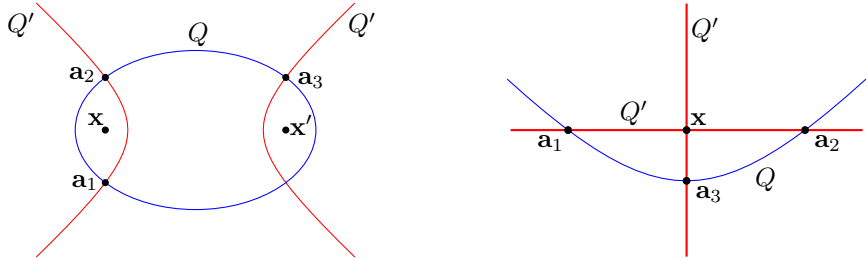
\begin{figure}[htbp]
    \centering
    \begin{tikzpicture}[scale=0.4]
      \pgfmathsetmacro{\a}{sqrt(318.9375/81)}%
      \pgfmathsetmacro{\b}{sqrt(318.9375/63)}%
      \draw[red] plot[domain=-1.5:1.5] ({{\b*cosh(\x)},\a*sinh(\x)})
      node[below,xshift=1mm,black]{$Q'$};%
      \draw[red] plot[domain=-1.5:1.5] ({-\b*cosh(\x)},{\a*sinh(\x)})
      node[below,xshift=-2mm,black]{$Q'$};%
      \draw[blue] (0,0) ellipse (4 and 2.64575)
      node[yshift=36,xshift=1,black]{$Q$};%
      \filldraw (3,0) circle (1mm)
      node[right,yshift=2,xshift=-2]{$\ve x'$};%
      \filldraw (-3,0) circle (1mm)
      node[left,yshift=4,xshift=3]{$\ve x$};%
      \filldraw (-3,-1.75) circle (1mm)
      node[left,yshift=0,xshift=1]{$\ve a_1$};%
      \filldraw (-3,1.75) circle (1mm) node[left,xshift=0,yshift=1]{$\ve
        a_2$};%
      \filldraw (3,1.75) circle (1mm)
      node[right,xshift=1,yshift=-1]{$\ve a_3$};%
    \end{tikzpicture}%
    \hspace{15mm} %
    \begin{tikzpicture}[scale=0.9]
      \draw[thick,red] (0,1.13)--(0,4.87)
      node[right,black,yshift=-10,xshift=-2]{$Q'$};%
      \draw[thick,red] (-2.6,3)--(2.6,3)
      node[midway,above,black,xshift=-20,yshift=-2]{$Q'$};%
      \pgfmathsetmacro{\a}{sqrt(318.9375/81)}%
      \pgfmathsetmacro{\b}{sqrt(318.9375/63)}%
      \draw[blue] plot[domain=-1.1:1.1] ({\a*sinh(\x)},{\b*cosh(\x)})
      node[below,yshift=-30,xshift=-38,black]{$Q$};%
      \filldraw (0,3) circle (0.5mm)
      node[right,yshift=4,xshift=-2]{$\ve x$};%
      \filldraw (-1.75,3) circle (0.5mm)
      node[left,yshift=-5,xshift=2]{$\ve a_1$};%
      \filldraw (1.75,3) circle (0.5mm)
      node[right,yshift=-5,xshift=0]{$\ve a_2$};%
      \filldraw (0,2.25) circle (0.5mm)
      node[right,xshift=0,yshift=-5]{$\ve a_3$};%
    \end{tikzpicture}%
    \caption{Both quadrics $Q$ and $Q'$ share the focus~$\ve x$ and
      the points~$\ve a_i$. But only $Q$ has the~$\ve a_i$ on the same
      sheet.}
    \label{fSheet}
  \end{figure}
  
  The first picture shows a hyperbola $Q'$ and an ellipse $Q$ sharing
  both foci and three points. (In fact, they share a fourth point, but
  this is irrelevant here.) Since the points are on
  different sheets of $Q'$, $Q$ is the quadric that \cref{tUnique}
  speaks about, and so~\eqref{eqBasic} has a unique solution. In the
  second picture, $Q$ and $Q'$ are a hyperbola and a cone (which in 2D
  is a pair of intersecting lines), both sharing the focus~$\ve x$ and
  three points. But~$\ve a_1$ and~$\ve a_2$ lie on different sheets of
  the cone, since no half-plane given by a diagonal line
  through~$\ve x$ contains both~$\ve a_1$ and~$\ve a_2$. So here the
  ``correct'' quadric is the hyperbola, which leads to the conclusion
  of nonunique solutions.

  Let us mention here that the first picture in \cref{fSheet}
  provides a counterexample to a statement on page~833
  of the paper of \mycite{Schmidt:1972}: ``To specify the locations of three points on a conic,
  and a straight line as its major axis, is to completely specify the 
  conic.'' In fact, counterexamples abound: a conic is indeed specified by three points and a focus, so just fixing the major axis allows the focus to move on it, giving a $1$-dimensional family of conics.
\end{ex}

\begin{ex} \label{exCone}%
  Here we look at the case where the satellites lie on the same sheet of a cone with vertex (=
  focus)~$\ve x$, so the quadratic equation in~\eqref{eqRank4} has
  discriminant zero. This is depicted in dimensions two and three in
  \cref{fCone}.
  
  \begin{figure}[htbp]
    \centering
    \begin{tikzpicture}[scale=0.8]
      \draw[blue,thick] (0,0)--(2,2) node[black,right,near
      end]{$Q_+$};%
      \draw[blue,thick] (0,0)--(-2,2) node[black,left,near
      end]{$Q_+$};%
      \draw[thick,red] (0,0)--(2,-2) node[black,right,near
      end]{$Q_-$};%
      \draw[thick,red] (0,0)--(-2,-2) node[black,left,near
      end]{$Q_-$};%
      \filldraw (0,0) circle (0.5mm) node[right]{$\ve x$};%
      \filldraw (0.5,0.5) circle (0.5mm)
      node[left,xshift=1,yshift=2]{$\ve a_1$};%
      \filldraw (1.2,1.2) circle (0.5mm)
      node[left,xshift=1,yshift=2]{$\ve a_2$};%
      \filldraw (-1,1) circle (0.5mm)
      node[right,xshift=1,yshift=2]{$\ve a_3$};%
    \end{tikzpicture}%
     \hspace{15mm} %
     \begin{tikzpicture}[xscale=2.2,yscale=-2.2]
       \begin{scope}
         \draw[clip] ({-sqrt(91)/10},9/100) coordinate (A)%
         arc[x radius=1, y radius=0.3, start angle=180-17.46, end
         angle=360+17.46] -- (0,1) -- cycle;%
         \fill[gray] (0,1) -- (-1.3628,-0.3) -- (1.3628,-0.3) --
         cycle;%
         \foreach \i in {0,1,...,100} {%
           \fill[white!\i!gray] (0,1) -- (-1.3628+\i*0.006814,-0.3) --
           +(0.008,0) -- cycle;%
           \fill[gray!\i!white] (0,1) -- (-0.6814+\i*0.006814,-0.3) --
           +(0.008,0) -- cycle;%
           \fill[gray!\i!black!50!gray] (0,1) --
           (0.6814-\i*0.006814,-0.3) -- +(-0.008,0) -- cycle;%
           \fill[black!\i!gray!50!black] (0,1) --
           (0.6814+\i*0.006814,-0.3) -- +(-0.008,0) -- cycle;%
         }%
          \draw (A) arc[x radius=1, y radius=0.3, start
          angle=180-17.46, end angle=17.46];%
         \draw[clip] ({-sqrt(91)/10},9/100) coordinate (A)%
         arc[x radius=1, y radius=0.3, start angle=180-17.46, end
         angle=360+17.46] -- (A) arc[x radius=1, y radius=0.3, start
         angle=180-17.46, end angle=17.46] -- cycle;%
         \fill[gray] (0,1) -- (-1.3628,-0.3) -- (1.3628,-0.3) --
         cycle;%
         \foreach \i in {0,...,100} {%
           \fill[white!\i!gray] (0,1) -- (0.6814+0.6814-\i*0.006814,-0.3) --
           +(-0.008,0) -- cycle;%
           \fill[gray!\i!white] (0,1) -- (0.6814-\i*0.006814,-0.3) --
           +(-0.008,0) -- cycle;%
           \fill[gray!\i!black!50!gray] (0,1) --
           (-0.6814+\i*0.006814,-0.3) -- +(0.008,0) -- cycle;%
           \fill[black!\i!gray!50!black] (0,1) --
           (0.6814-1.3628-\i*0.006814,-0.3) -- +(-0.008,0) -- cycle;%
         }%
       \end{scope}
       \draw (0.7,0.4) node[below]{$Q_+$};
       \draw (0,1) node[below,xshift=4,yshift=1]{$\ve x$};
       \filldraw (0.6,0.1) ellipse[rotate=-50,x radius=0.2mm,y
       radius=0.13mm] node[right,yshift=2]{$\ve a_1$};%
       \filldraw (-0.3,-0.1) ellipse[rotate=75,x radius=0.2mm,y
       radius=0.16mm] node[right,yshift=2]{$\ve a_2$};%
       \filldraw (0.0,0.2) ellipse[rotate=90,x radius=0.14mm,y
       radius=0.14mm] node[right,yshift=2]{$\ve a_3$};%
       \filldraw (-0.3,0.6) ellipse[rotate=45,x radius=0.2mm,y
       radius=0.12mm] node[right,yshift=2]{$\ve a_4$};%
     \end{tikzpicture}
     \caption{All~$\ve a_i$ lie on the same sheet of a cone with
       vertex~$\ve x$. The right-hand picture is not true to scale
       relative to the given numerical example.}
    \label{fCone}
  \end{figure}
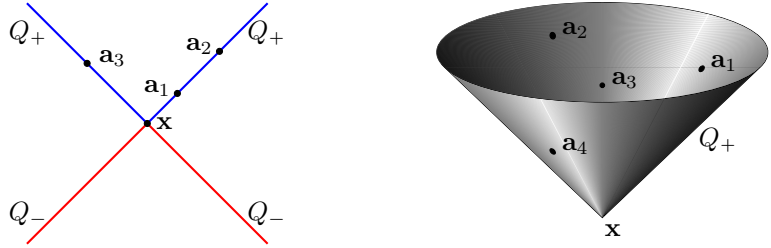

  To give a numerical example in dimension three, consider the points
  \[
    \ve x = (0,0,0), \ \ve a_1 = (3,4,5), \ \ve a_2 = (5,12,13), \ \ve
    a_3 = (8,15,17), \ \text{and} \ \ve a_4 = (7,24,25).
  \]
  Since the~$\ve a_i$ are chosen as Pythagorean triples, they lie on a
  cone with vertex~$\ve x$. So \cref{tUnique}\ref{tUniqueE} predicts a
  unique solution, due to discriminant zero in the quadratic
  equation. To test this, we set $t := 0$, so $t_1 = 5 \sqrt 2$,
  $t_2 = 13 \sqrt 2$, $t_3 = 17 \sqrt 2$, and $t_4 = 25 \sqrt 2$. Then
  our solution procedure easily yields the quantities
  $\ve u = (0,0,\sqrt 2)$, $\ve v = (0,0,0)$, and
  $\alpha = \beta = 0$. So the quadratic equation in~\eqref{eqRank4}
  becomes $t^2 = 0$, which has indeed discriminant zero.

  Let us also take the opportunity to apply \citename{Bancroft:1985}'s
  method~[\citenumber{Bancroft:1985}]. This is more cumbersome, since
  it requires that the matrix given in~\eqref{eqBancroft} has full
  rank, which in this case fails. So, somewhat arbitrarily, we need to
  change~$t$. With $t = \sqrt 2$, \citename{Bancroft:1985}'s method
  yields the quadratic equation $1/2 \lambda^2 + \lambda + 1/2 =
  0$. This also has discriminant zero, again confirming our
  prediction.

  We have already mentioned in \cref{rSCA} that
  \citename{Chaffee:Abel:1994}'s criterion goes wrong in the case
  where $Q$ is a cone, so let us check this, too, for this
  example. The matrix $A$ defined in Proposition~2
  from~[\citenumber{Chaffee:Abel:1994}] is easily determined to be
  \[
    A =
    \begin{pmatrix}
      2 & 8 & 8 & 8 \sqrt 2 \\
      5 & 11 & 12 & 12 \sqrt 2 \\
      4 & 20 & 20 & 20 \sqrt 2
    \end{pmatrix} \qquad \text{(notation
      from~[\citenumber{Chaffee:Abel:1994}])}.
  \]
  Its nullspace if generated by $a^\perp := (0,0,\sqrt 2,-1)$, and the
  Lorentz inner product is $\langle a^\perp,a^\perp\rangle = 1$. In
  fact, in~[\citenumber{Chaffee:Abel:1994}] $a^\perp$ is set to
  be a unit vector, but this does not change the positivity of
  $\langle a^\perp,a^\perp\rangle$. So \citename{Chaffee:Abel:1994}'s
  criterion says that there are two solutions, which is not true.
\end{ex}

\begin{ex} \label{exFive}%
  This is a two-dimensional example of \cref{cUnique}. We have placed
  five points~$\ve a_i$ (standing for satellite positions) on the same sheet (= branch) of a hyperbola with
  foci~$\ve x$ and~$\ve x'$. One of them, say~$\ve x$, stands for the
  user position. So by \cref{tUnique}\ref{tUniqueD}, both foci are
  solutions of the fundamental equations~\eqref{eqBasic}. \cref{fFive} shows
  the situation.
  
  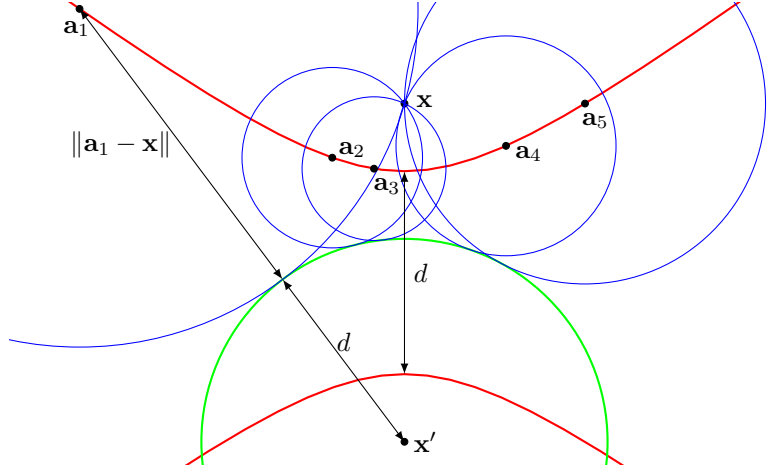
\begin{figure}[htbp]
    \centering
    \begin{tikzpicture}[scale=0.15]
      \pgfmathsetmacro{\r}{0.3};%
      \pgfmathsetmacro{\a}{12};%
      \pgfmathsetmacro{\b}{9};%
      \tikzset{>=latex}; 
      \begin{scope}
        \clip (-35,-17) rectangle (35,24);%
        \draw[thick,red] plot[domain=-2:2]
        ({\a*sinh(\x)},{\b*cosh(\x)});%
        \draw[thick,red] plot[domain=-2:2]
        ({\a*sinh(\x)},{-\b*cosh(\x)});%
        \filldraw (0,15) circle (\r)
        node[right,xshift=1,yshift=1]{$\ve x$};%
        \filldraw (0,-15) circle (\r) node[right] (X) {$\ve x'$};%
        \draw[thick,green] (0,-15) circle (18);%
        
        \filldraw (-28.8,23.4) circle (\r)
        node[below,yshift=-1,xshift=-1]{$\ve a_1$};%
        \draw[blue] (-28.8,23.4) circle (30);%
        \filldraw (-6.4,10.2) circle (\r)
        node[right,xshift=-1,yshift=3]{$\ve a_2$};%
        \draw[blue] (-6.4,10.2) circle (8);%
        \filldraw (-2.7,9.225) circle (\r)
        node[below,xshift=5]{$\ve a_3$};%
        \draw[blue] (-2.7,369/40) circle (51/8);%
        \filldraw (9,11.25) circle (\r)
        node[right,yshift=-3]{$\ve a_4$};%
        \draw[blue] (9,11.25) circle (9.75);%
        \filldraw (16,15) circle (\r) node[below,xshift=4]{$\ve
          a_5$};%
        \draw[blue] (16,15) circle (16);%
        \draw[thin,<->] (0,9)--(0,-9) node[midway,right]{$d$};
        \draw[thin,<->] (0,-15)--(-10.8,-0.6) node[midway,above]{$d$};
        \draw[thin,<->] (-28.8,23.4)--(-10.8,-0.6)
        node[midway,left]{$\lVert\ve a_1 - \ve x\rVert$};
     \end{scope}
    \end{tikzpicture}%
    \caption{The two red sheets have distance~$d$. The blue circles
      around the~$\ve a_i$ 
      through~$\ve x$ are tangent with the green circle. This shows
      $\lVert\ve a_i - \ve x\rVert + d = \lVert\ve a_i - \ve
      x'\rVert$.}
    \label{fFive}
  \end{figure}
  
  \cref{fFive} ``visually'' confirms our result, by showing
  $\lVert\ve a_i - \ve x'\rVert = \lVert\ve a_i - \ve x\rVert + d$ for
  all~$i$, which implies
  $\lVert\ve a_i - \ve x'\rVert = t_i - t + d = t_i - t'$ for
  $t' := t - d$, where~$t$ is the correct time.

  Let us provide the coordinates of the points in this example. The
  following table gives the coordinates of~$\ve x$, $\ve x'$ and
  the~$\ve a_i$, as well as the distances of each point to~$\ve x$ and
  to~$\ve x'$.
  \[
    \begin{array}{c|cc|ccccc}
      \text{Point} & \ve x & \ve x' & \ve a_1 & \ve a_2 & \ve a_3 &
                                                                    \ve
                                                                    a_4
      & \ve a_5 \\ \hline
      $x$-\text{coordinate} & 0 & 0 & -28.8 & -6.4 & -2.7 & 9 & 16 \\
      $y$-\text{coordinate} & 15 & -15 & 23.4 & 10.2 & 9.225 &
                                                                11.25
      & 15 \\ \hline
      \text{distance to} \ \ve x & 0 & 30 & 30 & 8 & 6.375 & 9.75 &
                                                                    16
      \\
      \text{distance to} \ \ve x' & 30 & 0 & 48 & 26 & 24.375 &
                                                                27.75
      & 34
    \end{array} 
  \]
  All values are exact and rational. The hyperbola is given by the
  equation $16 y^2 - 9 x^2 = 1296$ with eccentricity~$e = 5/3$, and
  the distance between the sheets is $d = 18$.

  Notice that the~$\ve a_i$ are in linear general position (points on
  one sheet of a hyperbola always are), and there are no
  mirror-symmetries between them. So we have a counterexample against
  the two-dimensional case of a recent conjecture by
  \mycite[page~5]{Hou:2022}. (Even selecting~$4$ of the~$5$
  points~$\ve a_i$ provides a counterexample.) 
\end{ex}

\begin{ex} \label{ex3D}%
  It is quite easy to turn \cref{exFive} into an example in dimension
  three. Just rotate the hyperbola around the major axis, and rotate
  the above points~$\ve a_i$ as well by some chosen angles. This is
  guaranteed to yield examples of nonuniqueness, but it takes a lucky
  choice of the rotation angles to obtain points that are in general
  linear position. We took the points $\ve a_1$, $\ve a_2$,
  and~$\ve a_5$ from the last example, and rotated each of them by two
  angles. This gives six points, which we combine with the upper
  vertex~$\ve a_0$ of $Q$. The result is given in the following table,
  which works like the previous one.
  \[
    \begin{array}{c|cc|ccccccc}
      \text{Point} & \ve x & \ve x' & \ve a_0 & \ve a_1' & \ve a_1'' &
      \ve a_2' & \ve a_2'' & \ve a_5' & \ve a_5'' \\ \hline
      $x$-\text{coordinate} & 0 & 0 & 0 & -28.8 & 0 & 6.4 & 0 & 9.6 &
                                                                      9.6 \\
      $y$-\text{coordinate} & 0 & 0 & 0 & 0 & -28.8 & 0 & -6.4 & -12.8
                             & 12.8 \\ 
      $z$-\text{coordinate} & 15 & -15 & 9 & 23.4 & 23.4 & 10.2 & 10.2
                  & 15 & 15 \\ \hline
      \text{distance to} \ \ve x & 0 & 30 & 6 & 30 & 30 & 8 & 8 & 16 &
                                                                       16 \\
      \text{distance to} \ \ve x' & 30 & 0 & 24 & 48 & 48 & 26 & 26 &
                                                                      34 & 34
    \end{array} 
  \]
  We used a computer to check that the seven
  points~$\ve a_0,\ve a_1' \upto \ve a_5''$ are in general linear
  position. We also checked that they have no symmetry in the
  following sense: there exists no nonidentity rigid transformation
  that permutes the seven points. This is true because, as it turns
  out, the distances between the points and their common center of
  gravity~$\ve c$ are pairwise different. But any rigid transformation
  permuting~$\ve a_0,\ve a_1' \upto \ve a_5''$ must map each point to
  a point that shares the same distance to~$\ve c$. We also verified
  this for subsets of size $\ge 5$ of the seven points, and found that
  with a single exception, all these subsets have no symmetries,
  either. For instance, the points
  $\ve a_0,\ve a_1',\ve a_1'',\ve a_2',\ve a_2''$ have no
  symmetry. Therefore they provide a three-dimensional counterexample to
  the conjecture of \mycite{Hou:2022} mentioned above. According to \mycite{Hou:2022}, \mycite{Abel:Chaffee:1991} postulated that there is a unique solution in the three-dimensional case if $m \ge 6$. Therefore we also have a counterexample to this postulate.

  We also obtain a counterexample to the following conjecture of \mycite[page~955]{Abel:Chaffee:1991}: ``It is conjectured that users on the `inside'
of a constellation will experience a unique fix, whereas
those `outside' of the constellation may not.'' In fact, it is visible from \cref{fFive} that in \cref{exFive} the user position~$\ve x$ belongs to the convex hull of the satellite positions~$\ve a_i$. Since the satellite positions in the present example were constructed by rotating those from \cref{exFive}, it should be clear that here, too, $\ve x$ lies in their convex hull. Explicitly, we have
\[
  \ve x = \frac{14}{54} \ve a_0 + \frac{10}{54} \ve a_1' + \frac{15}{54} \ve a_5' + \frac{15}{54} \ve a_5''.
\]
Thus~$\ve x$ lies in the convex hull even of $\ve a_o,\ve a_1',\ve a_5'$, and~$\ve a_5''$ (and therefore also in the convex hull of all seven points from this example). Since this example exhibits nonuniqueness, it provides a counterexample to \citename{Abel:Chaffee:1991}'s conjecture.
\end{ex}

\begin{ex} \label{exUnique}%
  In the case of $m = n+1$ satellites in dimension~$n$, the matrix $A$ never has rank~$n + 2$, so by
  \cref{tUnique} the question of uniqueness is determined by the type
  of the quadric $Q$. If the points~$\ve a_1 \upto \ve a_{n+1}$
  (standing for the satellite positions) are
  given, the type of $Q$ depends on the position~$\ve x$ of the
  user. \cref{fUnique} contains three two-dimensional examples of
  satellite positions and shows the areas of uniqueness and
  nonuniqueness.
  
  \begin{figure}[htbp]
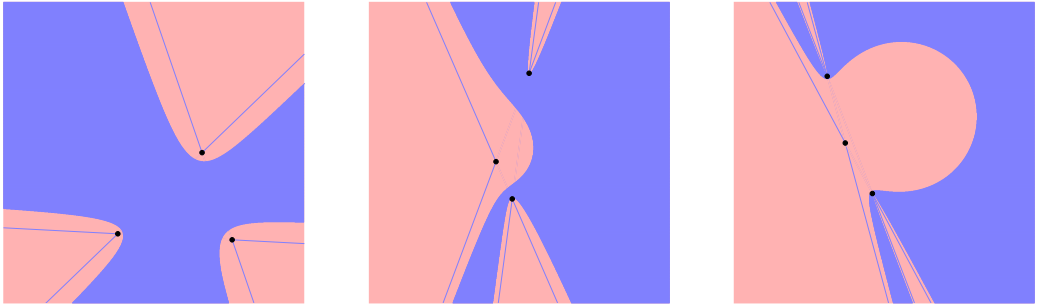

    \centering%
    \input{GPS_Unique1} \hspace{5mm} \input{GPS_Unique2}
    \hspace{5mm} \input{GPS_Unique3}
    \caption{In each picture, the black dots are the satellite
      positions. The blue areas and lines delineate the user positions
      where the global positioning problem has a unique solution.}
    \label{fUnique}
  \end{figure}
  
  Pictures of this kind have appeared in \mycite{Schmidt:1972} and
  \mycite{Abel:Chaffee:1991}. Those pictures do not indicate that the
  domain of uniqueness also contains lines, which appear in
  \cref{fUnique}. These lines arise as the locus where $Q$ is a cone
  (see \cref{tUnique}\ref{tUniqueE} and \cref{exCone}), so their
  absence in~[\citenumber{Schmidt:1972},
  \citenumber{Abel:Chaffee:1991}] is compatible with the fact that the
  case of a cone is missing from those papers.
\end{ex}

\begin{ex} \label{exDistribution}%
  The previous example is about the case of~$m = n+1$ satellites in dimension~$n$, and provides some pictures of regions where the global positioning problem has a unique solution. As an attempt to provide a more quantified picture of how often uniqueness occurs in this case, we ran a large number of experiments for $n = 2,3,4$. Each time, the satellite positions~$\ve a_i$ were chosen at random within a certain range. For a given configuration of satellites, we then tested uniqueness for many user positions~$\ve x$, also chosen randomly within the same range. We regard this as an approximation to the ``probability'' that the problem has a unique solution for a given satellite configuration. The results are shown in \cref{fDistribution}. For a probability~$p$ (a value on the $x$-axis), the histograms show for how many satellite configurations the probability of a unique solution is~$p$. The averages given in \cref{fDistribution} are taken over all combinations of a satellite configuration and user position that occurred. In total, we ran 600 million tests to produce the three histograms.
  
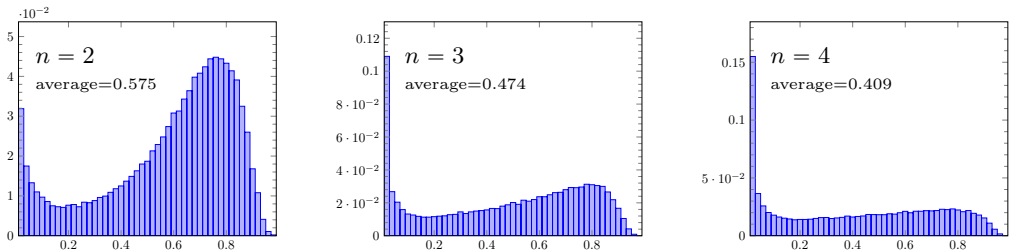
\begin{figure}[htbp]
\centering
\begin{tikzpicture}[scale=0.5]
\begin{axis}[
  ymin=0, ymax=0.0537,xmin=0.00999,xmax=0.990,
 minor y tick num = 4,
        area style,
]
\addplot+[ybar interval,mark=no] plot coordinates {
(0.00999,0.0319) (0.0300,0.0175) (0.0500,0.0133) (0.0699,0.0110) (0.0900,0.00969) (0.110,0.00859) (0.130,0.00752) (0.150,0.00728) (0.170,0.00712) (0.190,0.00767) (0.210,0.00784) (0.230,0.00728) (0.250,0.00842) (0.270,0.00829) (0.290,0.00883) (0.310,0.00961) (0.330,0.00995) (0.350,0.0109) (0.370,0.0118) (0.390,0.0125) (0.410,0.0137) (0.430,0.0149) (0.450,0.0163) (0.470,0.0180) (0.490,0.0187) (0.510,0.0213) (0.530,0.0229) (0.550,0.0248) (0.570,0.0274) (0.590,0.0308) (0.610,0.0313) (0.630,0.0343) (0.650,0.0364) (0.670,0.0398) (0.690,0.0408) (0.710,0.0424) (0.730,0.0444) (0.750,0.0448) (0.770,0.0444) (0.790,0.0433) (0.810,0.0414) (0.830,0.0391) (0.850,0.0325) (0.870,0.0260) (0.890,0.0168) (0.910,0.0108) (0.930,0.00412) (0.950,0.00107) (0.970,0.000210) (0.990,1.00E-5) };
\end{axis}
\draw (0.2,4.8) node[right] {\small $n = 2$};
\draw (0.2,4.0) node[right] {\tiny average=0.575};
\end{tikzpicture} \hspace{5mm} 
\begin{tikzpicture}[scale=0.5]
\begin{axis}[
 ymin=0, ymax=0.130,xmin=0.00999,xmax=0.990,
 minor y tick num = 4,
        area style,
]
\addplot+[ybar interval,mark=no] plot coordinates {
(0.00999,0.109) (0.0300,0.0268) (0.0500,0.0204) (0.0699,0.0159) (0.0900,0.0132) (0.110,0.0126) (0.130,0.0118) (0.150,0.0113) (0.170,0.0113) (0.190,0.0116) (0.210,0.0119) (0.230,0.0121) (0.250,0.0128) (0.270,0.0130) (0.290,0.0145) (0.310,0.0135) (0.330,0.0145) (0.350,0.0150) (0.370,0.0153) (0.390,0.0157) (0.410,0.0166) (0.430,0.0165) (0.450,0.0180) (0.470,0.0188) (0.490,0.0202) (0.510,0.0191) (0.530,0.0216) (0.550,0.0210) (0.570,0.0220) (0.590,0.0237) (0.610,0.0237) (0.630,0.0249) (0.650,0.0262) (0.670,0.0264) (0.690,0.0280) (0.710,0.0293) (0.730,0.0292) (0.750,0.0296) (0.770,0.0313) (0.790,0.0310) (0.810,0.0305) (0.830,0.0300) (0.850,0.0266) (0.870,0.0219) (0.890,0.0167) (0.910,0.0105) (0.930,0.00418) (0.950,0.000910) (0.970,3.00E-5) (0.990,0.000) };
\end{axis}
\draw (0.3,4.8) node[right] {\small $n = 3$};
\draw (0.3,4.0) node[right] {\tiny average=0.474};
\end{tikzpicture} \hspace{5mm}
\begin{tikzpicture}[scale=0.5]
\begin{axis}[
 ymin=0, ymax=0.185,xmin=0.00999,xmax=0.990,
 minor y tick num = 4,
        area style,
]
\addplot+[ybar interval,mark=no] plot coordinates {
(0.00999,0.155) (0.0300,0.0365) (0.0500,0.0258) (0.0699,0.0201) (0.0900,0.0178) (0.110,0.0162) (0.130,0.0152) (0.150,0.0145) (0.170,0.0138) (0.190,0.0141) (0.210,0.0141) (0.230,0.0143) (0.250,0.0150) (0.270,0.0157) (0.290,0.0158) (0.310,0.0149) (0.330,0.0154) (0.350,0.0159) (0.370,0.0170) (0.390,0.0162) (0.410,0.0167) (0.430,0.0171) (0.450,0.0184) (0.470,0.0183) (0.490,0.0182) (0.510,0.0182) (0.530,0.0190) (0.550,0.0187) (0.570,0.0199) (0.590,0.0211) (0.610,0.0199) (0.630,0.0212) (0.650,0.0212) (0.670,0.0217) (0.690,0.0215) (0.710,0.0218) (0.730,0.0229) (0.750,0.0226) (0.770,0.0232) (0.790,0.0223) (0.810,0.0208) (0.830,0.0218) (0.850,0.0194) (0.870,0.0179) (0.890,0.0154) (0.910,0.0109) (0.930,0.00565) (0.950,0.00160) (0.970,9.00E-5) (0.990,0.000) };
\end{axis}
\draw (0.3,4.8) node[right] {\small $n = 4$};
\draw (0.3,4.0) node[right] {\tiny average=0.409};
\end{tikzpicture}
\caption{Histograms of fraction of user position~$\ve x$ with unique solution for a random arrangement~$\ve a_i$ of $n+1$ satellites, for dimensions~$n = 2,3,4$.}
\label{fDistribution}
\end{figure}

We were surprised by the shape of the histograms in \cref{fDistribution}. It may be insightful to investigate the reasons why they look that way.
\end{ex}

\section{Uniqueness for almost all positions $\ve x$}
\label{sAlmost}

It is a consequence of \cref{tUnique} that if the number of sensors is
$m = n + 1$, then the basic equation~\eqref{eqBasic} does not have a
unique solution for ``many'' positions $\ve x \in \RR^n$ (see \cref{exUnique,exDistribution}).
In this section we look at the case $m \ge n +
2$. The following result says that then indeed we have uniqueness for
almost all~$\ve x$.

\begin{theorem}[Uniqueness for almost all~$\ve x$] \label{tM5}%
  Assume that in the situation of \cref{Setup} we have $m \ge n +
  2$. Then~\eqref{eqAbsolute}, and therefore also~\eqref{eqBasic}, has
  a unique solution for almost all~$\ve x \in \RR^n$.

  More precisely, there exists a nonzero polynomial~$f$ in~$n$
  variables such that the matrix $A \in \RR^{m \times (n + 2)}$,
  defined by~\eqref{eqA}, has rank~$n + 2$ for all~$\ve x$ with
  $f(\ve x) \ne 0$. (Recall that $A$ depends on~$\ve x$
  by~\eqref{eqSetup}.) So~\eqref{eqTxA} gives the unique solution
  of~\eqref{eqAbsolute}, and therefore of~\eqref{eqBasic}, for all
  those~$\ve x$.
\end{theorem}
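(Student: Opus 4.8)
The plan is to exhibit the polynomial~$f$ directly from a single maximal minor of~$A$. First I would reduce to the case $t = 0$: as already observed in the proof of \cref{tUnique}, subtracting~$t$ from all the~$t_i$ amounts to subtracting a multiple of the last column of~$A$ from its first column, a column operation that leaves $\rank(A)$ unchanged. So without loss of generality $t_i = \lVert\ve a_i - \ve x\rVert$ for all~$i$, and then the only $\ve x$-dependent entries of~$A$ sit in its first column.

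Since the~$\ve a_i$ do not lie on a common affine hyperplane, the matrix~$B$ from~\eqref{eqB} has rank~$n+1$, so there is a set~$J$ of $n+1$ row indices with $\det(B_J) \ne 0$. As $m \ge n+2$, I can pick one further index~$i_0 \notin J$ and set $I_0 := J \cup \{i_0\}$, a set of $n+2$ rows. Expanding the corresponding maximal minor of~$A$ along its $\ve x$-dependent first column gives
\[
  \det(A_{I_0}) = \sum_{k \in I_0} c_k\, t_k, \qquad c_k := \pm 2 \det\bigl(B_{I_0 \setminus\{k\}}\bigr),
\]
with the~$c_k$ constant in~$\ve x$ and, crucially, $c_{i_0} = \pm 2\det(B_J) \ne 0$. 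Thus $\det(A_{I_0})$ is a nontrivial real-linear combination of the Euclidean distances $t_k = \lVert\ve a_k - \ve x\rVert$.

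The key point is to turn this into a genuine polynomial and to see that it does not vanish identically. I would set
\[
  f(\ve x) := \prod_{\epsilon \in \{\pm 1\}^{I_0}} \Bigl( \sum_{k \in I_0} \epsilon_k\, c_k\, \lVert\ve a_k - \ve x\rVert \Bigr).
\]
This product is invariant under flipping the sign of any single $\lVert\ve a_k - \ve x\rVert$, hence it is a polynomial in the squared distances $\lVert\ve a_k - \ve x\rVert^2$, and therefore a polynomial in~$\ve x$. To see $f \not\equiv 0$ it suffices to check that each factor is a nonzero function. Fix a sign vector~$\epsilon$; since $c_{i_0} \ne 0$, the coefficient $\epsilon_{i_0} c_{i_0}$ of $\lVert\ve a_{i_0} - \ve x\rVert$ is nonzero. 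Near $\ve x = \ve a_{i_0}$ every other summand $\lVert\ve a_k - \ve x\rVert$ (with $k \ne i_0$) is real-analytic, because the~$\ve a_k$ are pairwise distinct, whereas $\lVert\ve a_{i_0} - \ve x\rVert$ has a conical singularity there and fails to be differentiable at~$\ve a_{i_0}$. A nonzero multiple of this singular term cannot be cancelled by analytic terms, so the factor is not identically zero, and hence neither is~$f$.

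Finally, the factor of~$f$ corresponding to all signs $\epsilon_k = +1$ is exactly $\det(A_{I_0})$. So whenever $f(\ve x) \ne 0$ we have $\det(A_{I_0}) \ne 0$, whence $\rank(A) = n+2$, and \cref{tLaterationA} delivers the unique solution via~\eqref{eqTxA}. Since the zero locus of a nonzero polynomial has Lebesgue measure zero, uniqueness holds for almost all~$\ve x$. I expect the only genuine obstacle to be the non-vanishing of~$f$, that is, the statement that a nontrivial real-linear combination of distance functions $\lVert\ve a_k - \ve x\rVert$ is not identically zero; the non-smoothness argument at the vertex~$\ve a_{i_0}$ is the cleanest route, though one could instead inspect the leading asymptotics as~$\ve x \to \infty$ along a generic ray.
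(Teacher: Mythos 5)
Your proposal is correct, and it reaches \cref{tM5} by a genuinely more elementary route than the paper. The paper derives the theorem from \cref{pM5}: it encodes the degeneracy of $A$ as the existence of a pair $(\ve u,\alpha)$ satisfying a system of quadric-of-revolution equations, uses \cref{lM5} for a rank computation, and then bounds the dimension of the set of bad~$\ve x$ via Chevalley's theorem on fiber dimension; the polynomial~$f$ there arises non-constructively as an algebraic dependence. Your~$f$ is essentially the symmetrized product that the paper writes down only a posteriori in \cref{rM5} (restricted to $n+2$ well-chosen rows $I_0 = J \cup \{i_0\}$), but where the paper still needs \cref{pM5} to certify that this product is a nonzero polynomial, you certify it directly: a factor $\sum_{k \in I_0}\epsilon_k c_k \lVert\ve a_k - \ve x\rVert$ with $c_{i_0}\ne 0$ cannot vanish identically near~$\ve a_{i_0}$, since every other summand is real-analytic there while $\lVert\ve a_{i_0}-\ve x\rVert$ fails to be differentiable at~$\ve a_{i_0}$ (and the~$\ve a_k$ are pairwise distinct by \cref{Setup}). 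This local singularity argument is sound and, for this theorem, replaces all of the algebraic geometry; what the paper's heavier machinery buys is reusability, since the same dimension-count technique carries \cref{tM8}, where the quantifiers are reversed and no comparable shortcut is visible.

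Two points to tighten. First, ``it suffices to check that each factor is a nonzero function'' is not literally true: a product of finitely many nonzero continuous functions can vanish identically. You should add that each factor is real-analytic on the connected open set $\RR^n\setminus\{\ve a_k : k\in I_0\}$ (connected because $n\ge 2$), so a nonidentically-zero factor vanishes only on a set of measure zero there, and hence the finite product is nonzero on a set of full measure; since~$f$ is a polynomial, it is therefore not the zero polynomial. Second, the alternative you mention in passing---leading asymptotics along a generic ray---is more delicate than it appears: by the cofactor identity, the vector $(c_k)_{k\in I_0}$ lies in the left kernel of the matrix with rows $(\ve a_k^T\ 1)$, so for the all-plus factor both $\sum_k c_k$ and $\sum_k c_k\ve a_k$ vanish and the first two orders of the expansion at infinity give no information. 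Stick with the non-differentiability argument.
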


\begin{rem*}
  Apart from $m \ge n + 2$, the hypotheses on the points~$\ve a_i$
  made in the theorem (by way of \cref{Setup}) are that they are
  pairwise distinct and do not lie on a common affine
  hyperplane. These hypotheses are clearly also necessary, since equal
  points~$\ve a_i$ add no further equations to the
  system~\eqref{eqBasic}, and since for points on an affine
  hyperplane, \eqref{eqBasic} almost never has a unique solution (see
  \cref{rLateration}\ref{rLaterationB}). In this way, \cref{tM5} is
  the best result that can possibly be expected. It may be remarkable
  that the theorem does {\em not} require that the points~$\ve a_i$
  are in general linear position.
\end{rem*}

As we will see in a moment, the following result implies \cref{tM5}.

\begin{prop} \label{pM5}%
  Let $\ve a_1 \upto \ve a_{n+2} \in \RR^n$, with $n \ge 2$, be
  pairwise distinct points that do not lie on a common affine
  hyperplane. Then there exists a nonzero polynomial~$f$ such that
  $f(\ve x) \ne 0$ for an~$\ve x \in \RR^n$ implies
  \begin{equation} \label{eqAPM}%
    \det
    \begin{pmatrix}
      \varepsilon_1 \cdot \lVert\ve a_1 - \ve x\rVert & \ve a_1^T & 1 \\
      \vdots & \vdots & \vdots \\
      \varepsilon_{n+2} \cdot \lVert\ve a_{n+2} - \ve x\rVert & \ve
      a_{n+2}^T & 1
    \end{pmatrix} \ne 0
  \end{equation}
  for all $\varepsilon_1 \upto \varepsilon_{n+2} = \pm 1$.
\end{prop}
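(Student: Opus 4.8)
The plan is to reduce the claim to a statement about a single explicit polynomial, obtained by multiplying the determinant in~\eqref{eqAPM} over all sign patterns, and then to show that this polynomial does not vanish identically. First I would expand the determinant along its first column. Writing $r_i := \lVert\ve a_i - \ve x\rVert$, this gives
\[
  D(\varepsilon) := \det(\cdots) = \sum_{i=1}^{n+2} \varepsilon_i\, r_i\, C_i, \quad \text{where} \quad C_i := (-1)^{i+1}\det\bigl(\ve a_j^T \mid 1\bigr)_{j \ne i}
\]
are constants depending only on the~$\ve a_j$ (each $C_i$ is, up to sign, the determinant of the $(n+1)\times(n+1)$ matrix whose rows are the $\ve a_j^T$ for $j \ne i$ augmented by a column of ones). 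Since the~$\ve a_i$ do not lie on a common affine hyperplane, they affinely span~$\RR^n$, so some $n+1$ of them are affinely independent; hence at least one~$C_i$ is nonzero.

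Next I would set $f := \prod_{\varepsilon \in \{\pm 1\}^{n+2}} D(\varepsilon)$ and verify that $f$ is a polynomial in~$\ve x$. Flipping the sign of a single~$\varepsilon_i$ merely permutes the factors of the product, so $f$ is invariant under $r_i \mapsto -r_i$ for each~$i$ separately; being even in each~$r_i$, it is a polynomial in the squares $r_i^2 = \lVert\ve a_i - \ve x\rVert^2$, which are themselves polynomials in~$\ve x$. Once $f$ is known to be a \emph{nonzero} polynomial the proof is complete: if $f(\ve x) \ne 0$ then every factor $D(\varepsilon)$ is nonzero, which is exactly~\eqref{eqAPM}.

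The main obstacle is thus to prove that $f \not\equiv 0$, equivalently to exhibit an~$\ve x$ at which all $2^{n+2}$ values $D(\varepsilon)$ are simultaneously nonzero. I would do this by restricting to a ray $\ve x = s\ve d$ and letting the real parameter~$s$ vary, choosing the unit direction~$\ve d$ generically so that the projections $\langle\ve a_i, \ve d\rangle$ are pairwise distinct (possible since the~$\ve a_i$ are distinct, so the relevant obstruction is a finite union of hyperplanes in~$\ve d$). Along this ray each distance becomes the real-analytic function $\phi_i(s) = \lVert s\ve d - \ve a_i\rVert$, whose only singularities, regarded as a multivalued analytic function of a complex variable~$s$, are branch points confined to the values with real part $\langle\ve a_i, \ve d\rangle$. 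Distinct projections therefore make the singular loci of the different~$\phi_i$ pairwise disjoint, and a standard argument then shows the~$\phi_i$ are linearly independent over~$\RR$: a nontrivial relation $\sum_i c_i \phi_i \equiv 0$ with $c_k \ne 0$ would force $\phi_k = -c_k^{-1}\sum_{j \ne k} c_j \phi_j$ to be analytic at its own branch point, a contradiction.

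Given this independence, for each~$\varepsilon$ the function $s \mapsto D(\varepsilon)(s\ve d) = \sum_i (\varepsilon_i C_i)\,\phi_i(s)$ has at least one nonzero coefficient, because some $C_i \ne 0$, and is therefore not identically zero; hence its zeros on the ray are isolated. Since there are only finitely many sign patterns, their zeros cover only finitely many values of~$s$ on any bounded interval, so I can pick an~$s$ avoiding all of them, giving $f(s\ve d) \ne 0$ and thus $f \not\equiv 0$. The delicate point to get right is precisely the linear independence of the~$\phi_i$: it is the freedom to choose~$\ve d$ with separated projections that guarantees distinct, hence non-cancelling, singularities, and this is where essentially all of the content of the proposition resides, the remaining steps being formal.
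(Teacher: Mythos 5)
Your argument is correct, but it takes a genuinely different and considerably more elementary route than the paper's. The paper translates the vanishing of a determinant in~\eqref{eqAPM} into the statement that the~$\ve a_i$ lie on a common quadric with focus~$\ve x$, reduces via \cref{lM5} and Gaussian elimination to a system of polynomial constraints on the axis direction~$\ve u$ and the offset~$\alpha$, and then bounds the dimension of the resulting variety using Chevalley's theorem on fiber dimension; the polynomial~$f$ appears non-constructively as an algebraic dependence among~$n$ elements of a ring of Krull dimension $\le n-1$. You instead take the explicit candidate $f = \prod_{\varepsilon} D(\varepsilon)$ --- essentially the polynomial~$\tilde f$ of \cref{rM5}, whose nonvanishing the paper deduces \emph{from} \cref{pM5} rather than independently --- and prove directly that it is not identically zero by restricting to a generic line and establishing the linear independence of the distance functions $\phi_i(s) = \lVert s\ve d - \ve a_i\rVert$ through a branch-point/monodromy argument. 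This buys an explicit~$f$ and avoids algebraic geometry entirely; what the paper's heavier machinery buys is reusability, since the same fiber-dimension technique carries the proof of \cref{tM8}. Your key steps all check out: $f$ is a polynomial in~$\ve x$ because it is even in each $r_i$; some cofactor $C_i$ is nonzero because the~$\ve a_i$ affinely span~$\RR^n$; and the monodromy of $\sqrt{q_k}$ around a simple root kills the coefficient~$c_k$ in any relation. Two loose ends are worth tightening, though neither is a genuine gap: you should arrange that no~$\ve a_i$ lies on the chosen line (if some $\ve a_i = \ve 0$ this forces you to shift the ray to $\ve x_0 + s\ve d$), since otherwise $\phi_i$ degenerates to $|s - \lambda|$, has no square-root branch point, and must be disposed of separately at the end of the independence argument; and the contradiction is cleanest phrased as analytic continuation of the identity $\sum_i c_i \phi_i \equiv 0$ around a small loop at the branch point of~$\phi_k$, which negates~$\phi_k$ alone and yields $2 c_k \phi_k \equiv 0$, rather than as $\phi_k$ ``becoming analytic'' there.
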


\begin{proof}[Proof of the implication ``\cref{pM5} $\Rightarrow$
  \cref{tM5}'']
  In \cref{tM5} we are given satellite positions $\ve a_1 \upto \ve a_m \in \RR^n$
  distinct and not contained in an affine hyperplane. We are also
  given~$t$ by \cref{Setup}, and for~$\ve x \in \RR^n$, $t_i$ is set
  to $t_i := \lVert\ve a_i - \ve x\rVert + t$. Now subtracting any
  real number from~$t$ and, consequently, from the~$t_i$ does not
  change the number of solutions of~\eqref{eqBasic}
  and~\eqref{eqAbsolute}. Neither does it change the rank of $A$,
  since subtracting the same number from all~$t_i$ amounts to
  subtracting a multiple of the last column of $A$ from the first
  column of $A$. So we may subtract~$t$ itself and thus assume
  $t_i = \lVert\ve a_i - \ve x\rVert$.

  By renumbering the~$\ve a_i$ we may also assume that the first
  $n + 2$ of them do not lie on a common affine hyperplane. These can
  be ``fed into'' \cref{pM5}, which gives us a nonzero
  polynomial~$f$. If $f(\ve x) \ne 0$, then the matrices
  in~\eqref{eqAPM} all have rank~$n + 2$, and in particular the one
  with $\varepsilon_1 = \cdots = \varepsilon_{n+2} = 1$. This clearly
  implies $\rank(A) = n + 2$. (In fact, the~$\varepsilon_i$ introduced
  in \cref{pM5} are irrelevant for deriving \cref{tM5}. But they
  provide some benefit, as explained in \cref{rM5}, and come at no
  extra cost for the proof.)
\end{proof}

Before proving \cref{pM5}, let us describe the rough idea behind the
proof in an informal, inexact way. First, the hypothesis that the
determinant in~\eqref{eqAPM} is zero for some~$\varepsilon_i$ is
readily translated to the condition that the $\ve a_i$ lie on a common
quadric of revolution with focus~$\ve x$. Now by
\mycite{Gfrerrer:Zsombor:2009}, a quadric of revolution $Q$ is
uniquely determined by the direction of its symmetry axis and $n + 2$
points in general position on $Q$. (In
fact,~[\citenumber{Gfrerrer:Zsombor:2009}] has this result for
$n = 3$.) In our situation, the $n + 2$ points are given, so the only
free parameter for defining $Q$ is the direction of the symmetry axis,
which can be viewed as a point in projective space $\PP^{n - 1}$,
which has dimension~$n-1$. But each quadric of revolution has at most
two foci. Therefore, since the quadrics in question have $\PP^{n-1}$
as ``moduli space'', the possible foci can also only range through a
variety of that dimension. So a point~$\ve x$ not
satisfying~\eqref{eqAPM} must lie in a subvariety of $\RR^n$ of
smaller dimension.

However, turning this idea into a proof of \cref{pM5} presents several
obstacles:

\begin{itemize}
\item \citename{Gfrerrer:Zsombor:2009}'s result must be generalized
  from dimension~$3$ to dimension~$n$.
\item The idea that if objects with an $(n - 1)$-dimensional moduli
  space each yield at most two points~$\ve x$, then these points lie
  in a variety of dimension~$n - 1$ must be turned into a rigorous
  proof.
\item The biggest obstacle is that
  in~[\citenumber{Gfrerrer:Zsombor:2009}], points in ``general
  position'' are assumed to be non-cospherical, and this is used
  heavily in the proof. Our \cref{pM5} does not make this assumption,
  and it is especially unpalatable since in the global positioning situation the
  satellites do fly on a common sphere. So we need to get along
  without this hypothesis. On the other hand, we do not need the full
  strength of \citename{Gfrerrer:Zsombor:2009}'s result.
\end{itemize}

As it turns out, our actual proof looks nothing like the above
idea. We feel that just because of this, our rough sketch may be all
the more helpful for readers. However, in contrast to the proof
sketch, the actual proof requires methods from algebraic
geometry.

\begin{proof}[Proof of \cref{pM5}]
  We set $m := n+2$. Take a point~$\ve x \in \RR^n$ such that there
  are $\varepsilon_i \in \{1,-1\}$ making the
  determinant~\eqref{eqAPM} zero. Write
  $A_{\varepsilon_1 \upto \varepsilon_m}$ for the matrix in question,
  which thus has rank $< m$. The hypothesis that the~$\ve a_i$ do not
  lie on an affine hyperplane means that the matrix obtained by
  removing the first column from
  $A_{\varepsilon_1 \upto \varepsilon_m}$ has rank~$m-1$. So the first
  column is a linear combination of the other columns, which we write
  as
  \[
    \begin{pmatrix}
      \varepsilon_1 \lVert\ve a_1 - \ve x\rVert \\
      \vdots \\
      \varepsilon_m \lVert\ve a_m - \ve x\rVert
    \end{pmatrix}
    =
    \begin{pmatrix}
      \ve a_1^T & 1 \\
      \vdots & \vdots \\
      \ve a_m^T & 1
    \end{pmatrix} \cdot
    \begin{pmatrix}
      \ve u \\
      \alpha
    \end{pmatrix}
  \]
  with $\ve u \in \RR^n$ and~$\alpha \in \RR$. Notice that~$\ve u$
  and~$\alpha$ depend on~$\ve x$. Picking out the entries of this
  equation and squaring both sides yields
  \[
    \lVert\ve a_i\rVert^2 - 2\langle\ve a_i,\ve x\rangle +
    \lVert\ve x\rVert^2 = \langle\ve a_i,\ve u\rangle^2 + 2 \alpha
    \langle\ve a_i,\ve u\rangle + \alpha^2 \quad (i = 1 \upto m)
  \]
  Writing
  \[
    \ve a_i =
    \begin{pmatrix}
      a_{i,1} \\ \vdots \\ a_{i,n}
    \end{pmatrix}, \ \ve u =
    \begin{pmatrix}
      u_1 \\ \vdots \\ u_n
    \end{pmatrix} \quad \text{and} \quad \ve x =
    \begin{pmatrix}
      x_1 \\ \vdots \\ x_n
    \end{pmatrix},
  \]
  we get, for $i = 1 \upto m$,
  \begin{equation} \label{eqAUXAlpha}%
    \sum_{j=1}^n a_{i,j}^2 (u_j^2 - 1) + \sum_{1 \le j < k \le n} 2
    a_{i,j} a_{i,k} u_j u_k + \sum_{j=1}^n 2 a_{i,j} (x_j + \alpha
    u_j) + \alpha^2 - \lVert\ve x\rVert^2 = 0,
  \end{equation}
  or in matrix form
  \[
    \begin{pmatrix}
      1 & 2 a_{1,1} & \cdots & 2 a_{1,n} & a_{1,1}^2 & \cdots &
      a_{1,n}^2 & 2 a_{1,1} a_{1,2} & \cdots & 2 a_{1,n-1} a_{1,n} \\
      1 & 2 a_{2,1} & \cdots & 2 a_{2,n} & a_{2,1}^2 & \cdots &
      a_{2,n}^2 & 2 a_{2,1} a_{2,2} & \cdots & 2 a_{2,n-1} a_{2,n} \\
      \vdots & \vdots & & \vdots & \vdots & & \vdots & \vdots & &
      \vdots \\
      1 & 2 a_{i,1} & \cdots & 2 a_{i,n} & a_{i,1}^2 & \cdots &
      a_{i,n}^2 & 2 a_{i,1} a_{i,2} & \cdots & 2 a_{i,n-1} a_{i,n} \\
      \vdots & \vdots & & \vdots & \vdots & & \vdots & \vdots & &
      \vdots \\
      1 & 2 a_{m,1} & \cdots & 2 a_{m,n} & a_{m,1}^2 & \cdots &
      a_{m,n}^2 & 2 a_{m,1} a_{m,2} & \cdots & 2 a_{m,n-1} a_{m,n}
    \end{pmatrix}
    \begin{pmatrix}
      \alpha^2 - \lVert\ve x\rVert^2 \\
      x_1 + \alpha u_1 \\
      \vdots \\
      x_n + \alpha u_n \\
      u_1^2 - 1 \\
      \vdots \\
      u_n^2 - 1 \\
      u_1 u_2 \\
      \vdots \\
      u_{n-1} u_n
    \end{pmatrix} = \ve 0.
  \]
  By \cref{lM5}, which is stated and proved below, the left-hand
  matrix has rank~$n + 2 = m$, and by the hypothesis that
  the~$\ve a_i$ do not lie on an affine hyperplane, the
  $(n+2) \times (n+1)$-submatrix formed by the leftmost $n + 1$
  columns has rank $n + 1$. Therefore by Gaussian elimination the
  left-hand matrix in the above equation can be brought to the form
  \[
    \left(
      \begin{array}{ccc|ccccccc}
        & & \\
        & I_{n+1} & & & & & \Asterisk \\
        & & \\
        \hline
        0 & \cdots & 0 & \gamma_{1,1} & \cdots & \gamma_{n,n} &
                                                                \gamma_{1,2}
        & \gamma_{1,3} & \cdots & \gamma_{n-1,n}
      \end{array}
    \right),
  \]
  where the $\gamma_{i,j} \in \RR$ are not all zero. Notice that this
  matrix only depends on the given points~$\ve a_i$ and not
  on~$\ve x$. From the first $n + 1$ rows of the above matrix equation
  we obtain
  \begin{align}
    \alpha^2 - \lVert\ve x\rVert^2 + h(u_1 \upto u_n) & =
                                                        0, \label{eqAX} \\
    x_i + \alpha u_i + h_i(u_1 \upto u_n) & = 0 \quad (i = 1 \upto
                                            n), \label{eqXU}
  \end{align}
  where $h,h_i \in \RR[U_1 \upto U_n]$ are polynomials in
  indeterminates $U_i$, and from the last row we get
  \begin{equation} \label{eqG}%
    g(u_1 \upto u_n) = 0 \quad \text{with} \quad g := \sum_{i=1}^n
    \gamma_{i,i} (U_i^2 - 1) + \!\!\! \sum_{1 \le i < j \le n} \!\!\!
    \gamma_{i,j} U_i U_j \ne 0.
  \end{equation}
  The polynomials~$g$, $h$ and~$h_i$ only depend on the
  points~$\ve a_i$ and not on~$\ve x$. From~\eqref{eqXU} we obtain
  \[
    \lVert\ve x\rVert^2 = \sum_{i=1}^n x_i^2 = \sum_{i=1}^n
    \bigl(h_i(u_1 \upto u_n)^2 + 2 u_i h_i(u_1 \upto u_n) \alpha +
    u_i^2 \alpha^2\bigr),
  \]
  and substituting this into~\eqref{eqAX} yields
  \[
    \left(\sum_{i=1}^n u_i^2 - 1\right) \alpha^2 + 2
    \left(\sum_{i=1}^n u_i h_i(u_1 \upto u_n)\right) \alpha +
    \sum_{i=1}^n h_i(u_1 \upto u_n)^2 - h(u_1 \upto u_n) = 0.
  \]
  We can write this as
  \begin{equation} \label{eqGUA}%
    g_1(u_1 \upto u_n) \alpha^2 + g_2(u_1 \upto u_n) \alpha + g_3(u_1
    \upto u_n) = 0
  \end{equation}
  with $g_1,g_2,g_3 \in \RR[U_1 \upto U_n]$, and specifically
  \begin{equation} \label{eqG1}%
    g_1 = \sum_{i=1}^n U_i^2 - 1.
  \end{equation}
  Let us summarize our findings so far. For every
  point~$\ve x \in \RR^n$ that makes at least one of the determinants
  in~\eqref{eqAPM} vanish, there exist~$\alpha \in \RR$ and
  $\ve u \in \RR^n$ satisfying~\eqref{eqXU}, \eqref{eqG},
  and~\eqref{eqGUA}. The polynomials $h_i,g,g_1,g_2$,
  and~$g_3 \in \RR[U_1 \upto U_n]$ occurring in these equations only
  depend on the given points~$\ve a_i$ and not on~$\ve x$.

  We now take an additional indeterminate $V$ and consider the ideal
  \[
    I := \bigl(g,g_1 V^2 + g_2 V + g3\bigr) \subseteq \CC[U_1 \upto
    U_n,V]
  \]
  and the affine variety $X \subseteq \CC^{n+1}$ given by $I$. With
  $Y \subseteq \CC^n$ given by the polynomial~$g$, the projection to
  the first~$n$ coordinates is a morphism $\map{\pi}{X}{Y}$. For a
  point $y \in Y$ the fiber $\pi^{-1}(\{y\})$ is isomorphic to the
  variety consisting of all $\alpha \in \CC$ with
  $g_1(y) \alpha^2 + g_2(y) \alpha + g_2(y) = 0$. Let $X' \subseteq X$
  be a closed, irreducible subset. Then the Zariski closure
  $Y' := \overline{\pi(X')} \subseteq Y$ of the image is also
  irreducible. We distinguish two cases.
  \begin{description}
  \item[Case 1] there exists $y \in \pi(X')$ such that $g_1(y) \ne
    0$. Then the fiber $\pi^{-1}(\{y\})$ is finite and therefore
    $0$-dimensional. By Chevalley's theorem on fiber dimension (see
    \mycite[Corollary~10.6]{Kemper.Comalg} or \mycite[Chapter~II,
    Exercise~3.22(b)]{hart}), we obtain
    \[
      0 \ge \dim(X') - \dim(Y').
    \]
    But $\dim(Y') \le \dim(Y) = n-1$ since $g \ne 0$, so
    $\dim(X') \le n-1$.
  \item[Case 2] $g_1(y) = 0$ for every $y \in \pi(X')$. Then~$g_1$
    vanishes on $\pi(X')$ and therefore also on the closure $Y'$,
    which means that both~$g$ and~$g_1$ lie in the vanishing ideal of
    $Y'$. Now by looking at its definition~\eqref{eqG1}, we see that
    $g_1 \in \CC[U_1 \upto U_n]$ is irreducible (since $n \ge
    2$). Moreover~\eqref{eqG} implies that~$g_1$ does not divide~$g$
    (again since $n \ge 2$). So $(g_1) \subset \CC[U_1 \upto U_n]$ is
    a nonzero prime ideal strictly contained in $(g_1,g)$, and it
    follows that the Krull dimension of the quotient ring
    $\CC[U_1 \upto U_n]/(g_1,g)$ is at most $n - 2$. Therefore
    \[
      \dim(Y') \le n-2.
    \]
    Take $y \in \pi(X')$ then the fiber $\pi^{-1}(\{y\})$ has
    dimension~$0$ or~$1$ (the latter only occurs if
    $g_2(y) = g_3(y) = 0$, but this cannot be ruled out). Using
    Chevalley's theorem again, we now obtain
    $\dim(X') \le \dim(Y') + 1 \le n - 1$.
  \end{description}
  Since in both cases $\dim(X') \le n-1$, we conclude that also $X$
  has dimension $\le n-1$, and therefore the quotient ring
  $\CC[U_1 \upto U_n,V]/I$ has Krull dimension $\le n-1$. But for
  finitely generated $K$-algebras the Krull dimension equals the
  transcendence degree (see~[\citenumber{Kemper.Comalg}, Theorem~5.9]
  or~[\citenumber{hart}, Chapter~I, Theorem1.8A]), so~$n$ elements
  from this algebra are always algebraically dependent. In particular,
  this applies to the classes modulo $I$ of the polynomials
  $-h_i - V U_i$ ($i = 1 \upto n$). So there exists a nonzero
  polynomial $f \in \CC[T_1 \upto T_n]$ (with $T_i$ new
  indeterminates) such that
  $f\bigl(-h_1 - V U_1 \upto -h_n - V U_n\bigr) \in I$, so we can
  write
  \begin{equation} \label{eqFHVU}%
    f\bigl(-h_1 - V U_1 \upto -h_n - V U_n\bigr) = f_1 \cdot g + f_2
    \cdot \bigl(g_1 V^2 + g_2 V + g_3\bigr)
  \end{equation}
  with $f_1,f_2 \in \CC[U_1 \upto U_n,V]$. We claim that
  $f(\ve x) = 0$ for every point $\ve x \in \RR^n$ that makes at least
  one of the determinants in~\eqref{eqAPM} vanish. Indeed, as we have
  seen, for such an~$\ve x = (x_1 \upto x_n)$ there
  exist~$\alpha \in \RR$ and $\ve u = (u_1 \upto u_n) \in \RR^n$
  satisfying~\eqref{eqXU}, \eqref{eqG}, and~\eqref{eqGUA}, so
  \begin{multline*}
    f(x_1 \upto x_n) \underset{\eqref{eqXU}}{=} f\bigl(-h_1(\ve u) -
    \alpha u_1 \upto -h_n(\ve u) - \alpha u_n\bigr)
    \underset{\eqref{eqFHVU}}{=} \\
    f_1(\ve u,\alpha) \cdot g(\ve u) + f_2(\ve u,\alpha) \cdot
    \bigl(g_1(\ve u) \alpha^2 + g_2(\ve u) \alpha + g_3(\ve u)\bigr)
    \underset{\eqref{eqG}, \eqref{eqGUA}}{=} 0.
  \end{multline*}
  With this \cref{pM5} is almost proved. The one thing that is missing
  is that our polynomial~$f$ has complex coefficients, whereas the
  polynomial whose existence was asserted in the proposition was of
  course understood to be real. But since the~$x_i$ in the above
  equation are all real, we can extract the real and imaginary part
  of~$f$ coefficient-wise, and obtain
  $\operatorname{Re}(f)(\ve x) = 0$ and
  $\operatorname{Im}(f)(\ve x) = 0$. Since at least one of the
  polynomials $\operatorname{Re}(f)$ or $\operatorname{Im}(f)$ is
  nonzero, it provides the desired polynomial for the proposition.
\end{proof}

The following lemma was used in the proof.

\begin{lemma} \label{lM5}%
  In the situation of \cref{pM5}, let $t_1 \upto t_k$, with
  $k = \binom{n+2}{2}$, be all the monomials in variables $y_1 \upto y_n$
  of degree $\le 2$. Then the matrix
  \[
    M(\ve a_1 \upto \ve a_{n+2}) := \Bigl(t_j(\ve
      a_i)\Bigr)_{\substack{i = 1 \upto n+2 \\ j = 1 \upto k}} \in
    \RR^{(n+2) \times k}
  \]
  has rank $n + 2$.
\end{lemma}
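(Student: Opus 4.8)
The plan is to reinterpret the matrix $M(\ve a_1 \upto \ve a_{n+2})$ as the matrix of a polynomial evaluation map and to reduce the rank claim to a surjectivity statement. Let $P_{\le 2}$ denote the space of polynomials of degree $\le 2$ in the variables $y_1 \upto y_n$; since $t_1 \upto t_k$ is precisely the monomial basis of $P_{\le 2}$, we have $\dim P_{\le 2} = k$. Sending a coefficient vector $\lambda \in \RR^k$ to the polynomial $p = \sum_j \lambda_j t_j$, the product $M \cdot \lambda$ is exactly the vector $\bigl(p(\ve a_1) \upto p(\ve a_{n+2})\bigr)^T$. Thus $M$ is the matrix of the evaluation map $\map{\mathrm{ev}}{P_{\le 2}}{\RR^{n+2}}$, $p \mapsto \bigl(p(\ve a_i)\bigr)_i$, and $\rank(M) = \dim\bigl(\operatorname{im}(\mathrm{ev})\bigr)$. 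As $k = \binom{n+2}{2} \ge n+2$ for $n \ge 2$, proving $\rank(M) = n+2$ amounts to showing that $\mathrm{ev}$ is surjective, i.e.\ that $M$ has full row rank.

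To establish surjectivity it suffices, for each index $i$, to exhibit a polynomial $p_i \in P_{\le 2}$ with $p_i(\ve a_i) \ne 0$ and $p_i(\ve a_j) = 0$ for all $j \ne i$: the image $\mathrm{ev}(p_i)$ is then a nonzero multiple of the $i$-th standard basis vector, and the $n+2$ such images span $\RR^{n+2}$. I would build each $p_i$ as a product of two affine (degree-one) polynomials, that is, as a degenerate quadric consisting of two hyperplanes, each passing through some of the points $\ve a_j$ ($j \ne i$) while avoiding $\ve a_i$. I split into two cases according to the affine span of the $n+1$ points $\{\ve a_j : j \ne i\}$. If that span has dimension $< n$, it lies in some hyperplane $H$; because the full set of $n+2$ points is not contained in a hyperplane, we get $\ve a_i \notin H$, and taking $p_i = h^2$ for an affine form $h$ vanishing on $H$ does the job, since $p_i(\ve a_i) = h(\ve a_i)^2 > 0$.

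The remaining case, which I expect to be the main obstacle, is when the $n+1$ points $\{\ve a_j : j \ne i\}$ are affinely independent, hence form a nondegenerate simplex: now every family of $n$ of them spans a facet hyperplane, and the danger is that $\ve a_i$ might lie on the facet hyperplane for \emph{every} choice of dropped vertex. I would rule this out with barycentric coordinates. Writing $\ve a_i$ in barycentric coordinates $(\lambda_1 \upto \lambda_{n+1})$ with respect to the simplex, the point lies on the facet hyperplane opposite the $l$-th vertex exactly when $\lambda_l = 0$; since $\sum_l \lambda_l = 1$, at least one $\lambda_l$ is nonzero. Dropping that vertex $\ve b_l$, the hyperplane $H_1$ spanned by the remaining $n$ vertices contains them and avoids $\ve a_i$, while a second hyperplane $H_2$ through $\ve b_l$ avoiding $\ve a_i$ exists because the points are pairwise distinct, so $\ve b_l \ne \ve a_i$. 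The product of the two associated affine forms is the desired $p_i$. Combining the two cases yields surjectivity of $\mathrm{ev}$, and therefore $\rank(M) = n+2$.
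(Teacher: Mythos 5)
Your proof is correct, but it takes a genuinely different route from the paper's. The paper first translates $\ve a_{n+2}$ to the origin (implemented by an invertible triangular change of basis on the monomials), reduces to a rank claim for the positive-degree monomials evaluated at the remaining $n+1$ nonzero, affinely independent points, and proves that claim by induction on~$n$, splitting the points according to whether their first coordinate vanishes. You instead argue by Lagrange-style separation: full row rank of $M$ is equivalent to surjectivity of the evaluation map on $P_{\le 2}$, and surjectivity follows once each $\ve a_i$ admits a degree-$\le 2$ polynomial vanishing at the other $n+1$ points but not at $\ve a_i$. Your construction of that polynomial as a pair of hyperplanes is sound in both cases: when the other $n+1$ points are affinely dependent they lie in a single hyperplane missing $\ve a_i$ (by the non-coplanarity hypothesis), and when they form a simplex the fact that barycentric coordinates sum to $1$ guarantees $\ve a_i$ avoids at least one facet hyperplane, after which pairwise distinctness supplies the second hyperplane through the dropped vertex. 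Your argument is shorter and more geometric, avoiding both the induction and the change-of-basis bookkeeping; the paper's inductive claim is slightly more general (a statement about arbitrary families of nonzero affinely independent points and the positive-degree monomials), but that extra generality is not needed for the lemma.
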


\begin{proof}
  By hypothesis the $\ve a_i$ do not lie on an affine hyperplane,
  i.e., the matrix
  $\left(\begin{smallmatrix} \ve a_1 & \cdots & \ve a_{n+2} \\ 1 &
      \cdots & 1\end{smallmatrix}\right) \in \RR^{(n+1) \times (n+2)}$
  has rank~$n+1$. By renumbering the~$\ve a_i$ we may assume that the
  first $n+1$ columns are linearly independent, which means that
  $\ve a_1 \upto \ve a_{n+1}$ do not lie on an affine
  hyperplane. Write $\ve a_{n+2} = (c_1 \upto c_n)^T$, and for a
  monomial $t = y_1^{e_1} \cdots y_n^{e_n}$ write
  $\tilde t := \prod_{i=1}^n (y_i - c_i)^{e_i}$. This is equal to~$t$
  plus a linear combination of lower-degree monomials, so we have a
  matrix $C \in \RR^{k \times k}$ such that
  \[
    \bigl(\tilde t_1 \upto \tilde t_k\bigr) = \bigl(t_1 \upto t_k)
    \cdot C.
  \]
  If we order the~$t_i$ by degrees, $C$ is a triangular matrix with
  $1$'s on the diagonal, so $C \in \GL_k(\RR)$. By setting
  $\tilde {\ve a}_i := \ve a_i - \ve a_{n+2}$ we obtain
  $\tilde t_j(\ve a_i) = t_j(\tilde{\ve a}_i)$, so
  $M(\tilde{\ve a}_1 \upto \tilde{\ve a}_{n+2}) = M(\ve a_1 \upto \ve
  a_{n+2}) \cdot C$ and therefore
  \[
    \rank\bigl(M(\ve a_1 \upto \ve a_{n+2})\bigr) =
    \rank\bigl(M(\tilde{\ve a}_1 \upto \tilde{\ve a}_{n+2})\bigr).
  \]
  Since $\tilde{\ve a}_{n+2} = \ve 0$ and $t_1 = 1$ we have
  \[
    M(\tilde{\ve a}_1 \upto \tilde{\ve a}_{n+2}) = \left(
      \begin{array}{c|ccc}
        1 & & & \\
        \vdots & & M^+(\tilde{\ve a}_1 \upto \tilde{\ve a}_{n+1}) & \\
        1 & & & \\
        \hline
        1 & 0 & \cdots & 0
      \end{array}
    \right),
  \]
  where
  $M^+(\tilde{\ve a}_1 \upto \tilde{\ve a}_{n+1}) :=
  \Bigl(t_j(\tilde{\ve a}_i)\Bigr)_{\substack{i = 1 \upto n+1 \\ j = 2
      \upto k}} \in \RR^{(n+1) \times (k-1)}$ is formed by evaluating
  the monomials of positive degree. So we need to show that
  $\rank\bigl(M^+(\tilde{\ve a}_1 \upto \tilde{\ve a}_{n+1})\bigr) =
  n+1$. Observe that $\tilde{\ve a}_i \ne \ve 0$ ($1 \le i \le n+1$)
  since the $\ve a_i$ are pairwise distinct by hypothesis, and that
  $\tilde{\ve a}_1 \upto \tilde{\ve a}_{n+1}$ do not lie on an affine
  hyperplane. Therefore the following claim implies the desired
  rank-equation.
  \begin{claim*}
    Let $\ve b_1 \upto \ve b_m \in \RR^n \setminus \{\ve 0\}$ be
    affinely independent (i.e., the matrix
    $\left(\begin{smallmatrix} \ve b_1 & \cdots & \ve b_m \\ 1 &
        \cdots & 1\end{smallmatrix}\right) \in \RR^{(n+1) \times m}$
    has rank~$m$). Then
    $M^+(\ve b_1 \upto \ve b_m) \in \RR^{m \times (\binom{n+2}{2}-1)}$
    also has rank~$m$.
  \end{claim*}
  We prove the claim by induction on~$n$. Let $d_1 \upto d_m$ be the
  first coordinates of the vectors $\ve b_1 \upto \ve b_m$, and
  renumber the~$\ve b_i$ such that $d_i \ne 0$ for $1 \le i \le r$ and
  $d_{r+1} = \cdots = d_m = 0$. We have $0 \le r \le m$, and $r = m$
  if $n = 1$. At this point, it is helpful to reorder the monomials
  $t_2 \upto t_k$ (which are precisely the monomials of degree~$1$
  or~$2$) such that the monomials
  $y_1,y_1 ^2, y_1 y_2,y_1 y_3 \upto y_1 y_n$ come first, and are
  followed by the monomials of degree~$1$ or~$2$ in the
  variables~$y_2 \upto y_n$. Then
  \begin{equation} \label{eqMBDE}%
    M^+(\ve b_1 \upto \ve b_m) = \left(
      \begin{array}{c|c}
        D & \Asterisk \\
        \hline
        0 & E
      \end{array}\right),
  \end{equation}
  where
  \[
    D = 
    \begin{pmatrix}
      d_1 \\
      & \ddots \\
      & & d_r
    \end{pmatrix} \cdot
    \begin{pmatrix}
      1 & \ve b_1 ^T \\
      \vdots & \vdots \\
      1 & \ve b_r^T
    \end{pmatrix} \in \RR^{r \times (n+1)},
  \]
  and where $E$ is formed by evaluating all monomials of degree~$1$
  or~$2$ in the variables~$y_2 \upto y_n$ at
  $\ve b_{r+1} \upto \ve b_m$. It follows from the affine independence
  of $\ve b_1 \upto \ve b_r$ that $\rank(D) = r$, so the rows of $D$
  are linearly independent. If $r = m$, which is guaranteed in the
  case $n = 1$, this finishes the proof. On the other hand, if
  $r < m$, then with $\map{\pi}{\RR^n}{\RR^{(n-1)}}$ the projection on
  the last $n-1$ coordinates, we obtain
  \[
    E = M^+\bigl(\pi(\ve b_{r+1}) \upto \pi(\ve b_m)\bigr) \in
    \RR^{(m-r) \times (\binom{n+1}{2}-1)}.
  \]
  Since $\ve b_{r+1} \upto \ve b_m$ are affinely independent and their
  first coordinate is zero, their images under~$\pi$ are also affinely
  independent. The $\pi(\ve b_i)$ are also nonzero for $i > r$ since
  $\ve b_i \ne \ve 0$. So by induction we get $\rank(E) = m - r$, so
  the rows of $E$ are linearly independent. Now from~\eqref{eqMBDE} we
  see that the rank of $M^+(\ve b_1 \upto \ve b_m)$ is as claimed.
\end{proof}

\begin{rem} \label{rM5}%
  The proof of \cref{pM5} is not constructive. But, intriguingly, the
  proposition gives rise to its own constructive version, and with
  that to a constructive version of \cref{tM5}. In fact, we claim that
  \begin{equation} \label{eqFtilde}%
    \tilde f(\ve x) := \prod_{\varepsilon_2 \upto \varepsilon_m \in \{\pm
      1\}} \det\begin{pmatrix}
      \lVert\ve a_1 - \ve x\rVert & \ve a_1^T & 1
      \\
      \varepsilon_2 \cdot \lVert\ve a_2 - \ve x\rVert & \ve a_2^T & 1 \\
      \vdots & \vdots & \vdots \\
      \varepsilon_m \cdot \lVert\ve a_m - \ve x\rVert & \ve
      a_m^T & 1
    \end{pmatrix}
  \end{equation}
  (with $m := n+2$) is a polynomial in the coordinates~$x_i$
  of~$\ve x$, and can be taken as the polynomial~$f$ whose existence
  is asserted in \cref{pM5}. (One can also include an~$\varepsilon_1$
  as a sign before $\lVert\ve a_1 - \ve x\rVert$ in the
  multiplication, but that makes~$\tilde f$ unnecessarily large.)

  In fact, let $d(y_1 \upto y_m)$ be the
  determinant of a given $m \times m$ matrix with indeterminates
  $y_1 \upto y_m$ in the first column. Then
  $g := \prod_{\varepsilon_2 \upto \varepsilon_m \in \{\pm 1\}}
  d(y_1,\varepsilon_2 y_2 \upto \varepsilon_m y_m)$ is invariant under
  the automorphisms sending~$y_i$ to~$-y_i$ ($i = 2 \upto m$). But
  also
  \begin{multline*}
    g(-y_1,y_2 \upto y_m) = \prod_{\varepsilon_2 \upto \varepsilon_m
      \in \{\pm 1\}} d(-y_1,-\varepsilon_2 y_2 \upto -\varepsilon_m
    y_m) = \\
    \prod_{\varepsilon_2 \upto \varepsilon_m \in \{\pm 1\}}
    \bigl(-d(y_1,\varepsilon_2 y_2 \upto \varepsilon_m y_m)\bigr) =
    g(y_1,y_2 \upto y_m).
  \end{multline*}
  So only the squares of the~$y_i$ occur in~$g$. Hence by specializing
  the $y_i$ to $\lVert\ve a_i - \ve x\rVert$, we see
  that~$\tilde f(\ve x)$ is indeed a polynomial in the~$x_i$. Now
  \cref{pM5} tells us that there exists an~$\ve x$ for which none of
  the determinants in~\eqref{eqFtilde} vanishes. This implies
  $\tilde f \ne 0$. Moreover, if $\tilde f(\ve x) \ne 0$, then the
  determinant in~\eqref{eqAPM} does not vanish
  for~$\varepsilon_1 = 1$, and by the above calculation also
  for~$\varepsilon_1 = -1$.
\end{rem}

\section{Uniqueness for all positions~$\ve x$} \label{sAllX}

In this section, as in the previous one, we consider given
positions~$\ve a_i$ of satellites. The question then is for which positions~$\ve x$ of the
GPS-user the basic system of
equations~\eqref{eqBasic} has a unique solution. In the previous
section the answer was that if $m \ge n+2$, we have uniqueness for
{\em almost} all~$\ve x$. It would, of course be better to have
uniqueness for all~$\ve x$, without exceptions. This is not only of theoretical interest, since in the
vicinity of points where a problem is ill-posed it tends behave badly
in the presence of inexact measurements and other errors; so if it is
well-posed everywhere, this problem does not arise.


\begin{theorem}[Uniqueness for all~$\ve x$] \label{tM8}%
  Assume that in the situation of \cref{Setup} we have
  $m \ge 2 n + 2$. Then for almost all~$\ve a_1 \upto \ve a_m$ we have
  that~\eqref{eqAbsolute}, and therefore also~\eqref{eqBasic}, has a
  unique solution for all~$\ve x \in \RR^n$.

  More precisely, there exists a nonzero polynomial~$f$ in~$m n$
  variables such that if $f(\ve a_1 \upto \ve a_m) \ne 0$, then for
  all~$\ve x \in \RR^n$ the matrix $A \in \RR^{m \times (n + 2)}$,
  defined by~\eqref{eqA}, has rank~$n + 2$. (Recall that $A$ depends
  on~$\ve x$ by~\eqref{eqSetup}.) So~\eqref{eqTxA} gives the unique
  solution of~\eqref{eqAbsolute}, and therefore of~\eqref{eqBasic}.
\end{theorem}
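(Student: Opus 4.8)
The plan is to follow the strategy that proved \cref{pM5}, but now to quantify over the satellite positions rather than over $\ve x$, and to exploit the extra room provided by the hypothesis $m \ge 2n+2$. Exactly as in the reduction ``\cref{pM5} $\Rightarrow$ \cref{tM5}'', I would first observe that the rank of $A$ is unchanged by subtracting $t$ from all the $t_i$, so I may assume $t = 0$ and hence $t_i = \lVert\ve a_i - \ve x\rVert$. By \cref{lSide}, the configuration is \emph{bad} for a given $\ve x$ --- meaning $\rank(A) < n+2$ --- exactly when there exist $\ve u \in \RR^n$ and $\alpha \in \RR$ with $\langle\ve u,\ve a_i\rangle - \alpha = \lVert\ve a_i - \ve x\rVert$ for all $i$. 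Squaring these relations gives the \emph{polynomial} equations $\lVert\ve a_i - \ve x\rVert^2 = \bigl(\langle\ve u,\ve a_i\rangle - \alpha\bigr)^2$, which I would use to define, over $\CC$, the incidence variety
\[
  V := \bigl\{(\ve a_1 \upto \ve a_m,\ve x,\ve u,\alpha) \ \bigm| \ \textstyle\sum_{j}(a_{i,j}-x_j)^2 = \bigl(\sum_j u_j a_{i,j} - \alpha\bigr)^2 \ \text{for} \ i = 1 \upto m\bigr\} \subseteq \CC^{mn} \times \CC^n \times \CC^n \times \CC.
\]
Every real configuration that is bad for some $\ve x$ produces a point of $V$, so it suffices to show that the projection $\pi$ of $V$ to the configuration space $\CC^{mn}$ is not dominant.

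Next I would bound $\dim V$ by projecting the other way. Let $\rho\colon V \to \CC^{2n+1}$ send a point to $(\ve x,\ve u,\alpha)$, so the base has dimension $2n+1$. Over a fixed $(\ve x,\ve u,\alpha)$ the defining equations decouple across the blocks $\ve a_1 \upto \ve a_m$, so the fiber is the product of the $m$ quadrics $Q_i = \{\ve a_i \in \CC^n \mid \sum_j(a_{i,j}-x_j)^2 = (\sum_j u_j a_{i,j}-\alpha)^2\}$. The degree-two part of each defining polynomial is the quadratic form $\ve a_i^T(I - \ve u\ve u^T)\ve a_i$, which is never identically zero, since $\ve u\ve u^T$ has rank at most $1$ while $I$ has rank $n \ge 2$; hence each $Q_i$ is a genuine hypersurface of dimension $n-1$, and \emph{every} fiber of $\rho$ has dimension exactly $m(n-1)$. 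Applying the theorem on fiber dimension (\mycite[Corollary~10.6]{Kemper.Comalg} or \mycite[Chapter~II, Exercise~3.22(b)]{hart}) to each irreducible component of $V$ then yields $\dim V \le (2n+1) + m(n-1)$. The elementary identity $mn - m(n-1) = m$ shows that $(2n+1) + m(n-1) < mn$ if and only if $2n+1 < m$, so the hypothesis $m \ge 2n+2$ is precisely what forces $\dim V < mn$.

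Finally, since $\dim V < mn$, the image $\pi(V)$ has dimension less than $mn$, so its Zariski closure is a proper closed subvariety of $\CC^{mn}$ and therefore lies in the zero set of some nonzero polynomial $f_0 \in \CC[a_{i,j}]$. Every real configuration that is bad for some $\ve x$ lies in $\RR^{mn} \cap \pi(V)$ and hence in the vanishing locus of $f_0$, so both the real and the imaginary part of $f_0$ vanish there; taking $f$ to be whichever of these is nonzero gives the real polynomial asserted by the theorem. Thus $f(\ve a_1 \upto \ve a_m) \ne 0$ guarantees that no $\ve x$ is bad, i.e.\ $\rank(A) = n+2$ for all $\ve x$.

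I expect the main obstacle to lie not in the dimension count itself but in two bookkeeping points that make it rigorous. First, the fiber-dimension inequality must hold \emph{uniformly}: this is exactly why it matters that $I - \ve u\ve u^T$ is a nonzero quadratic form for every base point, so that no fiber of $\rho$ jumps in dimension and the bound $\dim V \le (2n+1)+m(n-1)$ survives the decomposition into irreducible components. Second, the descent from the complex polynomial $f_0$ to a real one must be handled exactly as in the last paragraph of the proof of \cref{pM5}. A subsidiary point to treat with care is that $V$ is defined by the \emph{squared} equations and is therefore a priori larger than the true bad locus; this only helps, since all I need is the containment of the real bad locus inside $\pi(V)$.
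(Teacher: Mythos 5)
Your proposal is correct and follows essentially the same route as the paper's proof: the same incidence variety cut out by the squared focus--directrix equations, the same projection to $(\ve x,\ve u,\alpha)$ whose fibers are products of $m$ hypersurfaces of dimension $n-1$, the same application of Chevalley's fiber-dimension theorem giving $\dim V \le (2n+1)+m(n-1) < mn$ exactly when $m \ge 2n+2$, and the same descent to a real polynomial via real and imaginary parts. The only (welcome) addition is your explicit check that the quadratic form given by $I - \ve u\ve u^T$ is nonzero, so that no fiber of the projection degenerates; the paper asserts this without comment.
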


Again, the proof uses some algebraic geometry.

\begin{proof}
  Consider the affine variety $X \subseteq \CC^{m n + 2 n + 1}$ given
  by the equations
  \begin{equation} \label{eqUAVX} \sum_{j=1}^n (U_j^2 - 1) A_{i,j} + 2
    \!\!\! \sum_{1 \le j < k \le n} \!\! U_j U_k A_{i,j} A_{i,k} + 2
    \sum_{j=1}^n (X_j - V U_j) A_{i,j} + V^2 - \sum_{j=1}^n X_j^2
    \quad (i = 1 \upto m)
  \end{equation}
  with indeterminates $A_{i,j}$, $U_j$, $X_j$, and~$V$. Mapping a
  point of $X$ to the coordinates corresponding to the indeterminates
  $U_j$, $X_j$, and $V$ gives a morphism
  $\map{\pi}{X}{\CC^{2 n + 1}}$. Every fiber of~$\pi$ is isomorphic to
  the (cartesian) product of~$m$ copies of a hypersurface in $\CC^n$,
  so its dimension is $m (n-1)$. Hence as in the proof of \cref{pM5}
  (but without any case distinction) we see that
  $\dim(X') - \dim(Y') \le m (n-1)$ for every closed, irreducible
  subset $X' \subseteq X$, with $Y' := \overline{\pi(X')}$. Since
  $\dim(Y') \le 2 n + 1$, we conclude that
  $\dim(X) \le m (n-1) + 2 n + 1$. Since $m > 2 n + 1$, this implies
  \begin{equation} \label{eqDimX}%
    \dim(X) < m n.
  \end{equation}
  Now consider the morphism $\map{\psi}{X}{\CC^{m n}}$ given by
  mapping a point of $X$ to the coordinates corresponding to the
  indeterminates $A_{i,j}$. It follows from~\eqref{eqDimX} that the
  Zariski closure $Y := \overline{\psi(X)}$ of the image has dimension
  $< m n$, so there is a nonzero $f \in \CC[A_{1,1} \upto A_{m,n}]$
  that vanishes on $Y$.
  
  To finish the proof, assume that for
  $\ve a_1 \upto \ve a_m \in \RR^n$ there exists~$\ve x$ such that the
  matrix $A$ has rank $< r + 2$. We have seen that this rank does not
  change if we subtract~$t$ from the~$t_i$, i.e., if we
  assume~$t = 0$. Then \cref{lSide} tells us that
  $\langle\ve u,\ve a_i\rangle - \alpha = \lVert\ve a_i - \ve x\rVert$
  for all~$i$, and squaring both sides yields
  \begin{equation} \label{eqUAX}%
    \langle\ve u,\ve a_i\rangle^2 - \langle\ve a_i,\ve a_i\rangle + 2
    \langle \ve x - \alpha \ve u,\ve a_i\rangle + \alpha^2 -
    \langle\ve x,\ve x\rangle = 0 \quad (i = 1 \upto m).
  \end{equation}
  This means that~$\alpha$ and the coordinates of $\ve u$, $\ve x$ and
  the~$\ve a_i$ satisfy the equations~\eqref{eqUAVX} defining $X$. So
  the point $(\ve a_1 \upto \ve a_m) \in \RR^{m n}$ lies in the image
  $\psi(X)$ and therefore in $Y$. We obtain
  $f(\ve a_1 \upto \ve a_m) = 0$, and this also holds if we
  replace~$f$ by its coefficient-wise real or imaginary part. So
  $\operatorname{Re}(f)$ or $\operatorname{Im}(f)$ (whichever is
  nonzero) provides the desired polynomial for the theorem.
\end{proof}



\begin{rem} \label{rM8}%
  \begin{enumerate}[label=(\alph*)]
  \item \label{rM8A} In principle, the polynomial~$f$ from \cref{tM8}
    can be computed by means of Gr\"obner bases. However, after a
    fortnight of running the computation for $n = 3$ with the
    state-of-the-art computer algebra system
    Magma~[\citenumber{magma}] and not seeing any result, we gave up.
  \item \label{rM8B} Nevertheless, it is practical and in fact rather
    quick to use the same approach for a {\em given} configuration of
    points~$\ve a_1 \upto \ve a_m \in \RR^n$, and thus test whether
    the matrix $A$ has full rank for all~$\ve x$. We did this
    for~$n = 3$ and~$m = 2 n + 2 = 8$. Testing ten million point
    configurations, with coordinates chosen as random integers
    between~$-100$ and~$100$, we found that for only~$11$ of them
    there existed a point~$\ve x$ such that the matrix $A$ failed to
    have full rank, and among these~$11$ configurations there was only
    one where this led to a nonunique solution of~\eqref{eqBasic}: for
    the other~$10$, the points~$\ve a_i$ were contained in a spheroid
    or paraboloid but not in a hyperboloid (see \cref{tUnique}).
  \item \label{rM8C} We also ran tests for configurations
    of~$m = 2 n + 1$ points in $\RR^n$. For such a configuration, the
    variety of all $(\ve u,\ve x,\alpha) \in \RR^{2 n + 1}$
    satisfying~\eqref{eqUAX} is (almost always) finite, and its points
    can be determined. Each point defines a quadric $Q$ with
    focus~$\ve x$ such that all points~$\ve a_i$ lie on $Q$. Now it
    can be determined whether $Q$ is a hyperboloid, and if so, whether
    all~$\ve a_i$ lie on the same sheet. Only then does the
    system~\eqref{eqBasic} have a nonunique solution (again by
    \cref{tUnique}). Experimentally we found that
    \begin{itemize}
    \item for $n = 2$ and $m = 5$, out of~$2.5$ million point
      configurations $\ve a_1 \upto \ve a_5 \in \RR^2$, slightly over
      94\% yielded a unique solution for all~$\ve x$;
    \item for $n = 3$ and $m = 7$, out of~200,000 point
      configurations $\ve a_1 \upto \ve a_7 \in \RR^3$, slightly over
      96\% yielded a unique solution for all~$\ve x$;
    \item for $n = 4$ and $m = 9$, out of~50,000 point configurations
      $\ve a_1 \upto \ve a_9 \in \RR^4$, slightly over 98\% yielded a
      unique solution for all~$\ve x$.
    \end{itemize}
    While it is not surprising that ``most'' configurations of
    $2 n + 1$ points lead to a unique solution for all~$\ve x$, an
    additional investigation would be required to interpret and
    explain the percentages we found. \remend
  \end{enumerate}
  \renewcommand{\remend}{}
\end{rem}

Of course it would be nice to have an explicit condition under which
uniqueness for all~$\ve x$ can be guaranteed. Such a condition is
provided by the following result.

\begin{theorem}[An explicit condition guaranteeing uniqueness] \label{tM10}%
  Assume that in the situation of \cref{Setup} we have
  $m \ge \binom{n+2}{2}$. If the matrix
  \[
    M := \begin{pmatrix}
      1 & a_{1,1} & \cdots & a_{1,n} & a_{1,1}^2 & \cdots &
      a_{1,n}^2 & a_{1,1} a_{1,2} & \cdots & a_{1,n-1} a_{1,n} \\
      1 & a_{2,1} & \cdots & a_{2,n} & a_{2,1}^2 & \cdots &
      a_{2,n}^2 & a_{2,1} a_{2,2} & \cdots & a_{2,n-1} a_{2,n} \\
      \vdots & \vdots & & \vdots & \vdots & & \vdots & \vdots & &
      \vdots \\
      1 & a_{m,1} & \cdots & a_{m,n} & a_{m,1}^2 & \cdots &
      a_{m,n}^2 & a_{m,1} a_{m,2} & \cdots & a_{m,n-1} a_{m,n}
    \end{pmatrix} \in \RR^{m \times \binom{n+2}{2}}
  \]
  (with $\ve a_i = (a_{i,1} \upto a_{i,n})$) has
  rank~$\binom{n+2}{2}$, then for all~$\ve x \in \RR^n$ the matrix
  $A \in \RR^{m \times (n + 2)}$, defined by~\eqref{eqA}, has
  rank~$n + 2$. So~\eqref{eqTxA} gives the unique solution
  of~\eqref{eqAbsolute}, and therefore of~\eqref{eqBasic}.

  The set of all $(\ve a_1 \upto \ve a_m)$ satisfying
  $\rank(M) = \binom{n+2}{2}$ is dense in $\RR^{m n}$.
\end{theorem}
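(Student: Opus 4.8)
The plan is to prove the implication first and the density assertion second; the implication is the heart of the matter and turns out to be a short rereading of the computation already carried out in the proof of \cref{pM5}. Write $N := \binom{n+2}{2}$, and recall that $N$ equals the number of monomials of degree $\le 2$ in $n$ variables, which are precisely the functions evaluated in the columns of $M$. As in the earlier proofs, I would first reduce to the case $t = 0$: subtracting the same constant from all $t_i$ leaves $\rank(A)$ unchanged (it subtracts a multiple of the last column of $A$ from the first), so we may assume $t_i = \lVert \ve a_i - \ve x\rVert$.

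Now suppose, contrary to the desired conclusion, that $\rank(M) = N$ but that there is some $\ve x$ with $\rank(A) < n+2$. By \cref{lSide} there exist $\ve u \in \RR^n$ and $\alpha \in \RR$ with $\langle \ve u, \ve a_i\rangle - \alpha = \lVert \ve a_i - \ve x\rVert$ for all $i$. Squaring each of these equations and expanding yields exactly \eqref{eqAUXAlpha}, which I would read as the single matrix identity $M \cdot \ve c = \ve 0$, where $\ve c \in \RR^N$ is the coefficient vector whose entries, matched against the columns $1$, $a_{i,j}$, $a_{i,j}^2$, $a_{i,j} a_{i,k}$ of $M$, are $\alpha^2 - \lVert\ve x\rVert^2$, $2(x_j + \alpha u_j)$, $u_j^2 - 1$, and $2 u_j u_k$ respectively. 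The decisive point is that $\ve c \ne \ve 0$: were it zero, then $u_j^2 = 1$ for every $j$ (so each $u_j \ne 0$) while simultaneously $u_j u_k = 0$ for all $1 \le j < k \le n$, which is impossible as soon as $n \ge 2$. Hence $M$ has a nontrivial kernel, so $\rank(M) < N$, a contradiction. This proves $\rank(A) = n+2$ for all $\ve x$, whereupon \cref{tLaterationA} shows that \eqref{eqAbsolute} and \eqref{eqBasic} have the unique solution \eqref{eqTxA}.

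For the density statement I would argue that $\{(\ve a_1 \upto \ve a_m) : \rank(M) = N\}$ is a nonempty Zariski-open subset of $\RR^{m n}$. Openness is immediate, since $\rank(M) = N$ is equivalent to the non-vanishing of at least one $N \times N$ minor of $M$, each such minor being a polynomial in the coordinates $a_{i,j}$; the locus is thus the complement of the common zero set of these minors. For nonemptiness it suffices to produce a single full-column-rank configuration. I would use that the $N$ monomials of degree $\le 2$ are linearly independent as polynomial functions on $\RR^n$ (a nonzero linear combination is a nonzero polynomial, hence not identically zero). A standard duality argument then yields $N$ points $\ve b_1 \upto \ve b_N \in \RR^n$ at which the evaluation matrix is invertible: otherwise the evaluation functionals at all points of $\RR^n$ would span a proper subspace of the dual of the span of these monomials, producing a nontrivial relation among them. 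Taking $\ve a_1 \upto \ve a_N$ to be these points (and the remaining $\ve a_i$ arbitrary) gives $\rank(M) = N$. Since the non-vanishing locus of a nonzero polynomial is dense in both the Zariski and the Euclidean topology, density follows.

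The main obstacle is conceptual rather than computational: recognizing that squaring the focus--directrix equations produces precisely the system $M \ve c = \ve 0$ for the very matrix $M$ of \cref{tM10}, so that full column rank of $M$ excludes every degenerate configuration at once, uniformly in $\ve x$. The single genuine verification is that $\ve c$ is never the zero vector, and this is exactly where $n \ge 2$ is used, through the quadratic entries $u_j^2 - 1$ and the cross terms $u_j u_k$. The rest is bookkeeping: matching $\ve c$ to the column ordering of $M$, and the routine passage from linear independence of the degree-$\le 2$ monomials to one full-rank configuration.
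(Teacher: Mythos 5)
Your proof is correct, and it reaches the conclusion by a more direct route than the paper's. For the main implication, the paper disposes of everything in two sentences by quoting \cref{tUnique}\ref{tUniqueA}: since $\rank(M) = \binom{n+2}{2}$ means the only polynomial of degree $\le 2$ vanishing at all the $\ve a_i$ is the zero polynomial, the points lie on no quadric whatsoever, in particular not on the same sheet of a quadric with focus~$\ve x$, so $\rank(A) = n+2$ for every~$\ve x$. You instead bypass the geometric characterization entirely and argue from \cref{lSide}: reduce to $t = 0$, square the relations $\langle\ve u,\ve a_i\rangle - \alpha = \lVert\ve a_i - \ve x\rVert$ to obtain $M \ve c = \ve 0$, and verify $\ve c \ne \ve 0$ explicitly (the entries $u_j^2 - 1$ and $2 u_j u_k$ cannot all vanish once $n \ge 2$). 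This has the same mathematical kernel as the paper's argument --- the proof of \cref{tUnique}\ref{tUniqueA} itself rests on \cref{lSide}, and your check $\ve c \ne \ve 0$ is precisely the coordinate form of the fact, implicit in the quadric formalism of \cref{sQuadric}, that a quadric with a focus is cut out by a genuinely nonzero quadratic --- but your version is self-contained modulo \cref{lSide} and makes the one nondegeneracy point explicit, whereas the paper buys brevity by leaning on the geometric classification. For the density statement the paper merely asserts that the top $\binom{n+2}{2} \times \binom{n+2}{2}$ minor of $M$ is nonzero as a polynomial in the indeterminates $a_{i,j}$; your argument via linear independence of the monomials of degree $\le 2$ as functions on $\RR^n$ and the evaluation-functional duality step supplies the justification the paper leaves to the reader (compare \cref{lM5}, which establishes a sharper statement of the same flavor for use in \cref{pM5}). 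I see no gaps.
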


\begin{proof}
  The homogeneous linear system given by the matrix $M$ is a system of
  equations for determining coefficients of an equation of degree
  $\le 2$ satisfied by all~$\ve a_i$. So the hypothesis means that the
  only such equation is zero, so there is no quadric containing
  $\ve a_1 \upto \ve a_m$. But then the~$\ve a_i$ certainly do not lie
  on a the same sheet of a quadric with a focus, and the claim follows
  from \cref{tUnique}\ref{tUniqueA}.

  The density statement can be seen by viewing the~$a_{i,j}$ as
  $m \cdot n$ indeterminates and realizing that the determinant of the
  the upper $\binom{n+2}{2} \times \binom{n+2}{2}$-part of $M$ is
  nonzero as a polynomial in these indeterminates.
\end{proof}

\cref{tM10} is not applicable to the GPS situation, since its
hypothesis means that the~$\ve a_i$ are not contained in a common
quadric. But the GPS satellites do fly on a common sphere (see
\url{https://www.faa.gov/about/office_org/headquarters_offices/ato/service_units/techops/navservices/gnss/gps/spacesegments}). However,
we can turn this very fact to our advantage.

\begin{theorem} \label{tM9}%
  In the situation of \cref{Setup}, assume that $m \ge \binom{n+2}{2}
  - 1$ and that the~$\ve a_i$ are contained in a common sphere (or
  another quadric that is {\em not} a hyperboloid with a focus). If
  the matrix $M \in \RR^{m \times \binom{n+2}{2}}$ from \cref{tM10}
  has rank $\binom{n+2}{2}-1$ (which is the maximal possible),
  then~\eqref{eqBasic} has a unique solution for all~$\ve x \in
  \RR^n$.
\end{theorem}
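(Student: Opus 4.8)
The plan is to use the rank hypothesis to single out the \emph{unique} quadric through the points $\ve a_1 \upto \ve a_m$, and then to invoke \cref{tUnique} to exclude the one geometric configuration that produces a second solution. First I would recall, exactly as in the proof of \cref{tM10}, that a vector $\ve c \in \RR^{\binom{n+2}{2}}$ lies in the kernel of $M$ if and only if the polynomial of degree $\le 2$ whose coefficients are the entries of $\ve c$ vanishes at every $\ve a_i$. Thus the quadrics containing $\ve a_1 \upto \ve a_m$ correspond to the lines in $\ker(M)$. Because the $\ve a_i$ already lie on one quadric $Q_0$ (the common sphere, or the prescribed quadric that is not a hyperboloid with a focus), we have $\ker(M) \ne 0$ and so $\rank(M) \le \binom{n+2}{2} - 1$; the assumption $\rank(M) = \binom{n+2}{2} - 1$ then forces $\dim \ker(M) = 1$. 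Hence, up to scaling its equation, $Q_0$ is the \emph{only} quadric passing through all the $\ve a_i$. This is also where the bound $m \ge \binom{n+2}{2} - 1$ enters, so that $M$ can attain this rank.

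Next I would translate the uniqueness question into geometry via \cref{tUnique}: the system~\eqref{eqBasic} has two solutions for a given $\ve x$ exactly when all the $\ve a_i$ lie on the same sheet of a hyperboloid of revolution with focus~$\ve x$. So it suffices to prove that no hyperboloid of revolution with a focus passes through all the $\ve a_i$, for any $\ve x \in \RR^n$. Suppose to the contrary that some such hyperboloid $H$ exists. By the computation in \cref{sQuadric} (see~\eqref{eqQ2}), $H$ is a genuine quadric, so its nonzero coefficient vector lies in $\ker(M)$, and one-dimensionality of $\ker(M)$ forces $H = Q_0$ as point sets, with proportional defining equations. But $Q_0$ was assumed not to be a hyperboloid with a focus, a contradiction. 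Therefore no such $H$ exists and, by \cref{tUnique}, \eqref{eqBasic} has a unique solution for every $\ve x$.

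The argument is short precisely because \cref{tUnique} and the linear-algebra bookkeeping behind \cref{tM10} do the heavy lifting. The step I expect to require the most care---the crux of the proof---is the passage from the geometric object ``hyperboloid of revolution with focus $\ve x$'' to an element of $\ker(M)$: I must verify that such a hyperboloid is indeed cut out by a degree-$\le 2$ equation in the sense used to build $M$, that its coefficient vector is nonzero, and that it cannot be proportional to the equation of $Q_0$ unless $H = Q_0$. Once this identification is clean, the one-dimensionality of $\ker(M)$ closes the argument at once. In the sphere case this last point is transparent, since a sphere and a hyperboloid of revolution have quadratic parts of opposite definiteness and hence never proportional equations.
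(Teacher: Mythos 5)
Your proposal is correct and follows essentially the same route as the paper's own (very short) proof: the rank hypothesis forces $\dim\ker(M)=1$, so the given non-hyperboloid quadric is the only quadric through the $\ve a_i$, and \cref{tUnique} then rules out a second solution for every $\ve x$. The extra care you devote to identifying a hyperboloid of revolution with a focus as an element of $\ker(M)$ via~\eqref{eqQ2} is sound and only makes explicit what the paper leaves implicit.
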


\begin{proof}
  By hypothesis, the solution space of the linear system given by $M$
  has dimension~$1$, which means that there exists only one quadric
  containing the~$\ve a_i$. Again by hypothesis, this quadric is not a
  hyperboloid with a focus, so \cref{tUnique} guarantees a unique
  solution of~\eqref{eqBasic}.
\end{proof}

\section{Conclusion} \label{sConclusion}
We have presented a comprehensive analysis of the global positioning problem in any dimension $n$ and for any number $m$ of satellites assuming that the data is exact (no noise) and the satellites are not in a common affine hyperplane. This is equivalent to the multilateration problem in which a source at an unknown location emits a signal received by $m$ sensors in known locations equipped with a synchronized clock.
Our analysis is based on a simple algebraic formulation of the problem that consists in either one linear equation \eqref{eqMatrix} or one quadratic equation in one variable together with one linear equation \eqref{eqRank4}. This formulation allowed us to fully characterize the cases where the solution of the problem is not unique and describe these cases in simple geometric terms. Our geometric description is based on quadrics of revolution ``with a focus" (see section [ref]): the satellites lie on the quadric, and  the foci of the quadric are the possible solutions for the user position. This allows us to quantify the number of cases where the solution is not unique for different numbers of satellites.
 \cref{tConclusion} summarizes some of our findings. 
Two of our results (\cref{tM5} and \cref{tM8}) required the use of Chevalley's theorem on fiber dimension, which is a staple from algebraic geometry, in a crucial step of their proofs.
 

 The case where the satellites are coplanar is still open and would be an interesting one to study. We also did not focus on the case where the user is located on the surface of the Earth and the satellites are in orbit around it. This would of be another compelling question to investigate. It would also be worthwhile to explore if our new problem formulation could be used to formulate better iterative solution methods for the case where the time measurements are corrupted by noise and possibly include outliers or extraneous data points. 

\begin{table}[htbp]
  \renewcommand{\r}[4]{\parbox[c]{20mm}{\centering
      \rule{0mm}{4mm}#1\rule[-2mm]{0mm}{0mm}} &
    \parbox[c]{20mm}{\centering
      \rule{0mm}{4mm}#2\rule[-2mm]{0mm}{0mm}} &
    \parbox[c]{40mm}{\centering
      \rule{0mm}{4mm}#3\rule[-2mm]{0mm}{0mm}} &
    \parbox[c]{20mm}{\centering
      \rule{0mm}{4mm}#4\rule[-2mm]{0mm}{0mm}} \\ \hline%
  }%
  \addtolength{\tabcolsep}{2mm}
  \begin{center}
    \begin{tabular}{|c|c|c|c|}
      \hline%
      \r{\bf Condition on~$m$}{\bf meaning for~$n = 3$}{\bf
      result}{\bf reference} \hline%
      \r{$m \le n$}{$m \le 3$}{never have
      uniqueness}{\cref{rLateration}\ref{rLaterationB}}%
      \r{$m = n+1$}{$m = 4$}{uniqueness ``probability'' strictly between~$0$ and~$1$ (experimental result)}{\cref{exUnique,exDistribution}}%
      \r{$m \ge m + 2$}{$m \ge 5$}{uniqueness for almost all~$\ve
      x$}{\cref{tM5}}%
      \r{$m = 2 n + 1$}{$m = 7$}{for ``most'' satellite positions,
      have uniqueness for all~$\ve x$ (experimental
      result)}{\cref{rM8}\ref{rM8C}}%
      \r{$m \ge 2 n + 2$}{$m \ge 8$}{for almost all satellite positions,
      have uniqueness for all~$\ve x$}{\cref{tM8}}%
    \end{tabular}
  \end{center}
  \caption{Some results from this paper that depend on~$m$, the number
    of satellites. As always, $n$ is the dimension
    and~$\ve x$ is the user position.}
  \label{tConclusion}
\end{table}

\end{document}